\newtheorem{theorem}{Theorem}
\newtheorem{lemma}{Lemma}
\newtheorem{corollary}{Corollary}
\newtheorem{game}{Game}
\def\S{\mathcal{S}}
\def\eq{\triangleq}
\def\p{\boldsymbol{p}}
\def\r{\boldsymbol{r}}
\begin{document}
\title{Reputation and Pricing Dynamics in Online Markets}

\author{Qian~Ma,~\IEEEmembership{Member,~IEEE,}
        Jianwei~Huang,~\IEEEmembership{Fellow,~IEEE,}
        Tamer~Ba\c{s}ar,~\IEEEmembership{Life Fellow,~IEEE,}
        Ji~Liu,~\IEEEmembership{Member,~IEEE,}
        and~Xudong~Chen,~\IEEEmembership{Member,~IEEE}
\thanks{Q. Ma is with the School of Intelligent Systems Engineering, Sun Yat-sen University. E-mail: maqian25@mail.sysu.edu.cn.}
\thanks{J. Huang is with the School of Science and Engineering, The Chinese University of Hong Kong, Shenzhen, Shenzhen 518172, China, and the Shenzhen Institute of Artificial Intelligence and Robotics for Society, Shenzhen 518129, China. E-mail: jianweihuang@cuhk.edu.cn.  (\emph{Corresponding author: Jianwei Huang.})}
\thanks{T. Ba\c{s}ar is with the Department of Electrical and Computer Engineering, University of Illinois at Urbana-Champaign. E-mail: 	
basar1@illinois.edu.}
\thanks{J. Liu is with the Department of Electrical and Computer Engineering, Stony Brook University. E-mail: 	
ji.liu@stonybrook.edu.}
\thanks{X. Chen is with the Department of Electrical, Computer and Energy Engineering, University of Colorado Boulder. E-mail: xudong.chen@colorado.edu.}
\thanks{This work is supported by the National Natural Science Foundation of China under Grant 62002399, the Shenzhen Institute of Artificial Intelligence and Robotics for Society, and the Presidential Fund from the Chinese University of Hong Kong, Shenzhen.}
\thanks{An earlier version of this work was presented at the 11th Workshop on the Economics of Networks, Systems and Computation (NetEcon) (in conjunction with ACM SIGMETRICS 2016), Juan-les-Pins, France, June 2016, and appeared as a one-page summary \cite{NetEcon} in its Proceedings.}
}

\maketitle

\begin{abstract}
We study the economic interactions among sellers and buyers in online markets. 
In such markets, buyers have limited information about the product quality, but can observe the sellers' reputations which depend on their past transaction histories and ratings from past buyers. 
Sellers compete in the same market through pricing, while considering the impact of their heterogeneous reputations. 
We consider sellers with limited as well as unlimited capacities, which correspond to different practical market scenarios. 
In the unlimited seller capacity scenario, buyers prefer the seller with the highest reputation-price ratio. 
If the gap between the highest and second highest seller reputation levels is large enough, then the highest reputation seller dominates the market as a monopoly. 
If sellers' reputation levels are relatively close to each other, then those sellers with relatively high reputations will survive at the equilibrium, while the remaining relatively low reputation sellers will get zero market share. 
In the limited seller capacity scenario, we further consider two different cases. 
If each seller can only serve one buyer, then it is possible for sellers to set their monopoly prices at the equilibrium while all sellers gain positive market shares; if each seller can serve multiple buyers, then it is possible for sellers to set maximum prices at the equilibrium. 
Simulation results show that the dynamics of reputations and prices in the longer-term interactions will converge to stable states, and the initial buyer ratings of the sellers play the critical role in determining sellers' reputations and prices at the stable state.
\end{abstract}

\begin{IEEEkeywords}
Online markets, reputation, pricing, competition, dynamics.
\end{IEEEkeywords}

\section{Introduction}

\subsection{Background and Motivation}

\IEEEPARstart{T}{he} emergence of online markets has made it possible for geographically separated sellers and buyers to conduct transactions with each other with small transaction costs. 
Online markets such as Amazon, eBay, and Taobao (the largest online market in China) are becoming increasingly important in our daily lives. 
For example, the amount of Taobao sales on the single day of Nov. 11, 2017 reached 26 billion US dollars. 
Some of the online markets correspond to the online sharing economy platforms \cite{CollaborativeConsumption}, which facilitate online peer-to-peer fee-based resource sharing between resource sellers (owners) and buyers. 
On these sharing platforms, sellers earn profits by allowing others to access their under-utilized (online or offline) resources, and buyers obtain resources at cheaper prices than through conventional approaches \cite{MotivationsSharing}. 
The increase of consumer awareness and development of online platforms make online sharing economy increasingly popular, with many successful examples such as Airbnb for room sharing and Uber for car sharing.

In online markets, the quality of products (e.g., quality of products on Amazon and comfort level of Airbnb rooms) has great impact on buyers' experiences. 
However, buyers often have limited information about the product quality at the time of a transaction. 
This is because buyers often cannot try the products before their purchase. 
In a market with many small sellers, a buyer often needs to purchase products from a seller whom he has never or seldom transacted with. 
One way to estimate the product quality is to observe a seller's reputation, which depends on the number of transactions completed by the seller and the review ratings received from past buyers \cite{Manipulation}.

A seller's reputation also affects the seller's pricing strategy. 
A seller with a higher reputation naturally attracts more buyers and can set a higher price, which in turn encourages the seller to provide better products (with potentially higher costs) and keep the reputation high \cite{RepuSys}. 
A seller with a lower reputation, however, is less attractive to buyers, and will have a significant disadvantage when competing with other sellers.

\subsection{Model and Problem Formulation}

In this work, we analyze the long-term dynamics of an online market, where different sellers sell products in the same category (e.g., TV sellers \cite{TCLCompetition} on Amazon or luxurious apartment owners on Airbnb) and can choose different prices. 
Sellers are heterogeneous in terms of their initial reputation levels, which depend on the number of completed transactions and the past buyer ratings. 
New buyers arrive at the online market according to a stochastic process, and observe sellers' reputations and prices upon arrival. 
Each buyer chooses a seller and determines the corresponding purchasing amount to maximize the buyer's payoff. 
Figure \ref{fig:model} illustrates such an online market with three sellers and randomly arriving buyers.

In this paper, we would like to answer the following key questions considering two scenarios depending on sellers' capacities, i.e., the amount of products or services that sellers can provide:
\begin{itemize}
  \item \emph{Unlimited Capacity Scenario}: If sellers have unlimited capacities, how should a buyer select among sellers with heterogeneous reputations and prices to maximize his payoff? How should sellers set their prices to maximize their own profits by taking their reputations into consideration? 
  \item \emph{Limited Capacity Scenario}: How does limited capacity change the behaviors of the buyers and the sellers? 
\end{itemize}

\begin{figure}[t]
\vspace{-3mm}
\centering
\includegraphics[width=0.32\textwidth]{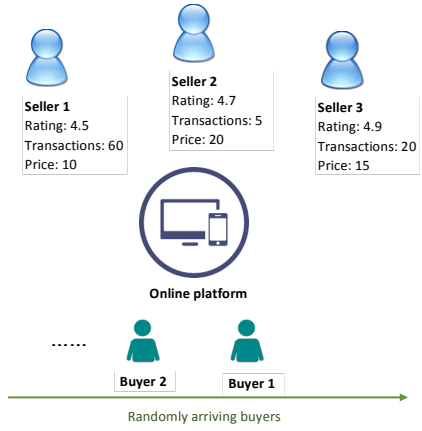}
\vspace{-3mm}
\caption{An online market with 3 sellers and randomly arriving buyers
} \label{fig:model}
\vspace{-3mm}
\end{figure}

\subsection{Solutions and Contributions}

We consider an infinite time horizon model as in Figure 2(a), where the time is divided into many time slots. 
Each time slot can be one day for Amazon or one week for Airbnb. 
We model the interactions among sellers and buyers as a dynamic game, where Figure 2(b) shows that in each time slot they play a two-stage multi-leader-multi-follower game \cite{BasarDynamicGame}.
Specifically, at the beginning of each time slot, sellers announce their unit prices for their products.
Buyers arrive at the online market according to a stochastic process, and each newly arrived buyer decides which seller to choose and what amount to buy, based on the announced prices and publicly observable seller reputations in that time slot.

\begin{figure*}
\vspace{-3mm}
\centering
\subfloat[]{
\label{fig:TimeSlotted}
\begin{minipage}[t]{0.53\textwidth}
\centering
\includegraphics[width=1\textwidth]{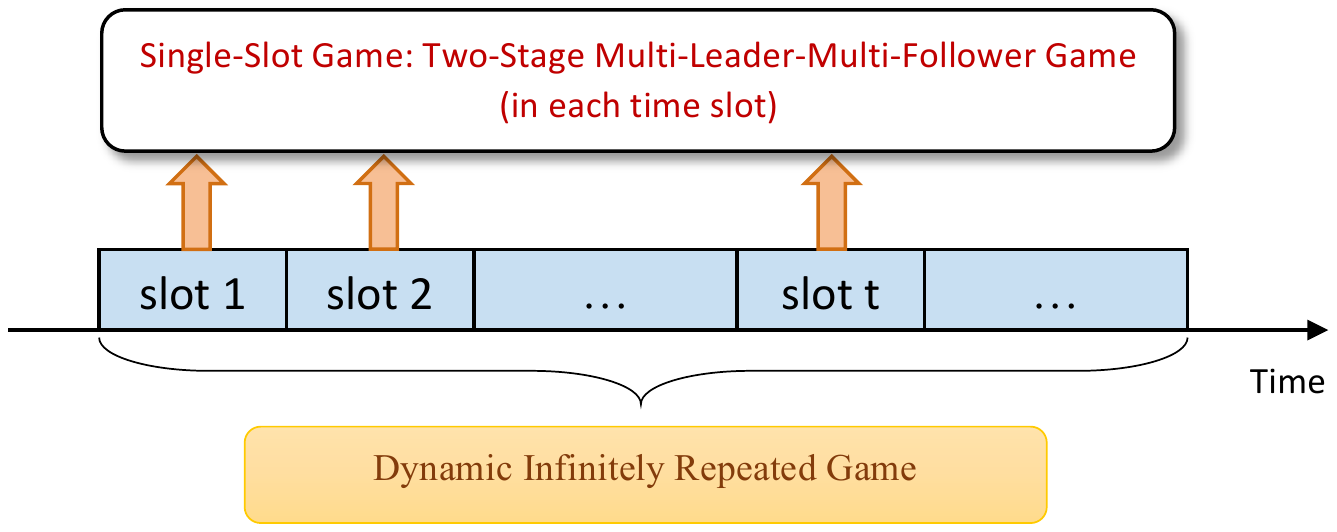}
\end{minipage}
}
\subfloat[]{
\label{fig:TwoStage}
\begin{minipage}[t]{0.47\textwidth}
\centering
\includegraphics[width=0.7\textwidth]{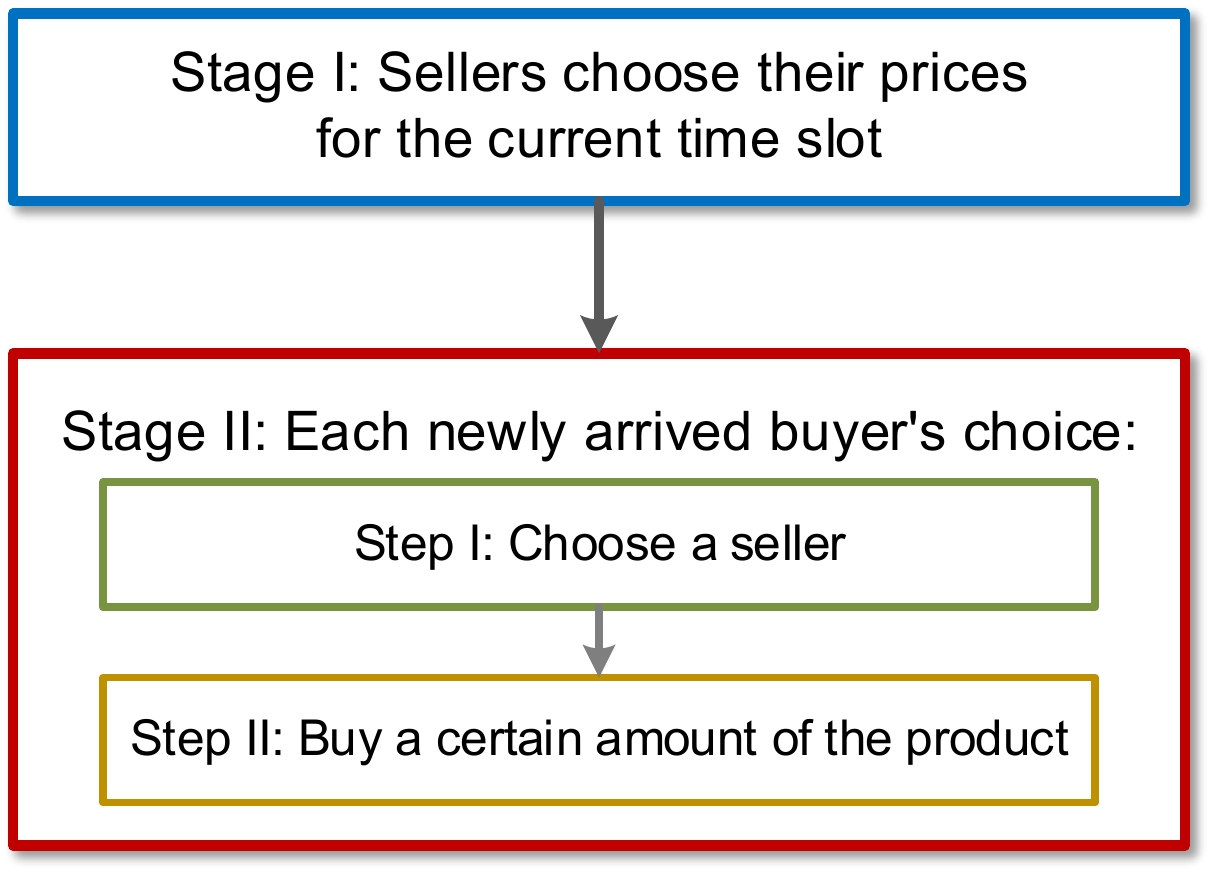}
\end{minipage}
}
\caption{Interactions between sellers and buyers: (a) the dynamic infinitely repeated game with many time slots; (b) the two-stage multi-leader-multi-follower game in each time slot.}
\vspace{-3mm}
\end{figure*}

As far as we know, this is the first work that provides a comprehensive economic analysis of seller competition in an online market considering heterogeneous reputations. 
We summarize our key contributions as follows.

\begin{itemize}

\item \emph{Unlimited Capacity Scenario Analysis:}
For the scenario where each seller has unlimited capacity, we show that if the gap between the highest and the second highest seller reputation levels is large enough, then the market becomes a monopoly market dominated by the highest reputation seller. 
When the sellers' reputation levels are relatively close, then those sellers with relatively high reputations will survive at the equilibrium, while the remaining low reputation sellers will be driven out of the market.
Furthermore, as time evolves, it is possible for the market to change from a multi-seller market to a single seller monopoly.

\item \emph{Limited Capacity Scenario Analysis:}
In the limited capacity scenario, if each seller can serve one buyer, then sellers can set monopoly prices at equilibrium and all sellers gain positive market shares; if each seller can serve multiple buyers, then sellers can set maximum prices at equilibrium.

\item \emph{Simulation Results:}
Simulation results show that the dynamics of reputations and prices at the equilibrium of the long-term interactions for both sellers with unlimited capacity and sellers with limited capacity will converge to stable states, and the initial buyer ratings of the sellers play a critical role in determining  sellers' reputations and prices under the stable state.

\end{itemize}

The rest of the paper is organized as follows. 
We provide literature review in Section \ref{sec:literatures}.
We describe the system model and present the problem formulation in  Section \ref{sec:model}. 
We analyze the unlimited seller capacity scenario in Section \ref{sec:unlimited}. 
In Section \ref{sec:limitedone}, we analyze the scenario where each seller has limited capacity and can only serve one buyer in a time slot. 
In Section \ref{sec:limitedmulti}, we analyze the scenario where a seller has limited capacity and can serve multiple buyers in a time slot. 
We present simulation results in Section \ref{sec:simu}, and conclude in Section \ref{sec:conc}. 

\section{Literature Review}\label{sec:literatures}

\subsection{Competition in Traditional (Offline) Markets}

There is a rich body of literature on competition of traditional (offline) markets, e.g., competition over prices with private capacity information \cite{privatecapa}, competition over capacities and prices \cite{Ozdaglar2}, and competition targeted to buyers embedded in a social network \cite{Ozdaglar1}. 
However, the above results for offline markets have not considered the impact of sellers' reputation. 
Another body of work (e.g., \cite{Jadbabaie}) studied how buyers obtain knowledge of the product through the word-of-mouth communications over a social network. 
In that case, only a few number of friends' opinions affect the buyer's knowledge of the seller's product, and the impact of a friend's opinion depends on the social relationship strength between the friend and the buyer. 
However, in online markets, the reputation is based on all buyers' reviews (ratings), which reveal more product information and make the transactions between buyers and sellers who are not connected in social networks possible \cite{Trust}.

\subsection{Reputation of Online Markets}
There has been some research on reputation systems \cite{RepuSys} for online markets, such as those focusing on reputation characterization \cite{Trust}, empirical studies on the impact of reputation on behaviors of sellers and buyers \cite{negative,DynaRepu,experiment,MIS,Pennies}, reputation accumulation and manipulation \cite{XieHongPE,Manipulation,sybilproof}, and incentive mechanism design for feedback \cite{unfair,honest,Johari2}.
There is no existing work that theoretically characterizes how reputation explicitly affects sellers' pricing competition and buyers' choices. 
To the best of our knowledge, this paper is the first one that theoretically studies the pricing competition among sellers with heterogeneous reputations and explores the impact of seller reputation on buyers' choices.

\subsection{Online Sharing Economy Platforms}
Research results regarding online sharing economy platforms only emerged recently (e.g., \cite{MotivationsSharing,NetEcon15,Johari1,Airbnb}). 
Hamari \emph{et al.} in \cite{MotivationsSharing} studied people's motivations to participate in online sharing economy platforms. 
Benjaafar \emph{et al.} in \cite{NetEcon15} studied how an online platform in sharing economy could maximize profit or social welfare. 
Banerjee \emph{et al.} in \cite{Johari1} studied dynamic pricing for online ride-sharing platforms, focusing on achieving a robust system performance considering the stochastic dynamics of the marketplace. 
Zervas \emph{et al.} in \cite{Airbnb} studied the ratings at Airbnb and concluded that the average rating on Airbnb was higher than that on TripAdvisor. 
However, there is no prior work focusing on the impact of reputation on the interactions among sellers and buyers where sellers have limited capacities, and this paper is the first attempt in this direction to the best of our knowledge.

\section{System Model}\label{sec:model}

We study an online market consisting of a set $\mathcal{S}=\{1,2,\ldots,S\}$ of sellers who sell the same category of products. 
The time is slotted, and we assume that buyers arrive at the online market randomly, with the probability of having $k$ buyers arriving in each time slot being $P_k$. 
Sellers only know the the distribution of $\{P_k\}$ with $\sum_{k=1}^\infty P_k = 1$, without observing the actual value of $k$ at the beginning of a time slot (when the sellers make pricing decisions). 
We further assume that the buyers are homogeneous in evaluating each seller's reputation and deriving their own satisfaction levels, and hence there is no need to index the buyers differently.

In the following, we first describe the interactions between sellers and buyers, and then present a model of the seller reputation. 
Finally, we describe three different capacity scenarios to be analyzed in Sections \ref{sec:unlimited}, \ref{sec:limitedone}, and \ref{sec:limitedmulti}.

\subsection{Interactions between Sellers and Buyers}

Sellers and buyers interact at two time scales: in each time slot and in the long term.

Figure 2(b) illustrates the interactions between sellers and buyers in each single time slot, which can correspond to one day for Amazon or one week for Airbnb.  
At the beginning of a time slot, each seller $s\in\S$ decides the unit price $p_s$ for his product to maximize his profit, considering the competition among all sellers with heterogeneous reputations $\boldsymbol{r} = \{r_i: \forall i\in\S\}$. 
If sellers are myopic and only care about their profits in the current slot, we can model sellers' pricing decisions through a static seller competition game model. 
Buyers arrive at the online market randomly during the time slot, and each newly arrived buyer decides on which seller $s\in\S$ to choose and the amount $x_s$ to buy from the chosen seller to maximize the buyer's payoff,\footnote{The range of the continuous consumption amount $x_s$ is the set of non-negative real numbers. The continuous modeling of the purchasing amount is an approximation of the discrete modeling for examples such as books and Airbnb rooms. In some cases it can be an accurate reflection when the product or service is divisible. For example, for the online Amazon Elastic Compute Cloud (Amazon EC2), the amount of time to rent computing resources is a continuous variable.} based on the prices announced by sellers and the available seller reputations during the time slot.

Figure 2(a) captures the seller-buyer interactions in the long term consisting of an infinite number of time slots. 
When the sellers are far-sighted and want to maximize their long-term profits, then they need to consider the repeated interactions among sellers, which can be modeled as a dynamic infinitely repeated game. 
Even in the infinite time horizon, we still assume that each buyer is interested only in his payoff in a single time slot \cite{shortbuyer}, as  buyers are usually bounded rational due to limited computation capabilities \cite{boundedrationality}. 
As a buyer often has limited demand and is not likely to purchase the same product repeatedly from the same seller (e.g., a buyer will not buy a Kindle from Amazon every day or book rooms on Airbnb every week), it is difficult for him to know the true quality of each seller through his own purchase experiences. 
Hence, it would be natural for a buyer to choose among sellers based on sellers' current prices and public reputations up to that point.

\subsection{Seller Reputation}

A seller's reputation, which depends on the number of completed transactions and past buyer ratings, plays an important role in the buyers' decisions.
For a seller $s$, we let $X_s[t]$ denote the number of completed transactions up to the beginning of time slot $t$, and let $\omega_s \in [0,1]$ denote the past buyers' rating. 
We assume that the rating $\omega_s ~(\forall s\in\S)$ changes much slower compared with the change of $X_s[t]$, and hence we approximate the rating as fixed in a relatively long period of time (i.e., the period of interest in our decision model). 
Such an assumption has been widely adopted and verified in the literature on e-commerce marketplaces (e.g., \cite{ratingunchanged,ratingunchanged2,ratingunchanged3}). 
Hence, the reputation of seller $s$ in time slot $t$, denoted by $r_s[t]$, can be calculated as a function of $X_s[t]$ and $\omega_s$.

In this paper we use the P\'{o}lya urn model \cite{CumulativeAdvantage},\footnote{There can be other formulations for $r_s$. For example, the reputation of seller $s$ can be the number of transactions with high ratings (or good reviews) out of his total number of transactions, which is simpler than the P\'{o}lya urn model. However, such a formulation does not capture the cumulative advantage of past transactions.} which has been widely applied to capture the competition dynamics among various sellers considering the intrinsic competitiveness and the cumulative advantage, to model the reputation:\footnote{Our theoretical analyses in Sections \ref{sec:unlimited}, \ref{sec:limitedone}, and \ref{sec:limitedmulti}, except the analysis of reputation evolution dynamics in \eqref{eq:repdyn} and the simulation results on the reputation dynamics in Section \ref{sec:simu},  only depend on the value of the reputation $r_s$ rather
than the values of the rating $w_s$ or the transaction number $X_s[t]$, and hence are general and do not rely on the specific form of \eqref{eq:reputation}.} 
\begin{equation}\label{eq:reputation}
\displaystyle r_s[t]=\frac{\omega_s X_s[t]}{ \sum_{i\in\S} \omega_i X_i[t]}. 
\end{equation}

Intuitively, a higher rating $\omega_s$ and a larger number of completed transactions $X_s[t]$ lead to a higher level of reputation $r_s[t]$ for seller $s$ compared with other sellers. 
First, a higher rating $\omega_s$ implies that users who have used the product experienced a better product quality. 
The rating $\omega_s$ hence reflects the \emph{intrinsic competitiveness}, i.e., the inherent ability of seller $s$ to win the competition among all sellers \cite{CumulativeAdvantage}. 
Second, a larger number of completed transactions $X_s[t]$ implies that the product is more popular among buyers. 
Hence $X_s[t]$ reflects the \emph{cumulative advantage}, i.e., the impact of completed transactions on promoting more transactions in future competitions, reflecting the dictum  ``the rich gets richer'' \cite{CumulativeAdvantage}.

The key feature of the model in \eqref{eq:reputation} is that different sellers' reputation levels are interdependent. 
Under fixed values of $\omega_s$ and $X_s[t]$, seller $s$' reputation is higher when the other sellers are ``weaker''. 
Table \ref{table:example} illustrates this point by computing a seller $A$'s reputation in two different markets. 
As seller B is weaker than seller C, seller A's reputation is higher in Market I than in Market II.

\begin{table}[t]
\newcommand{\tabincell}[2]{\begin{tabular}{@{}#1@{}}#2\end{tabular}}
\centering
\caption{Two Markets with Different Sellers}
\begin{tabular}{|c|c|}
\hline
\textbf{Market I}  &  \tabincell{c}{~Seller A: $\omega_A=4.5, X_A[t]=100$, $r_A[t]=~$\textbf{0.9146} \\ Seller B: $\omega_B=4.2, X_B[t]=10, r_B[t]=0.0854$ }  \\
\hline
\textbf{Market II}  & \tabincell{c}{~Seller A: $\omega_A=4.5$, $X_A[t]=100$, $r_A[t]=~$\textbf{0.0841} \\ ~~Seller C: $\omega_C=4.9, X_C[t]=1000, r_C[t]=0.9159$}  \\
\hline
\end{tabular}
\label{table:example}
\end{table}

\subsection{Seller Capacities: Three Scenarios}

We consider three market scenarios, corresponding to different assumptions of a seller's capacity in different practical online markets.

The first scenario is the \emph{unlimited capacity scenario}, where each seller has unlimited capacity and hence can always satisfy the buyers' demands. 
This can be a good approximation for the case where sellers have enough (although finite) production capacity, such as major TV sellers (e.g., TCL and Samsung) on Amazon. 
Furthermore, the analysis for the unlimited capacity scenario serves as a benchmark for the limited capacity scenario. 

The second scenario is the \emph{limited capacity and one buyer per seller scenario}, where each seller has limited capacity and can serve only one buyer in each time slot.  
This is motivated by practical online sharing economy platforms such as Airbnb, where a host of one apartment typically only serves one buyer (such as one traveling family) at any given time. 

The third scenario is the \emph{limited capacity and multiple buyers per seller scenario}, where each seller has limited capacity and can serve multiple buyers in each time slot. 
This is the case of small sellers on Amazon.

\section{Unlimited Capacity Scenario}\label{sec:unlimited}

In this section, we analyze the scenario where each seller has unlimited capacity. 
In the following, we first analyze the single-slot game (the two-stage game in a single time slot), and then analyze the infinite-horizon dynamic game.

\subsection{Single-Slot Game Analysis}\label{sec:unlimitedSingle}

Since the single-slot game as shown in Figure 2(b) is a Stackelberg game, we first analyze the buyers' purchasing decision in Stage II, and then study the sellers' pricing decisions in Stage I. 
Since we focus on a generic time slot here, we will suppress the time index $t$ in Section \ref{sec:unlimitedSingle}. 
We will bring the index $t$ back in Section \ref{subsec:DGUC}.

\subsubsection{Buyers' Purchasing Decisions in Stage II}\label{sec:buyer}

We first introduce a buyer's utility and payoff functions. 
Then we derive the buyer's optimal purchasing decision (at a particular seller) and optimal seller selection decision as the result of his payoff maximization. 

\emph{Buyer utility:} 
We assume that buyers are homogeneous in evaluating each seller's reputation and deriving their own satisfaction levels, and hence there is no need to index the buyers differently. 
We use $u_{s}(x_{s})$ to denote a buyer's expected\footnote{The expected utility is the buyer's ex ante utility before buying the product, which can be different from his perceived utility after buying the product, due to the gap between reputation and real quality.} utility (satisfaction) achieved by buying an amount $x_{s}$ of products from seller $s$. 
A good reputation leads to an increase in the buyer's expected valuation of the products \cite{ReputationSocialRela}.
Following the common assumption of diminishing marginal returns \cite{concaveutility}, we assume that $u_{s}(x_{s})$ is increasing and concave. 
More specifically, we will adopt the following utility function that captures the effect of reputation $r_s$ of seller $s$: 
\begin{equation}\label{eq:utility}
u_{s}(x_{s})=\rho\log(1+r_{s}x_{s}),
\end{equation}
where $\rho>0$ is the buyers' homogeneous product evaluation parameter, which describes how important the product is to a buyer. 
The value of $\rho$ is independent of the sellers, as all sellers sell the same category of product \cite{Tamer}.

\emph{Buyer payoff:} 
When a buyer purchases an amount $x_s$ of the product from seller $s$ at a price $p_s$, the buyer's payoff is the difference between his utility and the payment,\footnote{Note that the buyer's utility depends on the reputation, which is a function of the past price history (not including the current price). Hence, the current price only affects the buyer's payoff in the payment.} i.e., 
\begin{equation}\label{eq:payoff}
v_{s}(x_{s},p_s)=u_{s}(x_{s})-p_sx_{s}.
\end{equation}

\emph{Buyer's optimal consumption:} 
When purchasing from a particular seller $s$, a buyer's optimal purchase decision (consumption amount) can be calculated as follows:
\begin{lemma}\label{lemma:Optx1}
If a buyer chooses seller $s$, then his optimal consumption amount at seller $s$ is: 
\begin{equation}\label{eq:Optx1}
x_s^\ast(p_s)=\max\left \{\frac{\rho}{p_s}-\frac{1}{r_s},0 \right \}.
\end{equation}
\end{lemma}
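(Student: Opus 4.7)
The plan is to treat this as a straightforward concave maximization of the buyer's payoff in \eqref{eq:payoff} over the feasible region $x_s \geq 0$. First I would substitute the utility \eqref{eq:utility} into \eqref{eq:payoff} to obtain the single-variable objective $v_s(x_s,p_s) = \rho \log(1 + r_s x_s) - p_s x_s$. Since $\log$ is concave and $-p_s x_s$ is linear, $v_s$ is concave in $x_s$ on $[0,\infty)$, so the first-order condition (together with the boundary at zero) will fully characterize the optimum.

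Next I would compute the partial derivative $\partial v_s / \partial x_s = \rho r_s/(1 + r_s x_s) - p_s$ and set it to zero, solving to get the candidate interior stationary point $\hat{x}_s = \rho/p_s - 1/r_s$. The second derivative $-\rho r_s^2/(1+r_s x_s)^2$ is strictly negative, confirming that $\hat{x}_s$ is a maximizer whenever it lies in the feasible region.

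To handle the nonnegativity constraint, I would split into two cases. If $\hat{x}_s \geq 0$, i.e.\ $\rho r_s \geq p_s$, then the interior stationary point is feasible and optimal, giving $x_s^\ast = \rho/p_s - 1/r_s$. If $\hat{x}_s < 0$, i.e.\ $\rho r_s < p_s$, then $\partial v_s/\partial x_s < 0$ for all $x_s \geq 0$, so $v_s$ is strictly decreasing on $[0,\infty)$ and the optimum is the boundary point $x_s^\ast = 0$. Combining both cases collapses to the single expression $x_s^\ast(p_s) = \max\{\rho/p_s - 1/r_s,\,0\}$.

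I do not anticipate any real obstacle here since the objective is smooth, strictly concave, and one-dimensional; the only subtle point worth stating carefully is that the reputation $r_s$ entering \eqref{eq:utility} is taken as the publicly observable value at the current time slot and is not a decision variable for the buyer, so the maximization is genuinely over $x_s$ alone with $p_s$ and $r_s$ treated as parameters. The result then follows directly from the KKT conditions for this concave program.
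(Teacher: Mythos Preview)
Your proposal is correct and takes essentially the same approach as the paper's own proof: both verify strict concavity of $v_s(x_s,p_s)$ in $x_s$ via the second derivative, solve the first-order condition to obtain the interior candidate $\rho/p_s - 1/r_s$, and then invoke the nonnegativity constraint to arrive at the $\max\{\cdot,0\}$ expression. Your write-up is slightly more explicit about the boundary case (noting monotonicity when the interior point is infeasible), but there is no substantive difference in strategy.
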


\begin{proof}
See Appendix \ref{app:lemma_Optx1}. 
\end{proof}

When the buyer's optimal consumption from a seller $s$, $x_s^\ast(p_s)$, is positive (i.e., $\rho/p_s - 1/r_s >0$), $x_s^\ast(p_s)$ increases with the reputation $r_s$ and decreases with the price $p_s$. 
In this case, the buyer's optimal payoff achieved by purchasing from seller $s$ is: 
\begin{equation}\label{eq:OPTpayoff}
v_{s}(x_{s}^\ast(p_s),p_s)=\rho \log\left( \rho \frac{r_{s}}{p_s} \right)+\frac{p_s}{r_{s}}-\rho, 
\end{equation}
which monotonically increases in $\varphi_s \eq \frac{r_s}{p_s}$, the \emph{reputation-price ratio}. 
When $x_s^\ast(p_s) = 0$, then $v_s(x_s^\ast(p_s),p_s) = 0$.

\emph{Buyer's optimal seller choice:} 
Finally, the buyer will choose a seller that results in maximum optimal payoff: 
\begin{equation}\label{eq:buyerObj}
\max_{s\in\S} ~\max_{x_{s}\geq 0}~ v_{s}(x_{s},p_s). 
\end{equation}
If multiple sellers yield the same maximum payoff, we assume that the buyer will randomly choose one of them with equal probability. 

The above discussions imply that a buyer's decision is affected by the prices of all sellers. 
Let us define $p_{-s} = (p_i, \forall i\in\mathcal{S}, i\neq s)$ as the prices of sellers except seller $s$ in the market. 
Hence we write the price vector of all sellers as $\boldsymbol{p}=(p_s: \forall s\in\mathcal{S})=(p_s, p_{-s})$, and a buyer's optimal consumption amount at seller $s$ after considering seller selection is $x_{s}^{\rm sl\ast}(p_s,p_{-s})$.\footnote{The superscript ``sl'' stands for seller selection.} 

From \eqref{eq:OPTpayoff}, we know that a buyer's optimal payoff when choosing seller $s$ increases with the seller's reputation-price ratio $\varphi_s$. 
Hence, we have the following result.

\begin{lemma}\label{lemma:sellerSelection}
In Stage II, each buyer will choose a seller with the highest reputation-price ratio, i.e., $s^{\ast}\in \arg \max_{i\in\S}\varphi_i$. 
\end{lemma}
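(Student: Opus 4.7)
The plan is to reduce the seller-selection problem in \eqref{eq:buyerObj} to a one-dimensional comparison among the reputation-price ratios $\varphi_s = r_s/p_s$. By Lemma \ref{lemma:Optx1}, for each seller $s$ the inner maximization $\max_{x_s \geq 0} v_s(x_s, p_s)$ is already solved: its optimal value equals the right-hand side of \eqref{eq:OPTpayoff} when $\rho \varphi_s > 1$, and equals $0$ otherwise. Therefore the outer maximization over $s\in\mathcal{S}$ boils down to comparing a single scalar quantity per seller, and the lemma will follow once this scalar is shown to be an increasing function of $\varphi_s$ alone.

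The key step is the monotonicity claim. First, I would substitute $p_s/r_s = 1/\varphi_s$ into \eqref{eq:OPTpayoff} to rewrite the optimal payoff purely in terms of $\varphi_s$ as $h(\varphi_s) \eq \rho\log(\rho\varphi_s) + 1/\varphi_s - \rho$, valid on $\{\varphi_s : \rho\varphi_s > 1\}$, and set $h(\varphi_s)=0$ otherwise. A direct differentiation gives $h'(\varphi_s) = \rho/\varphi_s - 1/\varphi_s^2 = (\rho\varphi_s - 1)/\varphi_s^2$, which is strictly positive precisely when $\rho\varphi_s > 1$. Moreover, at the boundary $\rho\varphi_s = 1$ one can check that $h(\varphi_s) = 0$, so $h$ is continuous and non-decreasing in $\varphi_s$ over $[0,\infty)$, and strictly increasing on the region where buying is profitable. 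This confirms the assertion made just after \eqref{eq:OPTpayoff}.

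With monotonicity established, the outer maximization in \eqref{eq:buyerObj} becomes $\max_{s\in\mathcal{S}} h(\varphi_s)$, whose maximizers coincide with $\arg\max_{i\in\mathcal{S}} \varphi_i$ because $h$ is non-decreasing. Hence the optimal seller choice $s^\ast$ satisfies $s^\ast \in \arg\max_{i\in\mathcal{S}} \varphi_i$, and the stated tie-breaking rule (uniform random choice among all maximizers) kicks in exactly when this $\arg\max$ is not a singleton, as already declared after \eqref{eq:buyerObj}.

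I do not anticipate a real obstacle here: the result is essentially a corollary of Lemma \ref{lemma:Optx1} combined with an elementary calculus check. The only subtlety worth being careful about is the degenerate case $\rho\varphi_s \leq 1$, where the buyer's optimal consumption is zero and the payoff drops to $0$ rather than following the closed-form expression \eqref{eq:OPTpayoff}; the monotonicity argument must therefore be stated on the whole domain $\varphi_s \geq 0$ (with the piecewise definition of $h$) rather than only on the interior region, which is straightforward.
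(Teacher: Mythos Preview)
Your proposal is correct and follows essentially the same approach as the paper. The paper does not give a separate appendix proof for this lemma; it simply observes, in the sentence preceding the lemma, that the optimal payoff in \eqref{eq:OPTpayoff} ``monotonically increases in $\varphi_s \triangleq r_s/p_s$'' and treats the lemma as an immediate consequence. Your argument makes that monotonicity explicit via the derivative $h'(\varphi_s)=(\rho\varphi_s-1)/\varphi_s^2$ and carefully handles the boundary case $\rho\varphi_s\le 1$, which is exactly the detail the paper leaves to the reader.
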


When multiple sellers have the same highest ratio $\max_{i\in\S}\varphi_i$, we denote the set of such sellers as $\mathcal{S}_r=\{s: \varphi_s=\max_{i\in\S}\varphi_i\}$ with a size $S_r$. 
Due to buyers' random choices, each seller in this set will have a positive expected demand.

From Lemmas \ref{lemma:Optx1} and \ref{lemma:sellerSelection}, we have the following result. 
\begin{lemma}\label{lemma:OptCon}
Given a price vector $\boldsymbol{p}$, the buyer's expected consumption amount at each seller $ s \in \S$ after considering seller selection is 
\begin{equation}\label{eq:buyerOpt}
x_{s}^{\rm sl\ast}(p_s,p_{-s})=
\left\{
\begin{aligned}
& \frac{1}{S_r} \left( \frac{\rho}{p_s}-\frac{1}{r_{s}}\right), \mbox{ if } \varphi_s= \max_{i\in\S} \varphi_i, \varphi_s> \frac{1}{\rho};
\\
& 0, \mbox{ otherwise}.
\end{aligned}
\right.
\end{equation}
\end{lemma}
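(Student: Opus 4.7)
The plan is to build the statement directly from Lemmas \ref{lemma:Optx1} and \ref{lemma:sellerSelection}, with the only genuine content being the accounting of the random tie-breaking rule and the translation of the positivity condition $\rho/p_s - 1/r_s > 0$ into the reputation-price ratio condition $\varphi_s > 1/\rho$.

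First, I would invoke Lemma \ref{lemma:sellerSelection} to pin down the set of sellers that any arriving buyer would ever select, namely $\mathcal{S}_r = \{s : \varphi_s = \max_{i\in\S}\varphi_i\}$. Any seller $s \notin \mathcal{S}_r$ is strictly dominated in terms of the buyer's optimal payoff \eqref{eq:OPTpayoff}, so no buyer selects such a seller and the expected consumption amount at $s$ is $0$, which matches the ``otherwise'' branch of \eqref{eq:buyerOpt}.

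Next, for $s \in \mathcal{S}_r$, I would observe that the tie-breaking rule introduced just before Lemma \ref{lemma:sellerSelection} assigns probability $1/S_r$ to each tied seller. Thus the buyer picks seller $s$ with probability $1/S_r$, and \emph{conditional} on picking $s$ the optimal purchase amount is $x_s^\ast(p_s)$ from Lemma \ref{lemma:Optx1}. Taking expectations over the tie-breaking randomization therefore yields $x_s^{\rm sl\ast}(p_s,p_{-s}) = \frac{1}{S_r} x_s^\ast(p_s) = \frac{1}{S_r}\max\{\rho/p_s - 1/r_s,\,0\}$.

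Finally, I would check the case split: the inner max is strictly positive precisely when $\rho/p_s > 1/r_s$, i.e., $\varphi_s = r_s/p_s > 1/\rho$, which reproduces the first branch of \eqref{eq:buyerOpt}; if instead $\varphi_s \le 1/\rho$ even for $s \in \mathcal{S}_r$, then the inner max is $0$ and we fall back to the ``otherwise'' branch. The main (mild) obstacle is a bookkeeping one: one must be careful that the $1/S_r$ factor is applied \emph{only} after using Lemma \ref{lemma:Optx1}, since Lemma \ref{lemma:Optx1} gives the conditional quantity (demand if the buyer does choose $s$), whereas \eqref{eq:buyerOpt} is an unconditional expected quantity; beyond this, the proof is a direct composition of the two preceding lemmas with the specified tie-breaking convention.
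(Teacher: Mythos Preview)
Your proposal is correct and matches the paper's approach: the paper does not give a standalone proof of Lemma~\ref{lemma:OptCon} but simply states that it follows from Lemmas~\ref{lemma:Optx1} and~\ref{lemma:sellerSelection}, which is exactly the composition you carry out (together with the tie-breaking convention and the rewriting of the positivity condition as $\varphi_s>1/\rho$).
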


If $k$ buyers arrive at the online market in a time slot, then the expected total consumption amount at seller $s$ is: 
\begin{equation}\label{eq:qsNES}
q_{s}^\ast(p_s,p_{-s})=
\left\{
\begin{aligned}
& \frac{k}{S_r}\left( \frac{\rho}{p_s}-\frac{1}{r_{s}}\right), \mbox{ if } \varphi_s= \max_{i\in\S} \varphi_i, \varphi_s> \frac{1}{\rho};
\\
& 0, \mbox{ \emph{otherwise}}.
\end{aligned}
\right.
\end{equation}

Table \ref{table} lists some key notations defined here and in the rest of the paper.

\begin{table}[t]
\newcommand{\tabincell}[2]{\begin{tabular}{@{}#1@{}}#2\end{tabular}}
\centering
\caption{\textsc{Key Notations}}
\begin{tabular}{|c||p{6.5cm}|}
\hline
\textbf{Symbol}   & \textbf{Physical Meaning}   \\
\hline
$\mathcal{S}$  & The set of sellers, $\mathcal{S}=\{1,2,\ldots,S\}$   \\
\hline
$\omega_s$  & The rating of seller $s$  \\
\hline
$X_s[t]$  & The number of completed transactions of seller $s$ up to time slot $t$  \\
\hline
$r_s[t]$  & The reputation of seller $s$ in time slot $t$ \\
\hline
$q_s[t]$  & The number of completed transactions of seller $s$ in the single time slot $t$  \\
\hline
$p_s$  & The unit price announced by seller $s$ \\
\hline
$c$  & Sellers' common marginal cost \\
\hline
$x_s$ & The amount of products that a buyer buys at seller $s$ \\
\hline
$\rho$ & The buyers' common product evaluation parameter \\
\hline
$\varphi_s$ & The reputation-price ratio of seller $s$  \\
\hline
$r_{\max}$  & The highest level of reputation in the network \\
\hline
$r_{\sec}$  & The second highest level of reputation in the network \\
\hline
$\mathcal{S}_r$  & The set of sellers with the highest reputation-price ratio  \\
\hline
$\varepsilon$  & A sufficiently small positive number \\
\hline
$\varphi^{\rm MS}$  & The reputation-price ratio under the multi-seller strategy \\
\hline
$p_s^{\rm MS}$  & The price set by seller $s$ under the multi-seller strategy \\
\hline
$\S_C$  & The set of sellers who are able to achieve a positive profit under the multi-seller strategy \\
\hline
$\pi_s(\p[t])$  & The profit of seller $s$ in time slot $t$  \\
\hline
$\delta_s$  & The discount factor of seller $s$  \\
\hline
$h[t]$  & The price profile history till time slot $t$  \\
\hline
$\Pi_s(h[\infty])$  & The long-term discounted total profit of seller $s$  \\
\hline
$P_k$  & The probability of $k$ buyers arriving at the market in a time slot  \\
\hline
$b_s$  & The capacity of seller $s\in \S$ in Sections \ref{sec:limitedone} and \ref{sec:limitedmulti} \\
\hline
 \end{tabular}
\label{table}
\end{table}

\subsubsection{Sellers' Pricing Decisions in Stage I}\label{sec:seller}

Sellers compete to attract buyers in Stage I, and we can model their interactions as a game. 
\begin{game}[Single-Slot Static Seller Competition Game with Unlimited Capacities]\label{SCgame}
$\mbox{}$
\begin{itemize}
\item Players: the set $ \S $ of sellers. 
\item Strategies: each seller $s\in\mathcal{S}$ chooses a price $p_s \in [c,\rho r_s]$. 
\item Payoffs: each seller $s\in\mathcal{S}$ obtains a profit $\pi_s(p_s,p_{-s})=(p_s-c)q_{s}^\ast(p_s,p_{-s})$.
\end{itemize}
\end{game}
Here $c$ is sellers' homogeneous cost for producing one unit of product. 
The price upper bound $\rho r_s$ is due to the nonnegativity requirement of the optimal consumption $\rho/p_s-1/r_s$ in \eqref{eq:buyerOpt}.

Now we analyze the Nash equilibrium (NE) of the single-slot Stage I game, i.e., Game \ref{SCgame}. 
The NE has different forms depending on the number of sellers achieving the highest reputation. 
Recall in Lemma \ref{lemma:sellerSelection} that only the seller with the highest reputation-price ratio can get a positive demand. 
We let $r_{\rm{max}}$ be the highest reputation in the network, i.e., $r_{\rm{max}} = \max_{i \in \S} r_i$, and let $s^{\max}$ denote the smallest index of such sellers. 
We let $r_{\rm{sec}}$ be the second highest level of reputation, i.e., $r_{\rm{sec}} = \max_{i \in \S \setminus \{s^{\rm{max}}\}} r_i$. 
If multiple sellers have the same highest level of reputation, then $r_{\rm{sec}}=r_{\rm{max}}$. 
As an example, consider four sellers with reputations equal to $\{4, 9 , 1, 9\}$. 
Then $r_{\rm{max}} = 9, s^{\max} = 2$, and $r_{\rm{sec}} = 9$.

\emph{Case I:} Multiple sellers achieve the same highest reputation, i.e., $r_{\rm{sec}}=r_{\rm{max}}$. 
\begin{lemma}\label{lemma:NES1}
If $r_{\rm{sec}}=r_{\rm{max}}$, the unique Nash equilibrium of Game \ref{SCgame} is a price profile $\boldsymbol{p}^\ast$ such that 
\begin{equation*}
p_s^{\ast}=c,~\forall s \in\S.
\end{equation*}
\end{lemma}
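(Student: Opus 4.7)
My plan is to recognize Game \ref{SCgame} under $r_{\sec} = r_{\max}$ as essentially a Bertrand competition between the (at least two) sellers sharing the top reputation $r_{\max}$, and to mimic the classical Bertrand argument. Lemma \ref{lemma:OptCon} translates any price profile directly into demand allocations via the reputation-price ratios $\varphi_s$, so the undercutting logic applies cleanly in this setting.

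First I would verify that $\boldsymbol{p}^* = (c,\ldots,c)$ is an NE: every seller earns zero there, and any unilateral deviation $p_s' > c$ drives $s$'s ratio $r_s/p_s' \le r_{\max}/p_s' < r_{\max}/c$ strictly below the ratio of the other top-reputation sellers still priced at $c$; by Lemma \ref{lemma:OptCon}, $s$ then gets zero demand and zero profit, while deviations below $c$ are infeasible. Next, I would establish uniqueness by contradiction. Suppose some NE $\boldsymbol{p}^\dagger$ has a seller earning positive profit; such a seller $s_0$ must lie in $\S_r$ with $p_{s_0}^\dagger > c$. Because $r_{\sec} = r_{\max}$, there exists another top-reputation seller $s'$ with $r_{s'} = r_{\max}$. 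I will examine $s'$'s deviation to $p_{s_0}^\dagger - \varepsilon$ for small $\varepsilon > 0$: since $r_{\max}/(p_{s_0}^\dagger - \varepsilon) > r_{\max}/p_{s_0}^\dagger = \varphi_{s_0}^\dagger = \max_i \varphi_i^\dagger$, seller $s'$ becomes the unique highest-ratio seller, captures the entire expected demand, and earns margin $p_{s_0}^\dagger - \varepsilon - c > 0$. A brief continuity check confirms that for sufficiently small $\varepsilon$ this strictly beats $s'$'s equilibrium payoff (whether $s'$ was sharing $\S_r$ at the same price or outside $\S_r$ earning zero), contradicting NE; hence $p_s^\dagger = c$ for every seller whose price affects demand.

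The hard part will be nailing down that $s'$'s post-deviation ratio dominates the ratios of \emph{all} other sellers simultaneously, not just $s_0$; fortunately this is automatic, since $s_0 \in \S_r$ means $\varphi_j^\dagger \le \varphi_{s_0}^\dagger$ for every other $j$, and we have arranged the strict inequality $r_{\max}/(p_{s_0}^\dagger - \varepsilon) > \varphi_{s_0}^\dagger$. A minor technical subtlety is the price indeterminacy for sellers who are out of the demand in any case (non-top sellers, or extra top sellers once two of them price at $c$): such sellers are formally indifferent across a range of prices, so the statement is the standard Bertrand ``unique equilibrium outcome,'' and I would read the lemma's uniqueness in that sense, with the all-at-$c$ profile as the canonical representative pinned down by the best responses of the active (top-reputation) sellers.
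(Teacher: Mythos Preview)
Your proposal is correct and follows essentially the same Bertrand undercutting argument as the paper: first verify that all-at-$c$ is an NE since any upward deviation by a seller loses the top ratio to another $r_{\max}$ seller still at $c$, then rule out any other profile by exhibiting a profitable undercut by a top-reputation seller. Your remark about the price indeterminacy of non-top (or redundant top) sellers is a genuine subtlety that the paper's own proof glosses over; reading the lemma as asserting a unique equilibrium \emph{outcome} with $p_s^\ast=c$ as the canonical representative is the right interpretation.
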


\begin{proof}
See Appendix \ref{app:lemma_NES1}. 
\end{proof}

Lemma \ref{lemma:NES1} shows that a fierce market competition forces all sellers to set prices equal to the marginal price $c$ and obtain zero profits.

\emph{Case II:} Only one seller achieves the highest reputation, i.e., $r_{\rm{sec}}<r_{\rm{max}}$. 
In this case, the unique highest reputation seller achieves a positive profit, while all other sellers achieve zero profits. 
We let $\varepsilon$ denote a sufficiently small positive number.\footnote{In our paper, we assume that there is a minimum increment in price, and we set $\varepsilon=1$ cent.} 

\begin{lemma}\label{lemma:NES2}
If $r_{\rm{sec}}<r_{\rm{max}}$, the unique Nash equilibrium of Game \ref{SCgame} is a price profile $\boldsymbol{p}^\ast$ such that 
\begin{equation*}
p_{s}^\ast=
\left\{
\begin{aligned}
&\min \left\{ \frac{r_{\rm{max}}}{r_{\rm{sec}}}c-\varepsilon, \sqrt{c \rho r_s} \right\} ,\mbox{ if } r_s = r_{\rm{max}}; 
\\ 
&~c, \mbox{ otherwise}.
\end{aligned}
\right.
\end{equation*}
Here $\sqrt{c \rho r_s}$ is seller $s$'s monopoly price, i.e., the price that seller $s$ would choose to maximize his profit if he is the only seller in the market.
\end{lemma}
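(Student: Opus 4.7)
\textbf{Proof proposal for Lemma \ref{lemma:NES2}.} My plan is to verify that the stated profile $\boldsymbol{p}^\ast$ is a Nash equilibrium by ruling out profitable unilateral deviations, and then argue uniqueness by showing that any other candidate profile admits a profitable deviation. The analysis leverages Lemma \ref{lemma:sellerSelection}, which tells us that only sellers attaining the maximum reputation-price ratio $\varphi_s = r_s/p_s$ receive positive demand, so competition is effectively over this ratio, not over price alone.

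First, I would check that at $\boldsymbol{p}^\ast$ the unique top seller $s^{\max}$ is the sole winner. Because every other seller $s$ charges $p_s = c$, its ratio is $r_s/c \leq r_{\sec}/c$. The top seller's ratio is $r_{\max}/p^\ast_{s^{\max}}$, and the two cases in the $\min$ are precisely arranged so this ratio strictly exceeds $r_{\sec}/c$: if $p^\ast_{s^{\max}} = r_{\max}c/r_{\sec}-\varepsilon$ this is immediate, and if $p^\ast_{s^{\max}} = \sqrt{c\rho r_{\max}}$ the inequality $\sqrt{c\rho r_{\max}} < r_{\max}c/r_{\sec}$ must hold (otherwise the $\min$ would pick the other branch). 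Therefore $s^{\max}$ captures all demand, and every other seller earns zero. A non-top seller cannot profitably deviate: raising $p_s$ still yields zero demand, while lowering $p_s$ below $c$ is infeasible. To beat $s^{\max}$ and become the unique maximizer, seller $s$ would need $r_s/p_s > r_{\max}/p^\ast_{s^{\max}}$, which combined with $p^\ast_{s^{\max}} \leq r_{\max}c/r_{\sec} - \varepsilon < r_{\max}c/r_s$ gives $p_s < c$, again infeasible.

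Next I would verify that $p^\ast_{s^{\max}}$ is the best response for the top seller against $p_{-s^{\max}} = c\mathbf{1}$. Conditional on retaining all buyers (i.e.\ $p \leq r_{\max}c/r_{\sec} - \varepsilon$), the seller solves $\max_p (p-c)k\bigl(\rho/p - 1/r_{\max}\bigr)$. A routine first-order condition yields the unconstrained maximizer $p = \sqrt{c\rho r_{\max}}$, and the second-order check shows concavity in $p$. Hence the constrained maximum is $\min\{r_{\max}c/r_{\sec} - \varepsilon,\, \sqrt{c\rho r_{\max}}\}$, matching the claim. Raising the price above $r_{\max}c/r_{\sec} - \varepsilon$ hands the market to some low-reputation rival (or splits it with them) at a strictly worse margin, which one checks is dominated; I would bound this explicitly using continuity of profit in price across the threshold.

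For uniqueness, I would argue by contradiction: in any candidate NE $\tilde{\boldsymbol{p}}$, first observe that no non-top seller $s$ with $r_s < r_{\max}$ can post $\tilde p_s > c$ at equilibrium, because such a seller earns zero demand (since $r_s/\tilde p_s < r_{\max}/\tilde p_{s^{\max}}$ whenever the top seller has any reason to be in the market), and lowering the price toward $c$ weakly enlarges the set of scenarios where $s$ could undercut $s^{\max}$. More importantly, if every low-reputation seller is pinned at $p_s = c$, then the top seller's best response is uniquely $p^\ast_{s^{\max}}$ as computed above, and then one iterates: given $p^\ast_{s^{\max}}$, a low-reputation seller's only way to gain positive profit would be to undercut to some price above $c$ that beats the top seller in ratio, which the arithmetic on the threshold rules out. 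The main obstacle will be handling this uniqueness step cleanly in the presence of the discrete price grid (the $\varepsilon$ increment) and the tie-breaking rule, because one must ensure that a low-reputation seller cannot profitably match the top seller's ratio by setting $p_s = r_s p^\ast_{s^{\max}}/r_{\max}$; I would resolve this by showing that such a price is strictly below $c$ under the hypothesis $r_s \leq r_{\sec} < r_{\max}$ and $p^\ast_{s^{\max}} \leq r_{\max}c/r_{\sec} - \varepsilon$, so the deviation is infeasible.
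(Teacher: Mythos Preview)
Your proposal is correct and follows essentially the same two-step structure as the paper's proof: first verify that $\boldsymbol{p}^\ast$ is a Nash equilibrium by checking that neither the top seller nor any other seller has a profitable deviation, then argue uniqueness via Bertrand-style undercutting (the paper simply refers back to the undercutting argument used for Lemma~\ref{lemma:NES1}). Your treatment is in fact more detailed than the paper's---you explicitly compute the first-order condition for the monopoly price, spell out the ratio arithmetic showing a non-top seller cannot feasibly match or beat $s^{\max}$, and flag the discrete-$\varepsilon$ and tie-breaking issues---whereas the paper handles these points in a sentence or two.
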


\begin{proof}
See Appendix \ref{app:lemma_NES2}. 
\end{proof}

The quantity $\frac{r_{\rm{max}}}{r_{\rm{sec}}}c-\varepsilon$ in Lemma \ref{lemma:NES2} enables seller $s$ with $r_s=r_{\max}$ to set the reputation-price ratio $r_s/p_s$ slightly higher than all other sellers' reputation-price ratios, and hence attracts all buyers. 
Furthermore, we have the following result.

\begin{corollary}\label{coro:monopoly}
If $r_{\rm{max}}> {\rho (r_{\rm{sec}})^2}/{c}$, then the highest reputation seller will set his monopoly price at the equilibrium, i.e., $$p_s^{\ast}=\sqrt{c \rho r_s}, \mbox{ if } r_s=r_{\max}, $$ and achieve a profit equal to his maximum profit achieved in a monopoly market. 
\end{corollary}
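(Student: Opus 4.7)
The plan is to deduce Corollary \ref{coro:monopoly} directly from Lemma \ref{lemma:NES2} by examining which branch of the minimum is active under the hypothesis $r_{\max} > \rho(r_{\sec})^2/c$. Since $r_{\sec} < r_{\max}$ is implied by this hypothesis (otherwise $\rho r_{\max} < c$ would violate the meaningful price range), Lemma \ref{lemma:NES2} already tells us that the equilibrium price of the top-reputation seller $s$ is $p_s^\ast = \min\{(r_{\max}/r_{\sec})c - \varepsilon,\sqrt{c\rho r_s}\}$. I therefore only need to verify that under the stated gap condition, the monopoly branch $\sqrt{c\rho r_s} = \sqrt{c\rho r_{\max}}$ is the smaller of the two.

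First I would compare the two candidates. Ignoring the infinitesimal $\varepsilon$, the inequality $\sqrt{c\rho r_{\max}} \le (r_{\max}/r_{\sec})c$ is equivalent, after squaring both (positive) sides, to $c\rho r_{\max} \le c^2 r_{\max}^2/r_{\sec}^2$, i.e., $\rho(r_{\sec})^2 \le c\, r_{\max}$, which is precisely the hypothesis. Hence $p_s^\ast = \sqrt{c\rho r_s}$ in the $r_s = r_{\max}$ slot, establishing the price claim.

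Second, I would justify the terminology ``monopoly price'' by a short first-order-condition calculation: if seller $s$ were alone in the market, buyers would always choose him whenever $\varphi_s > 1/\rho$, and from \eqref{eq:qsNES} the expected demand is $k(\rho/p_s - 1/r_s)$. The profit $(p_s - c) k(\rho/p_s - 1/r_s)$ is strictly concave in $p_s$, and setting its derivative to zero yields $c\rho/p_s^2 = 1/r_s$, i.e., $p_s = \sqrt{c\rho r_s}$. Substituting this back gives the monopoly profit, which by the argument above is exactly the profit realized at equilibrium.

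I do not anticipate any serious obstacle: the whole content of the corollary is a case selection inside the minimum of Lemma \ref{lemma:NES2}, plus a textbook monopoly computation to label the resulting price. The only subtlety to flag is the $\varepsilon$ term, which can be absorbed because the inequality $\rho(r_{\sec})^2 < c\, r_{\max}$ is strict; for any $\varepsilon$ smaller than the positive gap $(r_{\max}/r_{\sec})c - \sqrt{c\rho r_{\max}}$, the minimum is still attained at $\sqrt{c\rho r_{\max}}$, so the stated equality holds.
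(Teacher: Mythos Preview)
Your proposal is correct and matches the paper's approach: the corollary is stated without a standalone proof, as it follows immediately from Lemma~\ref{lemma:NES2} via the very case selection you describe, and the same algebraic step $r_{\max}>\rho(r_{\sec})^2/c \Rightarrow \sqrt{c\rho r_{\max}}<(r_{\max}/r_{\sec})c$ appears explicitly in the paper's proof of Theorem~\ref{theo:NErNE}. Your handling of the $\varepsilon$ term and the monopoly-price FOC are fine and add nothing the paper would object to.
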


Theorem \ref{theo:NES} summarizes the results of Lemmas \ref{lemma:NES1} and \ref{lemma:NES2} with a unified expression. 

\begin{theorem}\label{theo:NES}
The unique Nash equilibrium of Game \ref{SCgame} is a price profile $\boldsymbol{p}^{\ast \rm NE-U}=(p_s^{\ast \rm NE-U}:\forall s\in\S)$ such that\footnote{The superscript ``U'' here represents unlimited capacity.} 
\begin{equation}\label{eq:pNES}
p_{s}^{\ast \rm NE-U}=
\left\{
\begin{aligned}
&\max \left\{ \min \left\{ \frac{r_{\rm{max}}}{r_{\rm{sec}}}c-\varepsilon, \sqrt{c \rho r_s} \right\} , c \right\}, \\
&~ \quad \mbox{ if } r_s = r_{\rm{max}};
\\
&~c, \mbox{ otherwise}.
\end{aligned}
\right.
\end{equation} 
\end{theorem}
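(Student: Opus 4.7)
The plan is to derive Theorem \ref{theo:NES} as a direct specialization of Lemmas \ref{lemma:NES1} and \ref{lemma:NES2}. Those two lemmas already establish a unique Nash equilibrium in each of the two regimes $r_{\rm sec} = r_{\rm max}$ and $r_{\rm sec} < r_{\rm max}$, so both existence and uniqueness for the theorem are immediate. The remaining content is a short algebraic check that the single expression in \eqref{eq:pNES} reduces to the price profile prescribed by each lemma in the corresponding regime.

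\textbf{Case I} ($r_{\rm sec} = r_{\rm max}$). For any seller with $r_s < r_{\rm max}$, \eqref{eq:pNES} returns $c$ outright, matching Lemma \ref{lemma:NES1}. For a seller with $r_s = r_{\rm max}$, the ratio $r_{\rm max}/r_{\rm sec}$ equals $1$, so the inner $\min$ becomes $\min\{c - \varepsilon,\, \sqrt{c \rho r_s}\}$. Because the strategy set $[c, \rho r_s]$ is nonempty, we have $\rho r_s \geq c$ and hence $\sqrt{c \rho r_s} \geq c > c - \varepsilon$; the inner $\min$ therefore equals $c - \varepsilon$, and the outer $\max$ with $c$ lifts it back to $c$. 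Thus \eqref{eq:pNES} yields $p_s^{\ast \rm NE-U} = c$ for every seller, in agreement with Lemma \ref{lemma:NES1}.

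\textbf{Case II} ($r_{\rm sec} < r_{\rm max}$). Sellers with $r_s < r_{\rm max}$ are assigned $c$ by \eqref{eq:pNES}, matching Lemma \ref{lemma:NES2}. For the unique seller with $r_s = r_{\rm max}$, the strict inequality $r_{\rm max}/r_{\rm sec} > 1$ makes $(r_{\rm max}/r_{\rm sec}) c - \varepsilon > c$ for all sufficiently small $\varepsilon$, and $\sqrt{c \rho r_s} \geq c$ as before. Both arguments of the inner $\min$ are therefore at least $c$, so the outer $\max$ with $c$ is inactive, and \eqref{eq:pNES} collapses to $\min\{(r_{\rm max}/r_{\rm sec}) c - \varepsilon,\, \sqrt{c \rho r_s}\}$, which is precisely the NE price in Lemma \ref{lemma:NES2}.

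The one place a careless unification could go wrong — and the only nontrivial bookkeeping step — is the role of the outer $\max$ with $c$: in Case I it is essential, because it raises $c - \varepsilon$ back to $c$ and keeps the equilibrium price inside the admissible interval $[c, \rho r_s]$; in Case II it is inactive but harmless. Once this is checked, Theorem \ref{theo:NES} follows directly from the two lemmas, inheriting their uniqueness claims.
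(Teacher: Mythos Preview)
Your proposal is correct and follows exactly the paper's own approach: the paper presents Theorem \ref{theo:NES} simply as a unified restatement of Lemmas \ref{lemma:NES1} and \ref{lemma:NES2} and offers no separate proof, so your explicit case-by-case verification that \eqref{eq:pNES} reduces to each lemma's prescription is precisely what is needed (and in fact more detailed than what the paper supplies).
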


\subsection{Dynamic Game Analysis}\label{subsec:DGUC}

Now we analyze the more realistic infinite-horizon dynamic game, where sellers need to make pricing decisions at the beginning of each infinitely many time slots repeatedly. 
As explained in Section \ref{sec:model}, it is reasonable to assume that buyers are bounded rational in such repeated interactions \cite{boundedrationality}, hence each buyer is myopic and chooses to maximize his payoff only in the current time slot as in Section \ref{sec:unlimitedSingle} \cite{shortbuyer}. 
The difference between the dynamic game here and the static single-slot game in Section \ref{sec:unlimitedSingle} lies in the sellers' decisions.

In a dynamic game, each seller chooses the prices over time to maximize his long-term discounted total profit. 
We denote the sellers' price profiles in time slot $t$ as $\p[t]$. 
We let $\pi_s(\p[t],\r[t])$ denote the profit of seller $s$ (for each $s\in\S$) in time slot $t$, which depends on all buyers' seller selection decisions and hence depends on all sellers' reputations $\r[t]=\{r_s[t]:\forall s\in\S\}$ in time slot $t$. 
As a result, each seller's long-term discounted total profit depends on all sellers' reputations which evolve dynamically over time. 

The reputation evolution process of seller $s\in\S$ can be described as follows: 
\begin{equation}\label{eq:repdyn}
r_s[t+1]=\frac{\omega_s X_s[t+1]}{ \sum_{i\in\S} \omega_i X_i[t+1]}=\frac{\omega_s \left(X_s[t]+q_s[t]\right)}{ \sum_{i\in\S} \omega_i \left(X_i[t]+q_i[t]\right)}.  
\end{equation}
Here $q_s[t]$ is the number of completed transactions of seller $s$ in the single time slot $t$, which depends on sellers' competition in time slot $t$ and hence depends on $\p[t]$ and $\r[t]$. 
As mentioned before, we assume that a buyer's average rating $\omega_s$ changes much slower than the change in the number of transactions and hence is assumed to be fixed in our current study. 
However, from a computational point of view, it is quite complex for each seller to compute the reputation evolution dynamics given sellers' current strategy choices in the market \cite{MyopicLearning}. 
And from a practical point of view, it is unlikely that each seller would explicitly compute his competitors' future best response pricing strategies and reputations, which is an implausible task in an actual game due to bounded rationality (e.g., bounded computational power) \cite{MyopicLearning}. 

Motivated by the above discussions, we consider a computationally simpler and more natural scenario, where sellers are myopic and play the dynamic game by assuming that all sellers' future reputations remain the same as the current reputations \cite{MyopicLearning}. 
This assumption requires only a weak form of rationality from the sellers, which has some similarities with the models of predictive control and receding horizon control \cite{ModelPreControl}, both of which are popular approaches to complex dynamic control problems. 
Such an assumption is plausible especially in a market with a large number of sellers. 
In such a market, sellers' myopic behavior is computationally simple; by contrast, it is an unreasonable computational requirement to sellers to solve a dynamic program with full knowledge (or accurate prediction) of future reputations of other sellers in every time slot.

Specifically, at the beginning of each time slot $t$, sellers observe the current reputation profile $\r[t]$, and conjecture that the reputation profile will \emph{remain constant for all time}; with this conjecture, each seller computes an optimal strategy over time (from time slot $t$ to the infinite future) to maximize his long-term discounted profit, and chooses the price in time slot $t$ accordingly. 
In the next time slot $t+1$, all sellers' reputations will evolve according to \eqref{eq:repdyn}. 
Then each seller repeats the same decision process as in time slot $t$, i.e., computing a new optimal strategy for time $t$ and future time slots based on the newly observed reputation profile $\r[t+1]$, and then implement the corresponding price in time slot $t+1$.

Now we explain how each myopic seller computes his long-term profit.
We let $h[t]$ denote the price history up to time slot $t$, i.e., all price profiles during the previous time slots, 
\begin{equation*}
h[t] \eq [\p[0],\p[1],\ldots, \p[t-1]]. 
\end{equation*}
Then, the conjectured long-term discounted total profit of seller $s$ in time slot $t$ is
\begin{equation}\label{eq:RGpi}
\Pi_s(h[t],\r[t]) \eq \sum_{t'=t}^\infty \delta_s^{t'-t} \pi_s(\p[t'],\r[t]).
\end{equation}
Here $\delta_s\in [0,1)$ is the time discount factor of seller $s$ \cite{Srikant}, and $\pi_s(\p[t'],\r[t])$ is the profit of seller $s$ in time slot $t'$ assuming that the sellers' reputation profile remains the same as $\r[t]$ in time slot $t$.

We model the sellers' infinitely repeated competition as a dynamic game as follows. 
\begin{game}[Dynamic Seller Competition Game with Unlimited Capacity]\label{DSCuc}
$\mbox{}$
\begin{itemize}
\item Players: the set $ \S $ of sellers. 
\item Strategies: each seller $s\in\mathcal{S}$ chooses the price $p_s[t]$ in each time slot $t\in[0,...\infty)$. 
\item Histories: the price profile history $h[t]$ till time slot $t$, for each $t\in[0,...\infty)$. 
\item Payoffs: the conjectured long-term discounted total profit of each seller $\Pi_s(h[t],\r[t]), \forall s \in \S$, for each $t\in[0,...\infty)$. 
\end{itemize}
\end{game}

Next we characterize the subgame perfect Nash equilibrium (SPNE) of Game \ref{DSCuc}. 
According to the Folk Theorem \cite{Srikant}, any feasible and individually rational strategy can become an equilibrium in the infinitely repeated game under proper discount factors $\delta_s, \forall s\in\S$. 
Depending on the highest reputation seller's minmax profit in the single-slot game, i.e., the profit under NE $\p^{\ast \rm NE-U}$ in Theorem \ref{theo:NES}, we discuss two types of SPNEs of the dynamic game which correspond to two different cases in the dynamic game: (i) a monopoly market where the seller with the highest reputation dominates the market, and (ii) a multi-seller market where a subset of sellers with relatively high reputations survive at the SPNE.

\subsubsection{Monopoly Market}

As described in Theorem \ref{theo:NES} for the single-slot game, only the seller with the highest reputation can win the seller competition game and achieve a positive profit. 
Furthermore, if the highest reputation seller sets his monopoly price at the NE, he can achieve a profit equal to his maximum profit achieved in a monopoly market. 
Now we turn to the dynamic game case. 
We will show that if the highest reputation seller is able to choose his monopoly price at the single-slot game NE, he will choose the same monopoly price at the SPNE of the dynamic game in the infinite time horizon.

\begin{theorem}\label{theo:NErNE}
If in time slot $t$,
\begin{equation}\label{eq:rmrn}
r_{\rm{max}}[t] > \frac{\rho \left(r_{\rm{sec}}[t]\right)^2}{c},
\end{equation}
then the unique SPNE of Game \ref{DSCuc} is that all sellers choose the price profile according to the NE of the single-slot game (as in Theorem \ref{theo:NES}) starting from time slot $t$, regardless of the history $h[t]$. 
\end{theorem}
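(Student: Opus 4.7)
The plan is to exploit the fact that, under the hypothesis $r_{\max}[t] > \rho (r_{\sec}[t])^2 / c$, the single-slot NE already awards the highest-reputation seller his \emph{monopoly} profit --- the theoretical maximum per-slot profit any seller can earn --- thereby eliminating the room that Folk-Theorem reasoning would usually create for alternative SPNEs.

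First I would invoke Corollary 1 together with Theorem 1: under (13), the single-slot NE prescribes that the highest-reputation seller $s^{\max}$ sets his monopoly price $\sqrt{c \rho\, r_{\max}[t]}$ and earns his monopoly profit, while every other seller sets $p_s = c$ and earns zero. Next I would note that, because every seller conjectures reputations to remain fixed at $\r[t]$ for all future slots, the conjectured continuation problem in each slot $t'\ge t$ is an identical copy of the single-slot game; in particular, the monopoly profit of $s^{\max}$ is a per-slot upper bound on every seller's payoff in any slot $t'\ge t$, independently of the history $h[t]$.

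The core step is pinning down $s^{\max}$'s equilibrium play. I would argue that choosing $p_{s^{\max}}[t'] = \sqrt{c \rho\, r_{\max}[t]}$ in every slot is a best response on every subgame: this attains the monopoly profit in each slot, since (by the calculation underlying Corollary 1) condition (13) forces $r_i \sqrt{c\rho/r_{\max}[t]} \le c$ for every $i \neq s^{\max}$, so no other seller can match $s^{\max}$'s reputation-price ratio without pricing strictly below cost --- an option dominated by $p_i = c$. Since the monopoly profit is the largest per-slot profit physically available, no history-dependent strategy can do strictly better; and because the monopoly profit is uniquely attained (as a strictly concave single-variable maximum) by the price $\sqrt{c\rho\, r_{\max}[t]}$, $s^{\max}$'s on-path action is pinned down in every SPNE and every subgame. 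Given this, each remaining seller $s \in \S \setminus \{s^{\max}\}$ faces zero expected demand for every $p_s \ge c$, and the iterated best-response argument already used in the proof of Lemma 5 then forces $p_s[t'] = c$. This reproduces the single-slot NE profile of Theorem 1 in every slot, independently of $h[t]$.

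The main obstacle I expect is the uniqueness claim against the usual Folk-Theorem threats: in generic infinitely repeated games, any feasible, individually rational payoff vector is sustainable as an SPNE through punishment strategies. The resolution, which I would emphasize, is that condition (13) collapses $s^{\max}$'s minmax payoff and his monopoly payoff into a single value, so his individually rational set is a singleton and there is no surplus for punishment schemes to redistribute across slots. Once $s^{\max}$'s per-slot price is pinned down, the residual subgame among the low-reputation sellers is payoff-trivial and closes via the Bertrand-style argument of Lemma 5.
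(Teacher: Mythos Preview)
Your proposal is correct and follows essentially the same line as the paper's own proof: under condition \eqref{eq:rmrn} the highest-reputation seller already attains his monopoly profit at the single-slot NE (Corollary \ref{coro:monopoly}), which is the per-slot maximum available to any seller, so no history-dependent strategy can improve on it and the Folk-Theorem slack collapses; the remaining sellers are then pinned to $p_s=c$ as in the static analysis. Your write-up is in fact more careful than the paper's short appendix argument --- you make explicit why the minmax and monopoly payoffs of $s^{\max}$ coincide and why this kills punishment schemes --- though the Bertrand-style step for the low-reputation sellers is really the argument of Lemma \ref{lemma:NES2} (single $r_{\max}$) rather than Lemma \ref{lemma:NES1}.
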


\begin{proof}
See Appendix \ref{app:theo_NErNE}. 
\end{proof}

Theorem \ref{theo:NErNE} describes a monopoly market where the seller with the highest reputation dominates the market.

Based on our previous discussions, we know that as the seller reputations evolve over time, the bounded rational sellers might derive different SPNEs over time. 
However, we can show that as long as condition \eqref{eq:rmrn} is satisfied in a time slot $t$, it will always be satisfied in all later time slots. 
This means that Theorem \ref{theo:NErNE} will remain true for all time slots $t'>t$. 
Meanwhile, the reputation of all sellers will evolve as follows:

\begin{theorem}\label{theo:RepEvoMono}
If condition \eqref{eq:rmrn} is satisfied in time slot $t$, the reputation evolution process described in \eqref{eq:qsNES}, \eqref{eq:pNES}, and \eqref{eq:repdyn} starting from time slot $t$ converges to a stable state where 
\begin{equation*}
\begin{aligned}
& r_{s^{\max}}[\infty]=1,\\
& r_{s}[\infty]=0, \forall s\in\S,s\neq s^{\max}.
\end{aligned}
\end{equation*}
\end{theorem}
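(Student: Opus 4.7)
The plan is to use Theorem \ref{theo:NErNE} to collapse the SPNE play onto a deterministic ratchet on the transaction counts $X_s[t']$ for $t' \geq t$, and then read off the limit from \eqref{eq:reputation}. First I would show that condition \eqref{eq:rmrn} is self-perpetuating once it holds at $t$. By Theorem \ref{theo:NErNE}, in slot $t$ the sellers play the single-slot NE of Theorem \ref{theo:NES}, and by Lemma \ref{lemma:sellerSelection} together with \eqref{eq:qsNES} all demand is concentrated on the unique highest-reputation seller $s^{\max}$: $q_s[t]=0$ for every $s\neq s^{\max}$, while $q_{s^{\max}}[t]>0$. Substituting into \eqref{eq:repdyn} gives $X_s[t+1]=X_s[t]$ for every $s\neq s^{\max}$ and $X_{s^{\max}}[t+1]>X_{s^{\max}}[t]$. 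Consequently the non-maximum reputations are all rescaled by the common factor $\sum_i \omega_i X_i[t]/\sum_i \omega_i X_i[t+1]<1$, so their relative ordering is preserved, the identity of the second-highest seller does not change, $r_{\sec}[t+1]\leq r_{\sec}[t]$, and $r_{\max}[t+1]\geq r_{\max}[t]$. Hence condition \eqref{eq:rmrn} continues to hold at $t+1$, and the argument bootstraps to all $t'\geq t$.

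Next I would establish that $X_{s^{\max}}[t']\to\infty$. Since condition \eqref{eq:rmrn} holds at every subsequent slot, Corollary \ref{coro:monopoly} implies the unique top seller charges his monopoly price $p_{s^{\max}}[t']=\sqrt{c\rho\, r_{s^{\max}}[t']}$ with $S_r=1$. Plugging this into \eqref{eq:qsNES} the expected per-slot transaction count of $s^{\max}$ equals
\begin{equation*}
\Bigl(\sum_{k\geq 1} k P_k\Bigr)\Bigl(\sqrt{\rho/(c\, r_{s^{\max}}[t'])}-1/r_{s^{\max}}[t']\Bigr).
\end{equation*}
Because $r_{s^{\max}}[t']$ is monotone non-decreasing and bounded above by $1$, and because $\rho\, r_{s^{\max}}[t]>c$ is implicit (otherwise the monopoly price would not yield any positive demand at time $t$), this quantity is bounded below by a strictly positive constant uniformly in $t'$. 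A standard law-of-large-numbers argument applied to the i.i.d.\ buyer arrivals then yields $X_{s^{\max}}[t']\to\infty$ almost surely.

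Finally, the conclusion follows directly from \eqref{eq:reputation}: for every $s\neq s^{\max}$ the numerator $\omega_s X_s[t']$ is pinned at $\omega_s X_s[t]$ while the denominator $\sum_i \omega_i X_i[t']$ diverges with $X_{s^{\max}}[t']$, giving $r_s[\infty]=0$; since the reputations sum to $1$ at every slot, $r_{s^{\max}}[\infty]=1$.

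The main obstacle I anticipate is the probabilistic step: making precise the almost-sure divergence of $X_{s^{\max}}[t']$ under random buyer arrivals (rather than merely the divergence of its mean), and cleanly justifying the implicit positivity assumption $\rho\, r_{s^{\max}}[t]>c$ that is needed for the monopoly price to produce positive per-slot transactions. The remainder of the proof is a direct computation from \eqref{eq:repdyn} and the monotonicity established in the first step.
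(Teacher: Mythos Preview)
Your proposal is correct and follows the same core idea as the paper's proof: under the monopoly NE all non-top sellers receive $q_s=0$, so their $X_s$ freeze while $X_{s^{\max}}$ grows, and the reputation formula \eqref{eq:reputation} then forces the claimed limits. The paper's own proof in Appendix~\ref{app:theo_RepEvoMono} is actually terser than yours: it simply notes that $q_s^\ast[t]=0$ for $s\neq s^{\max}$ implies $r_s[t]$ is monotone decreasing (hence $r_s[\infty]=0$) and $r_{s^{\max}}[t]$ is monotone increasing (hence $r_{s^{\max}}[\infty]=1$), without separately arguing the self-perpetuation of \eqref{eq:rmrn}, the almost-sure divergence of $X_{s^{\max}}[t']$ under stochastic arrivals, or the positivity condition $\rho\, r_{\max}[t]>c$ that you flag. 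So your route is not different in spirit, but it is more careful: the paper treats the monotone convergence as self-evident, whereas you supply the missing quantitative step (uniform positive lower bound on per-slot demand plus a law-of-large-numbers argument) that actually pins the limit at $0$ and $1$ rather than merely establishing monotonicity.
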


\begin{proof}
See Appendix \ref{app:theo_RepEvoMono}. 
\end{proof}

Intuitively, when the seller with the highest reputation dominates the market starting from time slot $t$, it will dominate the market and get a positive demand from all buyers in all later time slots. 
Hence, as time evolves, the highest reputation seller's reputation keeps monotonically increasing, and all other sellers' reputations keep monotonically decreasing. 
Finally, the highest reputation seller's reputation goes to 1 while all other sellers' reputations go to 0.

\subsubsection{Multi-seller Market}

Next we analyze the scenario when the highest reputation seller in the dynamic game can achieve a higher profit by not playing the NE of the single-slot game. 
We will first discuss the possibility of a profit maximum, i.e., a strategy whereby sellers maximize their joint profits in each time slot. 
Although the profit maximum is often NOT an equilibrium of the single-slot game, it can be enforced as part of the SPNE in the dynamic game.

Now we derive the profit maximum strategy for sellers in each time slot $t$. 
We let $s^{\max}[t]$ denote the smallest index of highest reputation sellers in time slot $t$. 
We let superscript ``NE-U'' denote the single-slot Nash equilibrium strategy in Theorem \ref{theo:NES}, and let superscript ``MS-U'' denote a price strategy in a multi-seller market in the unlimited capacity scenario defined in Lemma \ref{lemma:coop}. 
We let $S_r[t]$ denote the number of sellers with the highest reputation-price ratio, and let $\pi_{s^{\max}[t]}$ denote the single-slot profit of the highest reputation seller $s^{\max}[t]$ achieved from each buyer.

\begin{lemma}\label{lemma:coop}
Consider a price profile $\boldsymbol{p}^{\rm{MS-U}}[t]$ as follows: 
\begin{equation}\label{eq:pcoopun}
p_s^{\rm{MS-U}}[t]=\max \left\{ \frac{r_s[t]}{\varphi^{\rm{MS-U}}[t]} ,c \right \}, \forall s \in \S ,
\end{equation}
where
\begin{equation}\label{eq:phicoop}
\varphi^{\rm{MS-U}}[t] = \sqrt{\frac{r_{\rm{max}}[t]}{c \rho}}.
\end{equation}
Such a price vector $\boldsymbol{p}^{\rm{MS-U}}[t]$ is the unique pricing strategy that maximizes all sellers' total profits in a single time slot if 
\begin{equation}\label{eq:pim}
\frac{\pi_{s^{\max}[t]}^{\rm MS-U}[t]}{S_r^{\rm MS-U}[t]} \geq \frac{\pi_{s^{\max}[t]}^{\rm NE-U}[t]}{S_r^{\rm NE-U}[t]}.
\end{equation}
\end{lemma}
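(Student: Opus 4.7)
My plan is to reduce the joint-profit maximization to a two-stage optimization: first over the common reputation-price ratio $\varphi$ shared by the participating sellers, and then over the identity of the participating subset. The key enabler is Lemma \ref{lemma:sellerSelection}, which implies that only sellers attaining the maximum reputation-price ratio can earn positive profit. Hence, without loss of generality I may restrict attention to strategies in which a subset $\mathcal{S}_r \subseteq \mathcal{S}$ of sellers ties at a common ratio $\varphi$, with $p_s = r_s/\varphi$ for every $s \in \mathcal{S}_r$ (subject to the feasibility constraint $p_s \geq c$ from Game \ref{SCgame}), while sellers outside $\mathcal{S}_r$ have strictly smaller ratios and contribute zero to the sum.

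Using Lemma \ref{lemma:OptCon} to write the expected demand at each participating seller as $(\rho\varphi-1)/(|\mathcal{S}_r|\, r_s)$ per arriving buyer, the total per-buyer joint profit takes the explicit form
\begin{equation*}
\Pi(\varphi,\mathcal{S}_r)=\frac{1}{|\mathcal{S}_r|}\sum_{s\in\mathcal{S}_r}\!\left(\frac{r_s}{\varphi}-c\right)\!\frac{\rho\varphi-1}{r_s}.
\end{equation*}
A first-order analysis in $\varphi$ yields a unique interior stationary point, and I would verify it is a maximum by a second-order check. Substituting the MS-U candidate $\varphi^{\rm{MS-U}}=\sqrt{r_{\max}/(c\rho)}$ then pins the highest-reputation seller's price at $p_{s^{\max}}^{\rm{MS-U}}=\sqrt{c\rho r_{\max}}$, which is precisely $s^{\max}$'s single-seller monopoly price (Corollary \ref{coro:monopoly}). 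The formula $p_s^{\rm{MS-U}}=\max\{r_s/\varphi^{\rm{MS-U}},c\}$ simultaneously encodes two structural features: sellers whose reputation is high enough to sustain $p_s \geq c$ at the shared ratio join $\mathcal{S}_r^{\rm{MS-U}}$, while lower-reputation sellers are priced out at $p_s=c$ with ratio $r_s/c<\varphi^{\rm{MS-U}}$ and contribute nothing.

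Finally, I would invoke condition \eqref{eq:pim} to conclude uniqueness: it forces the per-seller share under MS-U to dominate the per-seller share under the single-slot NE of Theorem \ref{theo:NES}, which, combined with the first-order analysis above and the concavity of $\Pi(\cdot,\mathcal{S}_r^{\rm{MS-U}})$, rules out any alternative cooperative profile as the joint maximizer. I expect the hard part to be precisely this last step: a naive first-order condition produces an optimal $\varphi$ that depends on the harmonic mean of reputations inside $\mathcal{S}_r$ rather than on $r_{\max}$ alone, so the role of condition \eqref{eq:pim} in pinning down both $\varphi^{\rm{MS-U}}$ and the participating set $\mathcal{S}_r^{\rm{MS-U}}$—against competing alternatives that exclude some high-reputation sellers or enlarge the cooperation beyond $\mathcal{S}_r^{\rm{MS-U}}$—will need to be unpacked carefully, together with the boundary handling of the constraint $p_s\geq c$ that switches individual sellers from tying at the MS-U ratio to dropping out of the market.
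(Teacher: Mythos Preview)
Your first-order analysis of the joint profit $\Pi(\varphi,\mathcal{S}_r)$ is correct as far as it goes, and you are right that the unconstrained optimizer in $\varphi$ depends on the harmonic mean of the participating reputations rather than on $r_{\max}$ alone. But this is precisely where your plan stalls: you recognize the discrepancy and hope that condition \eqref{eq:pim} will repair it, yet \eqref{eq:pim} only compares the MS-U profile to the single-slot NE of Theorem \ref{theo:NES}. It gives no leverage for comparing MS-U against \emph{other} cooperative common-ratio profiles, so it cannot by itself force the optimal ratio to be $\sqrt{r_{\max}/(c\rho)}$ rather than the harmonic-mean value your FOC delivers.

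The missing idea, which the paper uses, is that the lemma is not an unconstrained sum-of-profits maximization; it is a maximization subject to the implicit constraint that the highest-reputation seller be \emph{willing to participate} in the common-ratio scheme rather than break away. Concretely, seller $s^{\max}$'s individual per-buyer profit at a shared ratio $\varphi$ is $\pi_{s^{\max}}(\varphi)=(1/\varphi - c/r_{\max})(\rho\varphi-1)$, a strictly concave function of $\varphi$ with unique maximizer $\varphi_{s^{\max}}^{*}=\sqrt{r_{\max}/(c\rho)}=\varphi^{\rm MS-U}$. Hence for any $\varphi<\varphi^{\rm MS-U}$, seller $s^{\max}$ would unilaterally raise his own ratio to $\varphi^{\rm MS-U}$, capture all buyers, and strictly increase his profit; the cooperative profile is therefore infeasible. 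This forces $\varphi\geq\varphi^{\rm MS-U}$. On the other side, $\varphi^{\rm MS-U}$ is the \emph{largest} of the individual optima $\sqrt{r_s/(c\rho)}$, so for $\varphi>\varphi^{\rm MS-U}$ every seller's individual profit is already on its decreasing branch, and the total profit is strictly smaller than at $\varphi^{\rm MS-U}$. This two-sided argument pins the ratio exactly at $\varphi^{\rm MS-U}$ without ever invoking \eqref{eq:pim}; the max with $c$ in \eqref{eq:pcoopun} then handles sellers whose implied price $r_s/\varphi^{\rm MS-U}$ would fall below cost. Your proposal would need to replace the pure FOC step with this incentive-constraint argument to close the gap.
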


\begin{proof}
See Appendix \ref{app:lemma_coop}. 
\end{proof}


Condition \eqref{eq:pim} indicates that the highest reputation seller can achieve a higher or equal profit under the multi-seller strategy than that under the NE of the single-slot game.

The multi-seller price vector $\boldsymbol{p}^{\rm{MS-U}}[t]$ defined in \eqref{eq:pcoopun} divides the sellers into two groups: \emph{surviving sellers} who set the prices higher than $c$, i.e., $p_s^{\rm{MS-U}}[t]={r_s[t]}/{\varphi^{\rm{MS-U}}[t]}>c$, and \emph{non-surviving sellers} who set the prices equal to the marginal cost, i.e., $p_s^{\rm{MS-U}}[t]=c$. 
Under the multi-seller strategy, all surviving sellers have the same reputation-price ratio $\varphi^{\rm{MS-U}}[t]$ defined in \eqref{eq:phicoop} and can achieve a positive profit since $p_s^{\rm{MS-U}}[t]>c$. 
In the following, we derive the condition under which a seller can achieve a positive profit under the multi-seller strategy.

\begin{corollary}\label{coro:rth}
A seller $s\in \S$ achieves a positive single-slot profit under the multi-seller price profile $\boldsymbol{p}^{\rm{MS-U}}[t]$ if and only if 
\begin{equation}\label{eq:rth}
r_s[t] > r_{\rm{th}}[t] \eq \sqrt{\frac{c r_{\rm{max}}[t]}{\rho}}.
\end{equation}
\end{corollary}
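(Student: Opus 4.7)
The plan is to derive the corollary directly from the formula for $p_s^{\rm{MS-U}}[t]$ in Lemma \ref{lemma:coop} by analyzing when the profit expression $\pi_s = (p_s - c) q_s$ can be strictly positive. Since profit is positive if and only if \emph{both} the price strictly exceeds the marginal cost and the demand is strictly positive, I would first pinpoint when the first factor is nonzero, and then verify that the second factor is automatically nonzero in that regime.

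First, I would unpack the definition $p_s^{\rm{MS-U}}[t] = \max\{r_s[t]/\varphi^{\rm{MS-U}}[t], c\}$. The condition $p_s^{\rm{MS-U}}[t] > c$ is equivalent to $r_s[t] > c \cdot \varphi^{\rm{MS-U}}[t]$. Substituting $\varphi^{\rm{MS-U}}[t] = \sqrt{r_{\rm{max}}[t]/(c\rho)}$ from \eqref{eq:phicoop} gives exactly the threshold $r_s[t] > \sqrt{c\, r_{\rm{max}}[t]/\rho} = r_{\rm{th}}[t]$. This handles the ``only if'' direction immediately: if $r_s[t] \le r_{\rm{th}}[t]$, then $p_s^{\rm{MS-U}}[t] = c$ and the profit $(p_s - c) q_s$ vanishes identically.

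For the ``if'' direction, assume $r_s[t] > r_{\rm{th}}[t]$, so that $p_s^{\rm{MS-U}}[t] = r_s[t]/\varphi^{\rm{MS-U}}[t] > c$; it remains to verify that seller $s$ actually has strictly positive expected demand $q_s^\ast > 0$. By \eqref{eq:qsNES}, this requires two things: seller $s$ must share the highest reputation-price ratio in the market, and that ratio must strictly exceed $1/\rho$. For the first, observe that every surviving seller (those with $r_i[t] > r_{\rm{th}}[t]$) has ratio exactly $\varphi^{\rm{MS-U}}[t]$, while every non-surviving seller has price $c$ and hence ratio $r_i[t]/c \le r_{\rm{th}}[t]/c = \varphi^{\rm{MS-U}}[t]$ (with equality only at the threshold, which is a measure-zero edge case resolved in favor of the strict inequality by ties). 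For the second, note that $\varphi^{\rm{MS-U}}[t] > 1/\rho$ is equivalent to $r_{\rm{max}}[t] > c/\rho$, which is in turn equivalent to $r_{\rm{th}}[t] < r_{\rm{max}}[t]$; the hypothesis $r_s[t] > r_{\rm{th}}[t]$ together with $r_s[t] \le r_{\rm{max}}[t]$ forces this, so the ratio indeed exceeds $1/\rho$.

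There is no real obstacle here: the result is essentially bookkeeping on the threshold where the outer $\max$ in \eqref{eq:pcoopun} switches branches. The only subtlety worth spelling out is the mild consistency check in the previous paragraph, namely that once $r_s[t] > r_{\rm{th}}[t]$, the common reputation-price ratio of surviving sellers automatically exceeds $1/\rho$, so that the branch $\varphi_s > 1/\rho$ in the demand formula \eqref{eq:qsNES} is active and $q_s^\ast > 0$.
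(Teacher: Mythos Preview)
Your proposal is correct and follows essentially the same approach as the paper: both identify that the profit is positive if and only if the multi-seller price $r_s[t]/\varphi^{\rm{MS-U}}[t]$ strictly exceeds the marginal cost $c$, and then rearrange to obtain the threshold $r_{\rm{th}}[t]=\sqrt{c\,r_{\rm{max}}[t]/\rho}$. Your argument is in fact more careful than the paper's, which stops at the price condition; you additionally verify that the demand term $q_s^\ast$ is strictly positive by checking that surviving sellers share the maximal reputation-price ratio and that this ratio exceeds $1/\rho$, a consistency check the paper's proof omits.
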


\begin{proof}
See Appendix \ref{app:coro_rth}. 
\end{proof}

Corollary \ref{coro:rth} implies that a seller with a low reputation will be driven out of the market in such a multi-seller market. 
We denote the set of surviving sellers who satisfy $r_s[t] > r_{\rm{th}}[t]$ by $\S_C[t]$ with a size of $S_C[t]$.

Although in general the multi-seller price profile is not an NE in the single-slot game (as a surviving seller has an incentive to decrease the price to undercut other sellers), we show that such a multi-seller price strategy can be enforced as an SPNE by a punishment strategy in the infinite-horizon dynamic game.\footnote{The dynamic seller competition game is a standard non-cooperative game where sellers are competing with each other freely and fairly. There may be multiple SPNEs, and the one in Theorem \ref{theo:NEcoop} is one SPNE under which more than one seller achieves a positive profit by setting different prices without enforcing any kind of coordination agreement. If any seller deviates from the strategy, other sellers will play the price war strategy in Theorem \ref{theo:NES} as a punishment. Hence the deviating seller will not get better off.} 
One effective punishment strategy is the Friedman punishment, where sellers revert to the NE $\p^{\ast \rm NE-U}$ in Theorem \ref{theo:NES} if anyone deviates from multi-seller price strategy \cite{Srikant}, to be explained next.\footnote{It is feasible for each seller to monitor others' public prices and reputations. Furthermore, computing the multi-seller price profile according to Lemma \ref{lemma:coop} is computationally easy.} 
We let $S_L[t]$ denote the number of sellers with the highest reputation in time slot $t$.

\begin{theorem}\label{theo:NEcoop}
Consider the following strategy profile: all sellers set the multi-seller price profile $\boldsymbol{p}^{\ast \rm SPNE-U}[t]=\boldsymbol{p}^{\rm{MS-U}}[t]$ in Lemma \ref{lemma:coop} in each time slot $t$ until a seller deviates, in which case all sellers choose the price profile according to the NE $\p^{\ast \rm NE-U}$ in Theorem \ref{theo:NES} in all future time slots. 
Such a strategy profile is an SPNE if 
\begin{align}
&\frac{\pi_{s^{\max}[t]}^{\rm{MS-U}}[t]}{S_C[t]} \geq \frac{\pi_{s^{\max}[t]}^{\rm NE-U}[t]}{S_L[t]},\label{con:pim} \mbox{ and }\\
&\delta_s > \frac{{S}_C[t]-1}{{S}_C[t]} \cdot \frac{{\pi}_s^{\rm{MS-U}}[t]}{{\pi}_s^{\rm{MS-U}}[t] - {\pi}_s^{\rm NE-U}[t]/{S}_L[t]}, \forall s \in S.\label{con:delta}
\end{align}
\end{theorem}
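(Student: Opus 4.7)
The plan is to apply the one-shot deviation principle to the candidate profile. Because the conjectured payoff in \eqref{eq:RGpi} treats $\r[t]$ as frozen, the induced repeated game is stationary and the principle reduces SPNE to checking that, at every history and for every seller $s$, no single-period deviation strictly improves $\Pi_s$.

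At any off-path history (a deviation has already occurred), the prescribed continuation is to play $\p^{\ast \rm NE-U}$ in every period, which by Theorem \ref{theo:NES} is a stage-game NE and so passes the one-shot deviation test automatically. For an on-path history in time slot $t$, I would split by seller type. A non-surviving seller $s \notin \S_C[t]$ is prescribed price $c$ and earns zero; to capture any buyer he would need $r_s[t]/p_s > \varphi^{\rm MS-U}[t]$, which via \eqref{eq:rth}-\eqref{eq:phicoop} forces $p_s < r_s[t]/\varphi^{\rm MS-U}[t] \leq r_{\rm th}[t]/\varphi^{\rm MS-U}[t] = c$, yielding non-positive current profit and (since $\pi_s^{\rm NE-U}[t]/S_L[t] = 0$) a zero continuation, so no deviation is strictly profitable. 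For a surviving seller $s \in \S_C[t]$, I would identify the best one-period deviation against $\boldsymbol{p}_{-s}^{\rm MS-U}[t]$: any $p_s \geq p_s^{\rm MS-U}[t]$ leaves $s$ at or below the common highest ratio and cannot exceed the on-path share $\pi_s^{\rm MS-U}[t]/S_C[t]$, whereas any $p_s < p_s^{\rm MS-U}[t]$ makes $s$ the unique highest-ratio seller with one-period profit proportional to $(p_s - c)(\rho/p_s - 1/r_s[t])$. This function is unimodal in $p_s$ with maximizer at the monopoly price $\sqrt{c \rho r_s[t]}$, and $r_s[t] \leq r_{\max}[t]$ gives $\sqrt{c \rho r_s[t]} \geq p_s^{\rm MS-U}[t]$, so on the feasible range the deviation profit is strictly increasing with supremum $\pi_s^{\rm MS-U}[t]$, attained in the limit $p_s \uparrow p_s^{\rm MS-U}[t]$.

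Collecting these pieces, the no-deviation inequality for a surviving seller becomes
\[
\pi_s^{\rm MS-U}[t] + \frac{\delta_s}{1-\delta_s}\cdot\frac{\pi_s^{\rm NE-U}[t]}{S_L[t]} \;\leq\; \frac{1}{1-\delta_s}\cdot\frac{\pi_s^{\rm MS-U}[t]}{S_C[t]},
\]
and direct rearrangement reproduces condition \eqref{con:delta}. Condition \eqref{con:pim} plays the auxiliary role of forcing the right-hand side of \eqref{con:delta} strictly below $1$ so that a feasible $\delta_s \in [0,1)$ exists; the binding case is $s = s^{\max}[t]$, since for any other $s \in \S_C[t]$ one has $r_s[t] < r_{\max}[t]$, hence $\pi_s^{\rm NE-U}[t] = 0$, and \eqref{con:delta} collapses to $\delta_s > (S_C[t]-1)/S_C[t]$, which is automatic. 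The main obstacle is the surviving-seller deviation analysis, specifically placing the supremum of the one-period profit at the boundary $p_s \uparrow p_s^{\rm MS-U}[t]$ rather than in the interior of $(c,\,p_s^{\rm MS-U}[t])$; this step uses the structural inequality $\sqrt{c \rho r_s[t]} \geq p_s^{\rm MS-U}[t]$ together with the unimodality of $(p_s - c)(\rho/p_s - 1/r_s[t])$, after which the remaining steps are standard folk-theorem algebra matching \eqref{con:delta}.
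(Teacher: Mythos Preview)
Your proposal is correct and follows essentially the same route as the paper: invoke the one-shot deviation principle, identify the best one-period deviation for a surviving seller as a slight undercut yielding total per-period profit $\pi_s^{\rm MS-U}[t]$, and rearrange the resulting inequality to obtain \eqref{con:delta}. The paper's own argument is in fact considerably terser than yours---it writes down the cooperate-versus-deviate inequality directly and solves for $\delta_s$---whereas you additionally (i) verify the off-path subgame, (ii) handle the non-surviving sellers explicitly, (iii) justify via the unimodality of $(p_s-c)(\rho/p_s-1/r_s)$ and the inequality $p_s^{\rm MS-U}[t]\le\sqrt{c\rho r_s[t]}$ that the deviation supremum sits at the boundary, and (iv) explain the role of \eqref{con:pim} in making \eqref{con:delta} satisfiable by some $\delta_s<1$. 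These are genuine clarifications rather than a different method.
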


\begin{proof}
See Appendix \ref{app:theo_NEcoop}. 
\end{proof}

Condition \eqref{con:pim} indicates that the highest reputation seller can achieve a higher or equal profit under the multi-seller strategy than that under the NE of the single-slot game, and hence is willing to play the multi-seller strategy at SPNE. 
Condition \eqref{con:delta} indicates that sellers are sufficiently patient, and hence can achieve higher profits by playing the multi-seller strategy in the long run at the SPNE than deviating from it.

\section{Limited Capacity and One Buyer Per Seller Scenario}\label{sec:limitedone}

In this section, we look at a different scenario where each seller has a limited capacity and can only serve one buyer in each time slot. 
This scenario is motivated by practical online sharing economy platforms such as Airbnb, where a host of one apartment typically only serves one buyer (corresponding to one traveling group) at any given time. 
Let $b_s$ denote the capacity of seller $s$. 
The limit of serving one buyer does not imply that $b_s=1$, as a single buyer can request multiple products (e.g., one family using two rooms in an apartment). 
Later in Section \ref{sec:limitedmulti}, we further consider the scenario where each seller still has limited capacity but can serve multiple buyers in each time slot.

Similar to Section \ref{sec:unlimited}, we first analyze the single-slot game and then analyze the dynamic game.

\subsection{Single-Slot Game Analysis}\label{sec:limitedSingle}

Recall that the single-slot game as shown in Figure 2(b) is a Stackelberg game.   
We first analyze the buyers' purchasing decision in Stage II, and then study the sellers' pricing decisions in Stage I. 
Since we focus on each single time slot, we will suppress the time index $t$ in Section \ref{sec:limitedSingle}. 
We will bring the index $t$ back in Section \ref{subsec:DGLC}.

\subsubsection{Buyers' Purchasing Decisions in Stage II}

In line with the analysis in Section \ref{sec:unlimitedSingle}, each buyer prefers to choose the seller with the highest reputation-price ratio. 
However, the conclusion in Section \ref{sec:unlimitedSingle} is not directly applicable here, as each seller can only serve one buyer in each time slot. 
Once a seller is chosen by a buyer, other buyers cannot choose the same seller in the same time slot.

\subsubsection{Sellers' Pricing Decisions in Stage I}

In the following, we first define the seller competition game, and then analyze the NE considering a simple case with a small number of sellers. 
We finally analyze the NE for the general case.

To describe the seller competition game, we first calculate a seller's profit which is a product term of the probability of the seller being chosen by a buyer and the consumption amount of the corresponding buyer. 
Since each seller has a limited capacity and can only serve one buyer in each time slot, a seller's profit depends on not only his reputation-price ratio but also the buyer arrival process. 
This is the key difference between the analysis here and the analysis of the unlimited capacity case in Section \ref{sec:unlimited}.

Recall that we denote the probability of having $k$ buyers arriving in a time slot by $P_k$. 
When this happens, a seller within the highest $k$ reputation-price ratios will be able to serve a buyer. 
We let $g_s^{\rm th}(p_s,p_{-s})$ denote the rank of the reputation-price ratio of seller $s\in\S$, i.e., seller $s$ has the $g_s^{\rm th}(p_s,p_{-s})$-th highest reputation-price ratio and he can serve a buyer if no fewer than $g_s^{\rm th}(p_s,p_{-s})$ buyers arrive in the market in a time slot. 
For example, if seller $s$ has the $4$-th highest reputation-price ratio, then $g_s^{\rm th}(p_s,p_{-s})=4$.

Now we calculate the probability of seller $s$ being chosen by a buyer.
We define the conditional probability $E_s(p_s,p_{-s}|k)$ as the probability of seller $s$ being chosen by a buyer if the total number of buyers arriving in the time slot is $k$, which can be calculated as:
\begin{equation}
E_s(p_s,p_{-s}|k)=
\left\{
\begin{aligned}
& 1, \mbox{ if } k \geq g_s^{\rm th}(p_s,p_{-s});
\\
& 0, \mbox{ otherwise}.
\end{aligned}
\right.
\end{equation}
We let $\mathbb{E}_s(p_s,p_{-s})$ denote the (unconditional) probability of seller $s$ being chosen by a buyer given buyers' random arrivals. 
The value of $\mathbb{E}_s(p_s,p_{-s})$ depends on the prices and the buyer arrival distribution: 
\begin{equation}
\mathbb{E}_s(p_s,p_{-s}) = 1- \sum_{k=0}^{g_s^{\rm th}(p_s,p_{-s})-1}P_k.
\end{equation}

If seller $s$ is chosen by a buyer, the buyer's optimal consumption amount (as calculated in Lemma \ref{lemma:Optx1}) $x_{s}^\ast=\rho/p_s-1/r_{s}$ reaches the seller's capacity $b_s$ when $p_s \leq \rho/(b_s+1/r_s)$. 
In this case, the profit of seller $s$ obtained from the buyer is $(p_s-c)b_s$, which increases with the price $p_s$. 
Hence it is never optimal for the seller to choose a price that is strictly smaller than $\rho/(b_s+1/r_s)$. 
Combining the fact that the price $p_s$ should be no smaller than the marginal cost $c$, we have the following lower bound for $p_s$:
\begin{equation}\label{eq:psmin}
p_s \geq p_s^{\min} \eq \max \left\{ c,\frac{\rho}{b_s+1/r_s} \right\}.
\end{equation}
Hence, the profit of each seller $s\in\S$ is 
\begin{equation}\label{eq:piLO}
\pi_s(p_s,p_{-s})=(p_s-c)\left(\frac{\rho}{p_s}-\frac{1}{r_s}\right)\mathbb{E}_s(p_s,p_{-s}), 
\end{equation}
for $p_s \in \left[ p_s^{\min},\rho r_s \right]$.

Sellers compete to attract buyers, and we can model their interactions as a game.
\begin{game}[Single-Slot Static Seller Competition Game with Limited Capacity and One Buyer per Seller]\label{SCgameCLRA}
\emph{
\begin{itemize}
\item Players: the set $ \S $ of sellers. 
\item Strategies: each seller $s\in\mathcal{S}$ chooses a price $p_s \in \left[ p_s^{\min},\rho r_s \right]$, where $p_s^{\min}$ is defined in \eqref{eq:psmin}. 
\item Payoffs: each seller $s\in\mathcal{S}$ obtains a profit $\pi_s(p_s,p_{-s})$ in \eqref{eq:piLO}. 
\end{itemize}
}
\end{game}

Now we analyze the NE of Game \ref{SCgameCLRA}, starting from the simple example of three sellers. 
Then we will present the analysis for more general cases.

In the three-seller example, without loss of generality, we assume that $r_1 \geq r_2 \geq r_3$. 
The NE of the three-seller example can be described in three cases, depending on whether a lower reputation seller has an incentive to set a low price to compete with a higher reputation seller. 
Intuitively, all three sellers will set their monopoly prices at NE when the reputation gap between each pair of two adjacent sellers is relatively large, and hence lower reputation sellers cannot benefit by reducing their prices from the monopoly prices. 
However, if the reputation gap between two adjacent sellers is small, then the analysis will be more complicated.

We let $\pi_s^{\rm mon}$ denote the profit of seller $s\in\S$ when all sellers choose their monopoly prices. 
We let $\pi_s^{\rm dev}$ denote the profit of seller $s$ achieved by setting a low price $p_s^{\rm dev}=\frac{r_s}{r_{s-1}}p_{s-1}^{\rm mon}-\varepsilon$ to compete with seller $s-1\in\S$ when all other sellers choose their monopoly prices.

\begin{lemma}\label{lemma:3seller}
The unique Nash equilibrium of three-seller Game \ref{SCgameCLRA} where $r_1 \geq r_2 \geq r_3$ is a price profile $\boldsymbol{p}^{\ast \rm NE-LO}$ such that:\footnote{The superscript ``LO'' represents limited capacity and one buyer per seller.}
\begin{itemize}
\item Case I: 
If
\begin{equation}\label{eq:3sellercon1}
\pi_2^{\rm mon} \geq \pi_2^{\rm dev}, \pi_3^{\rm mon} \geq \pi_3^{\rm dev},
\end{equation}
then
\begin{equation}\label{eq:3sellerNE1}
\begin{aligned}
&p_1^{\ast \rm NE-LO} = \sqrt{c \rho r_1}, \\
&p_2^{\ast \rm NE-LO} = \sqrt{c \rho r_2}, \\
&p_3^{\ast \rm NE-LO} = \sqrt{c \rho r_3}.
\end{aligned}
\end{equation}
\item Case II: 
If
\begin{equation}\label{eq:3sellercon3}
\pi_2^{\rm mon} < \pi_2^{\rm dev}, 
\pi_3^{\rm mon} \geq \pi_3^{\rm dev},
\end{equation}
then
\begin{equation}\label{eq:3sellerNE2}
\begin{aligned}
&p_1^{\ast \rm NE-LO} = \max \left \{  \min \left\{ \sqrt{c \rho r_1}, \frac{r_1}{r_2}p_2^{\ast \rm NE-LO}-\varepsilon \right\}, p_1^{\min}  \right \},\\
&  p_2^{\ast \rm NE-LO} = \max \left\{ \min \left\{ \sqrt{c \rho r_2}, \frac{r_2}{r_3}p_3^{\ast \rm NE-LO}-\varepsilon \right\}, p_2^{\min} \right\}, \\
&  p_3^{\ast \rm NE-LO} = \sqrt{c \rho r_3}.
\end{aligned}
\end{equation}
\item Case III: 
If
\begin{equation}\label{eq:3sellercon4}
\pi_3^{\rm mon} < \pi_3^{\rm dev},
\end{equation}
then
\begin{equation}\label{eq:3sellerNE3}
\begin{aligned}
& p_1^{\ast \rm NE-LO} = \max \left\{  \min \left\{ \sqrt{c \rho r_1}, \frac{r_1}{r_2}p_2^{\ast \rm NE-LO}-\varepsilon \right\}, p_1^{\min}  \right\},  \\
&p_2^{\ast \rm NE-LO} = \max \left \{  \min \left\{ \sqrt{c \rho r_2}, \frac{r_2}{r_3}p_3^{\ast \rm NE-LO}-\varepsilon \right\},  p_2^{\min}  \right\}, \\
& p_3^{\ast \rm NE-LO} = p_3^{\min}.
\end{aligned}
\end{equation}
\end{itemize}
\end{lemma}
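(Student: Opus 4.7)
The plan is to prove the three cases by a standard best-response / deviation analysis, exploiting two monotonicity facts and the "step-function" nature of the success probability. First I would establish a key preliminary: since $r_1 \geq r_2 \geq r_3$ and the monopoly price $\sqrt{c\rho r_s}$ is increasing in $r_s$, the reputation-price ratio $r_s/\sqrt{c\rho r_s} = \sqrt{r_s/(c\rho)}$ is also increasing in $r_s$. Consequently, when every seller posts her own monopoly price, the ranking of reputation-price ratios coincides with the ranking of reputations, so seller $s$ has rank $s$ and is served with probability $\mathbb{E}_s = 1-\sum_{k=0}^{s-1}P_k$.

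Next, I would characterize each seller's best-response structure. Given the rivals' prices, seller $s$'s profit is the product of the per-buyer profit $(p_s-c)(\rho/p_s - 1/r_s)$ (which is strictly quasi-concave in $p_s$ and uniquely maximized at the monopoly price $\sqrt{c\rho r_s}$) and the rank-dependent success probability $\mathbb{E}_s$, which is a left-continuous step function that jumps upward only when $p_s$ crosses the threshold $(r_s/r_j)p_j$ for some higher-rank rival $j$. Therefore the best response is always either (i) the unconstrained maximizer $\sqrt{c\rho r_s}$ at the current rank, (ii) a slight undercut $(r_s/r_{j})p_{j}-\varepsilon$ that promotes seller $s$ past seller $j$, or (iii) the feasibility floor $p_s^{\min}$ from \eqref{eq:psmin}; the choice among these is settled by comparing the resulting profits.

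For Case I, condition \eqref{eq:3sellercon1} asserts that neither seller 2 nor seller 3 strictly gains by jumping rank when everyone else plays monopoly, so the monopoly profile \eqref{eq:3sellerNE1} is a mutual best response (seller 1 never wants to undercut anyone). For Case II, condition \eqref{eq:3sellercon3} forces seller 2 to undercut seller 1 whenever seller 1 posts monopoly, while seller 3 still prefers to play monopoly; seller 1 then responds by undercutting seller 2 just enough to retain rank 1, producing the nested $\min\{\sqrt{c\rho r_s},(r_s/r_{s-1})p_{s-1}-\varepsilon\}$ chain in \eqref{eq:3sellerNE2}. For Case III, condition \eqref{eq:3sellercon4} triggers a full Bertrand-style cascade: seller 3 keeps trying to undercut seller 2's ratio, seller 2 retaliates, and this undercutting chain is halted only by the lower bound $p_3^{\min}$ (where either the capacity $b_3$ binds or $p_3 = c$), giving \eqref{eq:3sellerNE3}. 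In each case, the outer $\max\{\cdot, p_s^{\min}\}$ is simply the projection onto the feasible strategy set.

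The main obstacle will be the uniqueness argument in Cases II and III, because the prices are defined by a chain of nested best responses that are coupled through discontinuous jumps in $\mathbb{E}_s$. My plan is to proceed bottom-up: show first that the lowest-reputation seller's equilibrium price is pinned down (to $\sqrt{c\rho r_3}$ in Case II, to $p_3^{\min}$ in Case III) by eliminating every alternative via a strict-improvement deviation, then propagate this upward to seller 2, and finally to seller 1, at each step using the fact that the profit-per-buyer function is strictly quasi-concave so the best response against a fixed rival profile is unique up to the $\varepsilon$-undercut convention. A secondary subtlety is verifying that the $\varepsilon$-perturbations in \eqref{eq:3sellerNE2}–\eqref{eq:3sellerNE3} remain compatible with the bounds $p_s \in [p_s^{\min},\rho r_s]$ and preserve the intended rank ordering; this will be handled by a direct check using the hypotheses $r_1 \geq r_2 \geq r_3$ and the definitions of $\pi_s^{\rm mon}$ and $\pi_s^{\rm dev}$.
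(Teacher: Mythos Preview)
Your proposal is correct and follows essentially the same approach as the paper's proof: a case-by-case best-response/deviation analysis that first verifies the stated profile is a Nash equilibrium (no seller gains by undercutting or by moving to monopoly) and then argues uniqueness by showing any other profile admits a profitable deviation. Your explicit trichotomy for the best-response structure and the bottom-up uniqueness propagation (pinning down seller~3, then seller~2, then seller~1) are somewhat more systematic than the paper's informal ``if the ordering holds but prices differ / if the ordering fails'' contradiction argument, but the underlying logic is the same.
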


\begin{proof}
See Appendix \ref{app:lemma_3seller}. 
\end{proof}

In Case I, condition \eqref{eq:3sellercon1} implies that seller $2$ and seller $3$ cannot improve their profits by lowering their prices. 
Hence \eqref{eq:3sellerNE1} indicates that each seller sets his monopoly price at the NE. 
In Case II, condition \eqref{eq:3sellercon3} indicates that seller $2$ wants to lower his price while seller $3$ does not. 
Hence at the NE as in \eqref{eq:3sellerNE2}, seller $3$ still sets his monopoly price, while sellers $1$ and $2$ engage in the price competition. 
In Case III, condition \eqref{eq:3sellercon4} indicates that even the lowest reputation seller $3$ wants to lower his price. 
Hence at the NE as in \eqref{eq:3sellerNE3}, seller $3$ lowers his price to his minimum level, and sellers $1$ and $2$ engage in the price competition involving all three sellers.

We now analyze the general case where $S \geq 2$, motivated by our analysis of the $S=3$ case. 
Without loss of generality, we assume $r_1 \geq r_2 \geq \cdots \geq r_S$. 
We propose the Monopoly Sequential Adjusting Algorithm (Algorithm \ref{algo:SAA}) to derive the NE of Game \ref{SCgameCLRA}. 
Based on the analysis of the three-seller example, we know that the strategy that every seller sets the monopoly price, i.e., $p_s^\ast=\sqrt{c \rho r_s},~(\forall s\in\S)$, is the Nash equilibrium if the reputation gaps among adjacent indexed sellers are relatively large (Line 1 of Algorithm \ref{algo:SAA}). 
However, if the reputation gap is small for some adjacent sellers such that a seller $i~(i\in\S)$ can earn more profit by setting a low price (Lines 4, 5, 6, 7 of Algorithm \ref{algo:SAA}), then all sellers having equal or higher reputations (i.e., sellers $1,2,\ldots,i$) will need to adjust their prices accordingly and engage in the price competition to reach the competitive NE (Lines 8, 9, 10 of Algorithm \ref{algo:SAA}). 
Game \ref{SCgameCLRA} is a complete information game where each seller knows all sellers' reputations $\boldsymbol{r}$, capacities $\boldsymbol{b}$, and marginal cost $c$. 
Every seller can independently compute the equilibrium by running Algorithm \ref{algo:SAA} locally without additional information exchange.

\begin{theorem}\label{theo:RandCLONE}
The price profile derived by Algorithm \ref{algo:SAA} is the unique Nash equilibrium of Game \ref{SCgameCLRA}.
\end{theorem}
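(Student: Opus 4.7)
My plan is to prove both existence and uniqueness by induction on the number of sellers $S$, using the three-seller case of Lemma \ref{lemma:3seller} as the conceptual template. The core observation is that each seller's per-buyer profit $(p_s - c)(\rho/p_s - 1/r_s)$ is strictly concave in $p_s$ with its unique unconstrained maximizer at the monopoly price $\sqrt{c\rho r_s}$, while the probability $\mathbb{E}_s(p_s,p_{-s})$ of being chosen by a buyer is a step function determined solely by the rank of seller $s$'s reputation-price ratio $r_s/p_s$. Consequently any best response of seller $s$ takes one of only two forms: keep his current rank by setting the (constrained) monopoly price $\max\{p_s^{\min},\sqrt{c\rho r_s}\}$, or cut the price just enough to strictly undercut the reputation-price ratio of a higher-reputation seller and thereby jump to a strictly better rank.

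First, I would argue that Algorithm \ref{algo:SAA} partitions $\S$ into two contiguous tiers: a \emph{monopoly tier} of lower-reputation sellers whose reputation gaps to the next-higher-indexed seller are wide enough that undercutting is strictly unprofitable, and a \emph{competitive tier} of higher-reputation sellers engaged in sequential undercutting that terminates at some pivot index $i^{\ast}$. The pivot seller $i^{\ast}$ is pushed down to $p_{i^{\ast}}^{\min}$, and each seller $s < i^{\ast}$ in the competitive tier sets $p_s = \max\{p_s^{\min},\min\{\sqrt{c\rho r_s},(r_s/r_{s+1})p_{s+1}-\varepsilon\}\}$, exactly mimicking the Case III structure of Lemma \ref{lemma:3seller}. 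This tier decomposition is precisely what Algorithm \ref{algo:SAA} computes through its sequential adjustment: the initial pass assigns monopoly prices, and each subsequent triggered adjustment absorbs one more seller into the competitive tier.

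For the equilibrium verification, I would show that no seller can profitably deviate from the algorithm's output $\boldsymbol{p}^{\ast \rm NE-LO}$. A monopoly-tier seller cannot raise his price (per-buyer profit is already at its peak), and cannot lower it profitably by the very condition $\pi_s^{\rm mon} \geq \pi_s^{\rm dev}$ that placed him in this tier. A competitive-tier seller cannot raise his price without losing a rank (and the rank loss outweighs the per-buyer gain, since otherwise the sequential adjustment would have terminated earlier), nor lower it further (he is already at the minimum price consistent with beating his target lower neighbor's reputation-price ratio). Uniqueness then follows by sweeping from the lowest-reputation seller upward: the pivot seller's price is pinned by the monopoly-versus-deviation comparison, and given the subprofile for sellers $i^{\ast},\ldots,S$, each higher-indexed seller's best response is uniquely determined and coincides with the algorithm's output by the same case analysis as in Lemma \ref{lemma:3seller}.

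The main obstacle will be managing the discontinuity of $\mathbb{E}_s$ together with the capacity floor $p_s^{\min}$. Because the profit function \eqref{eq:piLO} is neither continuous nor quasi-concave in $p_s$, standard fixed-point existence arguments do not apply, and I must verify non-deviation against all discrete rank-jump alternatives. The delicate part is ensuring that the competitive chain is globally consistent: when one seller's adjustment triggers adjustments above, I must rule out the scenario in which some intermediate seller would prefer to drop all the way to his own $p_s^{\min}$ and restart the propagation from himself, which would contradict Algorithm \ref{algo:SAA}'s sequential construction. I expect this consistency to follow from a monotonicity argument exploiting the sorted order $r_1 \geq r_2 \geq \cdots \geq r_S$ and the explicit form of $\pi_s^{\rm mon}$ and $\pi_s^{\rm dev}$, but formalizing it cleanly for arbitrary $S$ will be the technical crux.
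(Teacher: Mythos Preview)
Your approach is correct and shares the paper's high-level strategy: split according to whether Algorithm~\ref{algo:SAA} returns the all-monopoly profile or a mixed profile, then verify that no seller can profitably deviate in either direction (upward loses rank, downward loses per-buyer profit), with uniqueness following by the same case analysis as in Lemma~\ref{lemma:3seller}. The paper's own proof is much terser than yours: it handles the all-monopoly case in a few lines (monopoly price maximizes per-buyer profit, and the algorithm's exit condition $\pi_i^{\rm mon}\geq\pi_i^{\rm dev}$ rules out undercutting), and for the competitive case it simply says the argument ``follows the lines of the proof of Theorem~\ref{theo:NES}.'' Your explicit tier decomposition, bottom-up sweep for uniqueness, and identification of the rank-jump discontinuity and chain-consistency issues are all things the paper does not spell out; they are not a different route so much as a more careful execution of the same one.
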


\begin{proof}
See Appendix \ref{app:theo_RandCLONE}. 
\end{proof}

\begin{algorithm}[t]
\LinesNumbered
\SetAlgoLined
\begin{small}
\KwIn{Repuation $\boldsymbol{r}$, capacity $\boldsymbol{b}$, marginal cost $c$, product evaluation $\rho$, and a small positive $\varepsilon$}
\KwOut{Nash equilibrium pricing strategy $\boldsymbol{p}^{\ast \rm NE-LO}$}
Initiate $p_s^{\ast}=\sqrt{c \rho r_s}, \forall s \in \S$\\
Set $i=S$\\
\While{$p_i^\ast=\sqrt{c \rho r_i}$}{
Set $p_i^{\rm dev}=\frac{r_i}{r_{i-1}}p_{i-1}^{\ast}-\varepsilon$\\
$\displaystyle \pi_i^{\rm mon}=(p_i^{\ast}-c)\left(\frac{\rho}{p_i^{\ast}}-\frac{1}{r_i}\right)\left(1- \sum_{k=0}^{i-1}P_k \right)$\\
$\displaystyle  \pi_i^{\rm dev}= \left(p_i^{\rm dev}-c \right)\left(\frac{\rho}{p_i^{\rm dev}}-\frac{1}{r_i}\right)\left(1- \sum_{k=0}^{i-2}P_k \right)$\\
\If{
$\pi_i^{\rm mon}<\pi_i^{\rm dev}$
}{
\For{$j=i:-1:1$}{
\eIf{$i=S$}{
Set $p_i^\ast=p_i^{\min}$\\
}{
Set $p_j^\ast=\max \left\{  \min\left\{ \sqrt{c \rho r_j}, \frac{r_j}{r_{j+1}}p_{j+1}^\ast-\varepsilon \right\}, p_j^{\min}  \right\}$\\
}
}
}
Set $i=i-1$\\
}
Set $\boldsymbol{p}^{\ast \rm NE-LO}=\boldsymbol{p}^{\ast}$\\
\end{small}
\caption{Monopoly Sequential Adjusting Algorithm}
\label{algo:SAA}
\end{algorithm}

\subsection{Dynamic Game Analysis}\label{subsec:DGLC}

Next we will discuss the infinite-horizon dynamic game parallel to Game \ref{SCgameCLRA}. 
Different from the unlimited capacity scenario, a monopoly market will not emerge in the limited capacity and one buyer per seller scenario, since each seller can only serve one buyer in each time slot and no seller can serve the demands from all buyers.
In the following, we discuss the SPNE of the dynamic game which corresponds to a multi-seller market. 
We first derive the multi-seller strategy that can maximize sellers' joint profits in a single slot, and then analyze the SPNE of the dynamic game where the multi-seller strategy is enforced by a punishment strategy. 

In the following lemma, we derive the profit maximum strategy for sellers in each time slot $t$.

\begin{lemma}\label{lemma:cooplimitedone}
There exists a unique multi-seller price profile $\boldsymbol{p}^{\rm{MS-LO}}[t]$ that maximizes all sellers' total profits in each time slot $t$: 
\begin{equation}\label{eq:pcoop}
p_s^{\rm{MS-LO}}[t]=\sqrt{c \rho r_s[t]}, \forall s \in \S.
\end{equation}
\end{lemma}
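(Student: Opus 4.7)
The plan is to exploit the multiplicative decomposition $\pi_s(p_s, p_{-s}) = \alpha_s(p_s) \cdot \mathbb{E}_s(p_s, p_{-s})$, where $\alpha_s(p_s) \eq (p_s - c)(\rho/p_s - 1/r_s)$ depends only on seller $s$'s own price, while $\mathbb{E}_s$ depends only on the rank of the reputation-price ratio $\varphi_s = r_s/p_s$ among all sellers.

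First, I would maximize $\alpha_s$ pointwise. Its first-order condition gives the unique interior critical point $p_s = \sqrt{c\rho r_s}$; combined with $\alpha_s(c) = \alpha_s(\rho r_s) = 0$, this is the unique maximizer, and a short calculation yields $\alpha_s^{\mathrm{mon}} = (\sqrt{\rho} - \sqrt{c/r_s})^2$, which is strictly increasing in $r_s$ on the range where it is positive.

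Next, I would observe that, irrespective of prices, the multiset $\{\mathbb{E}_s : s \in \S\}$ equals the fixed strictly decreasing sequence $\{e_j\}_{j=1}^{S}$ with $e_j \eq 1 - \sum_{k=0}^{j-1} P_k$; prices only determine which seller is assigned which $e_j$. Letting $\sigma$ denote the permutation induced by the $\varphi$-ranking (seller $\sigma(j)$ having the $j$-th largest $\varphi$), the total profit equals $\sum_{j=1}^S \alpha_{\sigma(j)}(p_{\sigma(j)})\, e_j$.

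Combining two upper bounds then yields the answer: (i) each $\alpha_s \leq \alpha_s^{\mathrm{mon}}$ by the first step; (ii) by the rearrangement inequality, $\sum_{j} \alpha_{\sigma(j)}^{\mathrm{mon}} e_j$ is maximized when $\sigma$ pairs the largest $\alpha^{\mathrm{mon}}$ with the largest $e_j$, and since $\alpha_s^{\mathrm{mon}}$ is increasing in $r_s$, the optimal $\sigma^\ast$ is precisely the reputation-ranking. The candidate $p_s = \sqrt{c\rho r_s}$ attains both bounds simultaneously: each $\alpha_s$ equals $\alpha_s^{\mathrm{mon}}$, and since $\varphi_s = \sqrt{r_s/(c\rho)}$ is increasing in $r_s$, the induced $\varphi$-ranking coincides with $\sigma^\ast$. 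Uniqueness follows because any deviation $p_s \neq \sqrt{c\rho r_s}$ would strictly lower $\alpha_s$ below $\alpha_s^{\mathrm{mon}}$ by the strict uniqueness of its maximizer, strictly decreasing the total.

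The hard part will be the delicate handling of (a) ties in reputations, where $\sigma^\ast$ is non-unique as a permutation but the resulting total profit is tie-invariant, so uniqueness of the price vector still holds; and (b) corner cases in which the unconstrained maximizer $\sqrt{c\rho r_s}$ falls outside the feasible range $[p_s^{\min}, \rho r_s]$ defined in \eqref{eq:psmin}, where the boundary-projected maximizer replaces the formula and the monotonicity of $\alpha_s^{\mathrm{mon}}$ and $\varphi_s$ in $r_s$ must be re-verified locally so that the rearrangement step continues to apply.
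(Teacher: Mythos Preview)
Your proposal is correct and shares the paper's two core observations: (i) the monopoly price $\sqrt{c\rho r_s}$ uniquely maximizes the per-buyer profit factor $(p_s-c)(\rho/p_s-1/r_s)$, and (ii) at these prices the induced $\varphi$-ranking coincides with the reputation ranking. The paper's proof stops there, essentially asserting that these two facts together imply total-profit optimality.

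Your route differs in that you make explicit the missing logical link: you separate the total profit as $\sum_j \alpha_{\sigma(j)} e_j$ and invoke the rearrangement inequality to show that pairing the largest $\alpha_s^{\mathrm{mon}}$ with the largest $e_j$ is optimal among all permutations. This is a genuine strengthening: without it, one has not ruled out the possibility that some seller could sacrifice per-buyer profit to move up in rank and capture a larger $e_j$, thereby increasing the \emph{sum}. The paper's argument implicitly assumes this cannot help but never proves it; your decomposition plus rearrangement closes that gap cleanly and also yields the uniqueness claim directly from the strict concavity of $\alpha_s$. Your flagged corner cases (ties in $r_s$, and feasibility of $\sqrt{c\rho r_s}$ relative to $p_s^{\min}$) are likewise not addressed in the paper's proof, so your treatment is more complete on those fronts as well.
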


\begin{proof}
See Appendix \ref{app:lemma_cooplimitedone}. 
\end{proof}

Intuitively, with a limited supply and the capability of serving one buyer in each time slot, the strategy that each seller sets his monopoly price in every time slot is the unique strategy that maximizes sellers' joint profits. 
Different from the unlimited capacity scenario, here all sellers will set the multi-seller prices as their monopoly prices.

Although the multi-seller price profile may not be an NE in the single-slot game, we show that it can be enforced as an SPNE by a punishment strategy in the dynamic game. 
We let $\pi_s^{\rm{MS-LO}}[t]$ denote the expected profit of seller $s$ achieved under $\boldsymbol{p}^{\rm{MS-LO}}[t]$ in time slot $t$, let $\pi_s^{\rm NE-LO}[t]$ denote the profit of seller $s$ achieved at the NE of the single-slot game in time slot $t$ derived by Algorithm \ref{algo:SAA}, and let $\pi_s^{\rm{dev}}[t]$ denote the maximum expected profit that owner $s$ can achieve by unilaterally deviating from the multi-seller strategy in time slot $t$ (assuming that all other sellers choose according to the multi-seller price profile).

\begin{theorem}\label{theo:RandCLOSPNE}
Consider the following strategy profile: all sellers set the multi-seller price profile $\boldsymbol{p}^{\ast \rm SPNE-LO}[t]=\boldsymbol{p}^{\rm{MS-LO}}[t]$ in each time slot $t$ until a seller deviates, in which case all sellers choose the price profile according to the NE derived in Algorithm \ref{algo:SAA} in all future time slots. 
Such a strategy profile is an SPNE if 
\begin{equation}\label{con:deltalimitedone}
\delta_s > \frac{ {\pi}_s^{\rm{dev}}[t]-{\pi}_s^{\rm{MS-LO}}[t] }{ {\pi}_s^{\rm{dev}}[t]-{\pi}_s^{\rm{NE-LO}}[t]} , \forall s \in \S. 
\end{equation}
\end{theorem}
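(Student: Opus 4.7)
The plan is to verify the proposed strategy profile is an SPNE by applying the one-shot deviation principle to each of the two types of subgames that can arise: (i) on-path subgames in which no seller has yet deviated, so all sellers are supposed to play the multi-seller price $\boldsymbol{p}^{\mathrm{MS\text{-}LO}}[t]$, and (ii) off-path subgames in which some seller has deviated, so all sellers are supposed to revert to the single-slot NE from Algorithm~\ref{algo:SAA} forever. Case (ii) is immediate: since the prescribed continuation play is a repetition of the stage-game Nash equilibrium derived in Theorem~\ref{theo:RandCLONE}, no single-slot deviation can raise a seller's current payoff, and because future behavior is independent of the current action in this phase, no profitable one-shot deviation exists.

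For case (i), I would fix an arbitrary seller $s$ at an on-path time slot $t$ and compare two courses of action under the myopic conjecture (introduced in Section~\ref{subsec:DGUC}) that the reputation profile $\boldsymbol{r}[t]$ is held constant when computing long-term profits. Conforming yields the geometric stream with per-period payoff $\pi_s^{\mathrm{MS\text{-}LO}}[t]$, i.e.\ $\pi_s^{\mathrm{MS\text{-}LO}}[t]/(1-\delta_s)$. A one-shot deviation, on the other hand, yields at most $\pi_s^{\mathrm{dev}}[t]$ in the current slot and then, by the prescribed trigger, the stage-game NE payoff $\pi_s^{\mathrm{NE\text{-}LO}}[t]$ forever after, giving $\pi_s^{\mathrm{dev}}[t]+\delta_s\,\pi_s^{\mathrm{NE\text{-}LO}}[t]/(1-\delta_s)$. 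Setting the first expression no smaller than the second and rearranging reduces algebraically to the stated threshold
\[
\delta_s \;>\; \frac{\pi_s^{\mathrm{dev}}[t]-\pi_s^{\mathrm{MS\text{-}LO}}[t]}{\pi_s^{\mathrm{dev}}[t]-\pi_s^{\mathrm{NE\text{-}LO}}[t]},
\]
which is exactly condition~\eqref{con:deltalimitedone}. Because the inequality is checked seller by seller, and because the one-shot deviation principle guarantees subgame perfection once no profitable single-stage deviation exists in either phase, this completes the verification.

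The only subtle step is that the game is genuinely non-stationary: reputations $\boldsymbol{r}[t]$, and hence the three profit quantities, drift from slot to slot as transactions accumulate through~\eqref{eq:repdyn}. The cleanest way I see around this is to lean on the myopic-seller framework already adopted in Game~\ref{DSCuc}: within slot $t$, each seller evaluates long-term profit under the conjecture that the current reputation profile persists, so the continuation value of conforming is the constant-$\pi_s^{\mathrm{MS\text{-}LO}}[t]$ stream and the continuation value after deviation is the constant-$\pi_s^{\mathrm{NE\text{-}LO}}[t]$ stream; this is precisely what turns the calculation above into a stationary geometric-series comparison and lets~\eqref{con:deltalimitedone} be stated slot-by-slot. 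The main obstacle, therefore, is not algebraic but conceptual — making sure that the incentive comparison at slot $t$ is carried out consistently under the same myopic conjecture that defines the payoff $\Pi_s(h[t],\boldsymbol{r}[t])$ in~\eqref{eq:RGpi}, so that the trigger-strategy argument remains coherent in the presence of evolving reputations.
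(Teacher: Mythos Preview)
Your proposal is correct and follows essentially the same approach as the paper: invoke the myopic conjecture so that per-period payoffs are stationary, then apply the one-shot deviation principle and compare the geometric stream $\pi_s^{\mathrm{MS\text{-}LO}}/(1-\delta_s)$ against $\pi_s^{\mathrm{dev}}+\delta_s\,\pi_s^{\mathrm{NE\text{-}LO}}/(1-\delta_s)$ to obtain~\eqref{con:deltalimitedone}. You are actually a bit more careful than the paper in explicitly checking the off-path (punishment-phase) subgames, which the paper leaves implicit.
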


\begin{proof}
See Appendix \ref{app:theo_RandCLOSPNE}. 
\end{proof}

Different from the SPNE for the unlimited capacity scenario in Theorem \ref{theo:NEcoop}, here the highest reputation seller can always get a higher or equal profit under the multi-seller strategy than that under the NE of the single-slot game, and hence is always willing to play the multi-seller strategy.

\section{Limited Capacity and Multiple Buyers Per Seller Scenario}\label{sec:limitedmulti}

In this section, we analyze the scenario where each seller has a limited capacity and can serve multiple buyers in each time slot. 
This can well model small sellers on Amazon. 
We first analyze the single-slot game and then analyze the dynamic game.

\subsection{Single-Slot Game Analysis}\label{sec:LMSingle}

Recall that the single-slot game is a Stackelberg game.   
We first analyze the buyers' purchasing decision in Stage II, and then study the sellers' pricing decisions in Stage I. 
We will suppress the time index $t$ in the single-slot game analysis in Section \ref{sec:LMSingle}, and bring the index $t$ back in Section \ref{subsec:DGLM}.

\subsubsection{Buyers' Purchasing Decisions in Stage II}

As in Sections \ref{sec:unlimited} and \ref{sec:limitedone}, each buyer prefers to choose the seller with the highest reputation-price ratio. 
However, the conclusion in Section \ref{sec:limitedone} is not directly applicable here, as each seller can serve multiple buyers within his capacity in each time slot. 
Once a seller reaches his capacity, other buyers cannot further purchase from this seller in the same time slot.

\subsubsection{Sellers' Pricing Decisions in Stage I}

In the following, we first define the seller competition game, and then analyze the NE. 

To describe the seller competition game, we first calculate a seller's profit, which is a product of the consumption amount of a buyer at the seller and the expected number of buyers that the seller serves. 
If seller $s\in\S$ is chosen by a buyer, the buyer's optimal consumption amount (as calculated in Lemma \ref{lemma:Optx1}) is $x_s^\ast={\rho}/{p_s}-{1}/{r_s}$ as long as the price $p_s$ satisfies $p_s<\rho r_s$. 
For a small seller on Amazon, he cannot serve all buyers due to his limited capacity. 
Equivalently, $x_s^\ast={\rho}/{p_s}-{1}/{r_s}>0$ and $x_s^\ast$ cannot be infinitesimal. 
We assume that when a buyer chooses a seller, the minimum amount of product that the buyer needs to buy is $x_0$ (e.g., one book $x_0=1$ on Amazon). 
This implies that if a buyer chooses seller $s$, the consumption amount should satisfy $x_s^\ast ={\rho}/{p_s}-{1}/{r_s} \geq x_0,$
which leads to an upper bound for $p_s$:
\begin{equation}\label{eq:psmax}
p_s \leq p_s^{\max} \eq \frac{\rho}{x_0+1/r_s}.
\end{equation}
The maximum number of buyers that seller $s$ with capacity $b_s$ can serve is 
\begin{equation}\label{eq:nsLM}
n_s(p_s)=\left \lfloor \frac{b_s}{{\rho}/{p_s}-{1}/{r_s}} \right \rfloor <\infty.
\end{equation}
Here the floor function $\left \lfloor x \right \rfloor$ denotes the minimum integer that is no larger than $x$. 
Note that $n_s(p_s)$ is non-decreasing in price $p_s$. 
Once seller $s$ serves $n_s(p_s)$ buyers, he cannot serve any other buyers.

Now we calculate the expected number of buyers that seller $s$ serves considering buyers' stochastic arrivals. 
Note that buyers will choose sellers in the decreasing order of the reputation-price ratio. 
Let $L_s(p_s,p_{-s})$ denote the number of buyers served by sellers whose reputation-price ratios are higher than that of seller $s$.
Sellers who have the same reputation-price ratio as seller $s$ will be chosen by buyers randomly with an equal probability. 
Let $\mathcal{C}_s$ denote the set of sellers who have the same reputation-price ratio as seller $s$, and hence $l_s(p_s,p_{-s})\eq\sum_{i\in\mathcal{C}_s}n_i(p_i)$ denotes the maximum number of buyers that sellers in set $\mathcal{C}_s$ can serve. 
We define $h_s^1=L_s(p_s,p_{-s})$ and $h_s^2=L_s(p_s,p_{-s})+l_s(p_s,p_{-s})$. 
Then we can calculate the expected number of buyers that seller $s$ serves, denoted by $\mathbb{N}_s(p_s,p_{-s})$, as follows:
\begin{equation}\label{eq:ENMB}
\begin{aligned}
&\displaystyle \mathbb{N}_s(p_s,p_{-s})=\max \bigg\{ 0, n_s(p_s) \cdot \min \bigg\{ 1, \\
& \displaystyle \sum_{k=h_s^1+1}^{h_s^2}\frac{k-h_s^1}{h_s^2-h_s^1}P_k+\sum_{k=h_s^2+1}^{\infty}P_k  \bigg\} \bigg\}.
\end{aligned}
\end{equation}
Hence, the expected profit of each seller $s\in\S$ is 
\begin{equation}\label{eq:piLM}
\pi_s(p_s,p_{-s})=(p_s-c)\left(\frac{\rho}{p_s}-\frac{1}{r_s}\right)\mathbb{N}_s(p_s,p_{-s}),
\end{equation}
for $p_s \in \left[ p_s^{\min},p_s^{\max} \right]$, where $p_s^{\min}$ and $p_s^{\max}$ are defined in \eqref{eq:psmin} and \eqref{eq:psmax}, respectively.

Sellers compete to attract buyers, and we can model their interactions as follows.
\begin{game}[Single-Slot Static Seller Competition Game with Limited Capacity and Multiple Buyers per Seller]\label{SCgameMB2}
\emph{
\begin{itemize}
\item Players: the set $ \S $ of sellers.
\item Strategies: each seller $s\in\mathcal{S}$ chooses a price $p_s \in \left[ p_s^{\min},p_s^{\max} \right]$, where $p_s^{\min}$ and $p_s^{\max}$ are defined in \eqref{eq:psmin} and \eqref{eq:psmax}, respectively. 
\item Payoffs: each seller $s\in\mathcal{S}$ obtains a profit $\pi_s(p_s,p_{-s})$ in \eqref{eq:piLM}.
\end{itemize}
}
\end{game}

We now analyze the NE for Game \ref{SCgameMB2}. 
Without loss of generality, we assume $r_1 \geq r_2 \geq \cdots \geq r_S$. 
We propose the Maximum Sequential Adjusting Algorithm (Algorithm \ref{algo:RSAA}) to derive the NE of Game \ref{SCgameMB2}. 
Different from Algorithm \ref{algo:SAA}, we initialize each seller's price as his maximum price (Line 1 of Algorithm \ref{algo:RSAA}), which will be the price that he will choose if he can sell his entire capacity. 
However, if a seller $i\in\S$ has an incentive to lower his price to gain more profit (Line 3 of Algorithm \ref{algo:RSAA}), then sellers $1,2,\ldots,i$ will compete and finally reach the Nash equilibrium (Lines 4, 5 of Algorithm \ref{algo:RSAA}).

\begin{algorithm}[t]
\LinesNumbered
\SetAlgoLined
\begin{small}
\KwIn{Repuation $\boldsymbol{r}$, capacity $\boldsymbol{b}$, marginal cost $c$, product evaluation $\rho$, and a small positive $\varepsilon$}
\KwOut{Nash equilibrium pricing strategy $\boldsymbol{p}^{\ast \rm NE-LM} $}
Initiate $p_s^\ast= p_s^{\max}, \forall s \in \S$ \\
\For{$i=S:-1:1$}{

\While{ $\pi_i(p_i^\ast - \varepsilon,p_{-i}^\ast) > \pi_i(p_i^\ast ,p_{-i}^\ast)$ }{
Set $\varphi=\frac{r_i}{p_i^\ast}+\epsilon$\\
Set $p_j^\ast=\frac{r_j}{\varphi}, \forall j=1,2,\ldots,i$\\
}

}
Set $\boldsymbol{p}^{\ast \rm NE-LM}=\boldsymbol{p}^{\ast}$
\end{small}
\caption{Maximum Sequential Adjusting Algorithm}
\label{algo:RSAA}
\end{algorithm}

\begin{theorem}\label{theo:RandCLM}
The price profile derived by Algorithm \ref{algo:RSAA} is the unique Nash equilibrium of Game \ref{SCgameMB2}.
\end{theorem}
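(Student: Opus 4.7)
The plan is to follow the two-pronged strategy used in the proof of Theorem \ref{theo:RandCLONE}: first verify that $\boldsymbol{p}^{\ast \rm NE-LM}$ output by Algorithm \ref{algo:RSAA} is a Nash equilibrium of Game \ref{SCgameMB2}, and then show it is the unique one. The structural difference from the one-buyer-per-seller case is that the expected demand $\mathbb{N}_s(p_s,p_{-s})$ in \eqref{eq:ENMB} aggregates buyers across whole rank groups rather than a single slot, so the best-response analysis must handle capacity saturation and rank-crossing simultaneously.

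For existence, I would fix any seller $s$, suppose every other seller plays according to $\boldsymbol{p}^{\ast \rm NE-LM}$, and rule out unilateral deviations in two directions. An upward deviation either keeps $s$ in the same rank group, in which case the cap $p_s^{\max}$ in \eqref{eq:psmax} together with the monotonicity of $n_s(\cdot)$ in \eqref{eq:nsLM} bound his revenue, or it pushes $s$ to a lower rank, strictly reducing $\mathbb{N}_s$ because fewer buyers reach him in expectation. A downward deviation is ruled out by the exit condition of the inner while loop of Algorithm \ref{algo:RSAA}, which precisely records $\pi_i(p_i^{\ast}-\varepsilon, p_{-i}^\ast)\leq \pi_i(p_i^\ast, p_{-i}^\ast)$ for the processed index; combined with the observation that any deeper cut below the common top-tier ratio $\varphi$ only trims the margin without improving the queue position, no $s$ gains from lowering $p_s$.

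For uniqueness, I would argue by reverse induction on the seller index. At any Nash equilibrium, consider the lowest-reputation seller $S$: his best response reduces to a clean dichotomy between charging $p_S^{\max}$ and collecting residual demand, or undercutting into the tied top group with a reputation-price ratio exceeding $\max_{j<S} r_j/p_j$. The comparison between these two regimes is exactly the test performed by the outer loop at $i=S$. Passing to $i=S-1$, the same dichotomy is tested conditional on seller $S$'s already-determined price, and so on. Since at each stage the algorithm selects the unique profit-maximizing regime, the cascade must produce the only price profile consistent with mutual best responses.

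The main obstacle is the lack of smoothness: $\mathbb{N}_s$ in \eqref{eq:ENMB} is piecewise-defined with jumps whenever $s$'s reputation-price ratio crosses a competitor's, and the floor function inside $n_s$ in \eqref{eq:nsLM} adds further discontinuities. Consequently $\pi_s$ in \eqref{eq:piLM} is neither continuous nor differentiable in $p_s$, and first-order conditions are unavailable. My plan is to enumerate a finite set of candidate best-response regimes indexed by the target rank and by whether the capacity $b_s$ binds, and then use the monotonicity of payoff in the reputation-price ratio inherited from Lemma \ref{lemma:sellerSelection} to reduce each seller's best response to one of the two regimes already handled by Algorithm \ref{algo:RSAA}. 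Carefully tracking the infinitesimal $\varepsilon$ used when undercutting, consistent with the tie-breaking convention adopted throughout the paper, will be the most delicate step.
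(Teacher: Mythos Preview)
Your proposal is correct and follows essentially the same approach as the paper, which simply states that the proof ``follows the lines of the proof of Theorem \ref{theo:RandCLONE}'' (itself largely a pointer back to Theorem \ref{theo:NES}). Your two-pronged existence/uniqueness argument, with deviations analyzed via the algorithm's termination condition and a reverse induction on the seller index, is exactly the skeleton the paper intends; you are supplying considerably more detail---in particular on the discontinuities introduced by $n_s(\cdot)$ and $\mathbb{N}_s(\cdot)$---than the paper's one-line proof actually provides.
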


\begin{proof}
See Appendix \ref{app:theo_RandCLM}. 
\end{proof}

When a seller can serve multiple buyers, he is able to sell out all his products and the price that maximizes his profit is his maximum price. 
However, when each seller can only serve one buyer, the optimal price is his monopoly price.

\begin{table*}[t]
\newcommand{\tabincell}[2]{\begin{tabular}{@{}#1@{}}#2\end{tabular}}
\centering
\caption{Comparison of Sections \ref{sec:unlimited}, \ref{sec:limitedone}, and \ref{sec:limitedmulti}}
\begin{tabular}{|c|c|c|c|}
\hline
\textbf{Scenario}  & \tabincell{c}{\textbf{Section \ref{sec:unlimited}}:\\ Unlimited  capacity}  & \tabincell{c}{\textbf{Section \ref{sec:limitedone}}:  Limited capacity \\and one buyer per seller} & \tabincell{c}{\textbf{Section \ref{sec:limitedmulti}}: Limited capacity \\and multiple buyers per seller}  \\
\hline
\tabincell{c}{\textbf{NE} for the \\ Single-Slot Game}  & \tabincell{c}{Only a seller with the \\ \emph{highest reputation} may obtain \\ a positive profit (Theorem \ref{theo:NES})}  & \tabincell{c}{Sellers with \emph{relatively high reputations} obtain \\ positive profits; with large reputation gaps,\\ they set their \emph{monopoly prices}, otherwise,\\ they set the competitive prices (Theorem \ref{theo:RandCLONE})} & \tabincell{c}{Sellers with \emph{relatively high reputations} obtain \\ positive profits; with large reputation gaps,\\ they set their \emph{maximum prices}, otherwise,\\ they set the competitive prices (Theorem \ref{theo:RandCLM})}  \\
\hline
\tabincell{c}{\textbf{SPNE} for the \\ Dynamic Game}  & \tabincell{c}{Both \emph{monoply market} and \\ \emph{multi-seller market} can exist \\ (Theorems \ref{theo:NErNE} and \ref{theo:NEcoop})}  & \tabincell{c}{Only \emph{multi-seller market} exists and \\ sellers may all set their \emph{monopoly} \\ \emph{prices} (Lemma \ref{lemma:cooplimitedone} and Theorem \ref{theo:RandCLOSPNE})} & \tabincell{c}{Only \emph{multi-seller market} exists and \\ sellers may all set their \\ \emph{maximum prices} (Lemma \ref{lemma:coopLM})}  \\
\hline
\end{tabular}
\label{table:comparison}
\end{table*}

\begin{figure*}[t]
 \centering
\begin{minipage}[t]{0.325 \linewidth}
\centering
\includegraphics[width=1\textwidth]{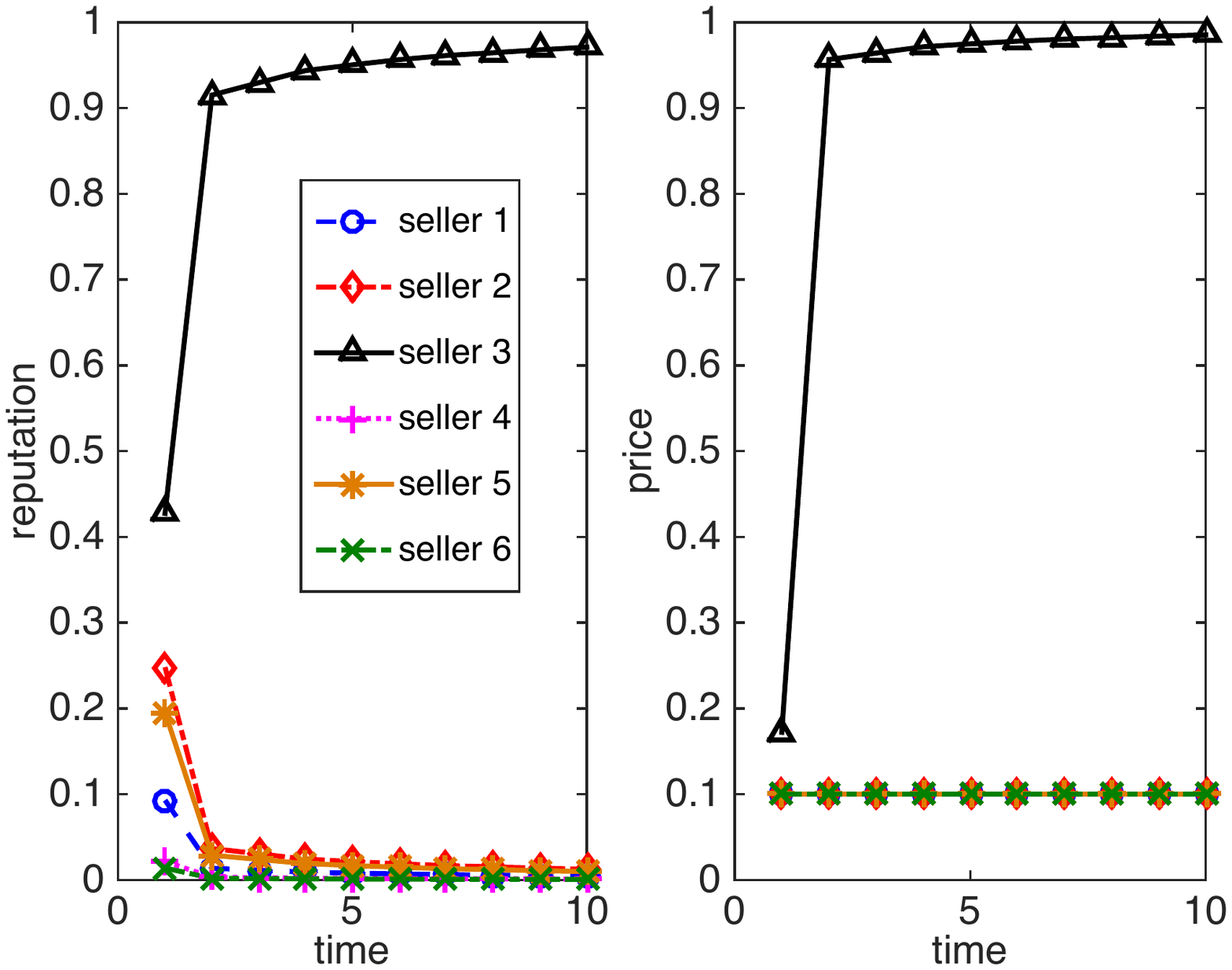}
\vspace{-3mm}
  \caption{Reputation and Price Dynamics (Unlimited Capacity, $\boldsymbol{X}$ and $\boldsymbol{\omega}^{I}$)}\label{fig:1Basic}
\end{minipage}
\begin{minipage}[t]{0.325 \linewidth}
\centering
\includegraphics[width=1\textwidth]{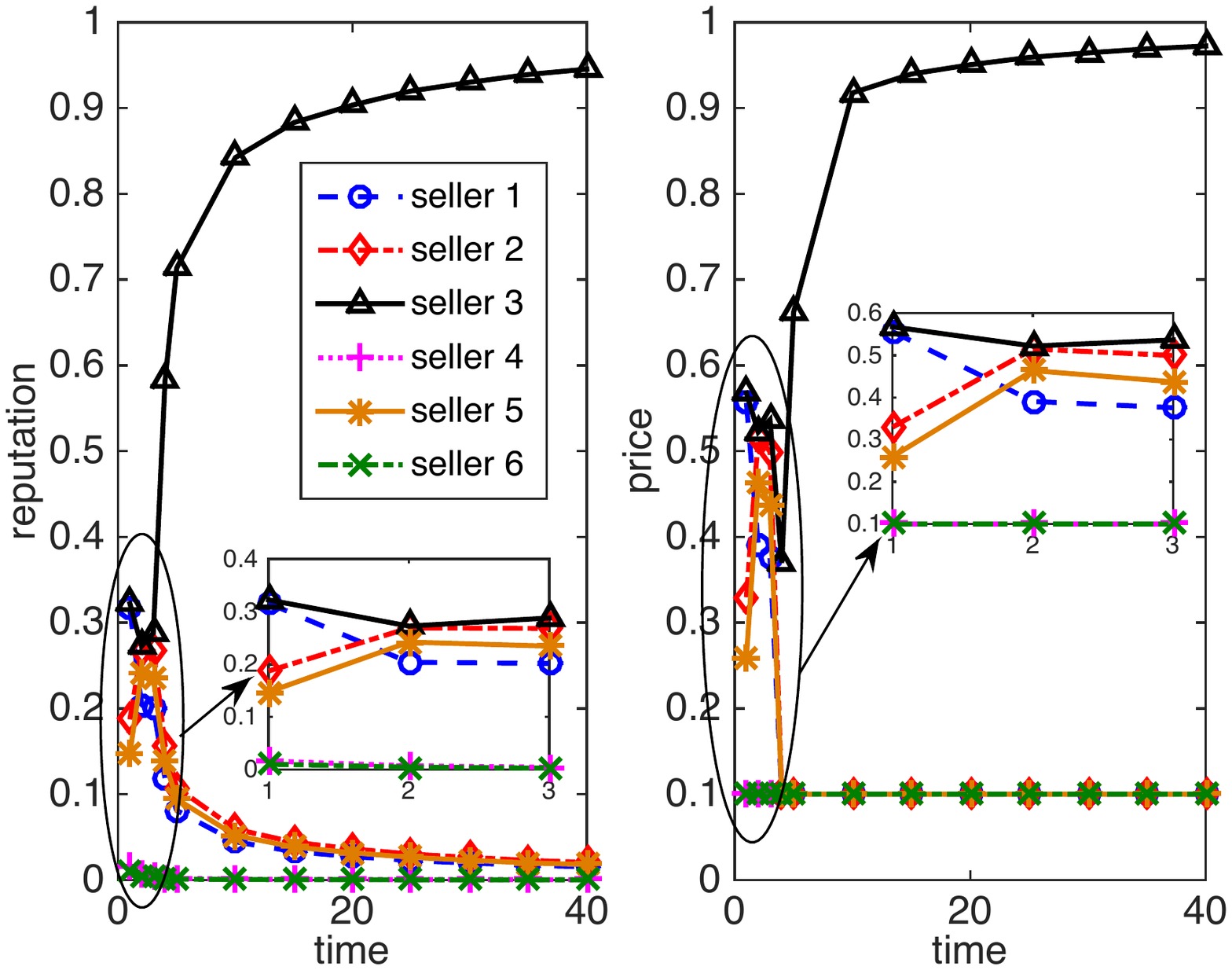}
\vspace{-3mm}
  \caption{Reputation and Price Dynamics (Unlimited Capacity, $\boldsymbol{X}$ and $\boldsymbol{\omega}^{II}$)}\label{fig:1BasicCoop}
\end{minipage}
\begin{minipage}[t]{0.325 \linewidth}
\centering
\includegraphics[width=1\textwidth]{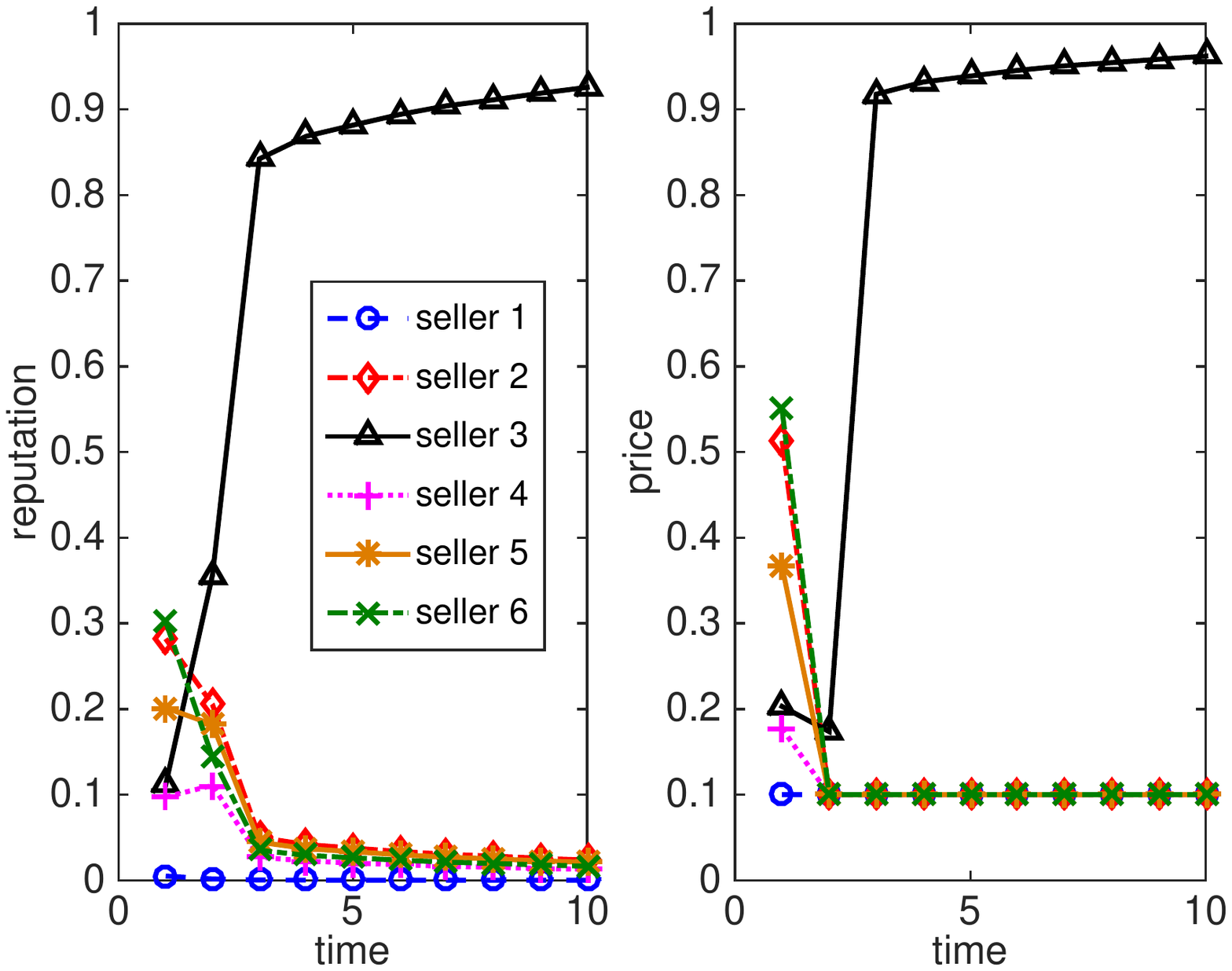}
\vspace{-3mm}
\caption{Reputation and Price Dynamics (Unlimited Capacity, $\boldsymbol{X}'$ and $\boldsymbol{\omega}^{I}$)}\label{fig:1BasicX2W1}
\end{minipage}
\end{figure*}

\subsection{Dynamic Game Analysis}\label{subsec:DGLM}

Next we will discuss the dynamic game in the infinite time horizon. 
We first derive the profit maximum strategy that maximizes the sellers' total profit in a time slot $t$.

\begin{lemma}\label{lemma:coopLM}
There exists a unique multi-seller price profile $\boldsymbol{p}^{\rm{MS-LM}}[t]$ that maximizes the sellers' total profits in each time slot $t$: 
\begin{equation}\label{eq:pcoopLM}
p_s^{\rm{MS-LM}}[t]=p_s^{\max}(r_s[t])=\frac{\rho}{x_0+1/r_s[t]}, \forall s \in \S.
\end{equation}
\end{lemma}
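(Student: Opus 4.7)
The plan is to show $\Pi(\boldsymbol{p}) \eq \sum_{s \in \S} \pi_s(\boldsymbol{p}) \leq \Pi(\boldsymbol{p}^{\rm MS-LM})$ for every feasible price profile $\boldsymbol{p}$, with equality if and only if $\boldsymbol{p} = \boldsymbol{p}^{\rm MS-LM}$. The guiding intuition, parallel to Lemma \ref{lemma:cooplimitedone}, is that since each seller is capacity-bounded, what matters for aggregate profit is the margin \emph{per unit of capacity} rather than per buyer, and this margin is strictly increasing in $p_s$ on the feasible range $[p_s^{\min}, p_s^{\max}]$.

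First, I would use the capacity constraint to derive a separable upper bound on each $\pi_s$. For every $s \in \S$, the expected amount sold satisfies $x_s^\ast(p_s)\cdot \mathbb{N}_s(\boldsymbol{p}) \leq x_s^\ast(p_s)\cdot n_s(p_s) \leq b_s$, where the last inequality follows directly from $n_s(p_s) = \lfloor b_s / x_s^\ast(p_s) \rfloor$. This yields $\pi_s(\boldsymbol{p}) \leq (p_s - c)\, b_s$ and hence $\Pi(\boldsymbol{p}) \leq \sum_{s \in \S}(p_s - c)\, b_s$.

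Second, on the feasible interval $[p_s^{\min}, p_s^{\max}]$ the margin $(p_s - c)$ is strictly increasing in $p_s$, so the separable upper bound $\sum_{s}(p_s - c)\, b_s$ is itself maximized by the assignment $p_s = p_s^{\max}$ for every $s$. Moreover, at this upper endpoint $x_s^\ast(p_s^{\max}) = x_0$, so $n_s(p_s^{\max}) = \lfloor b_s / x_0 \rfloor$ is the largest possible buyer-serving capacity of seller $s$ over the feasible range. Hence $\boldsymbol{p}^{\rm MS-LM}$ simultaneously maximizes (i) each seller's per-unit-capacity margin and (ii) the number of buyers each seller can accommodate. I would then close the argument by verifying that under $\boldsymbol{p}^{\rm MS-LM}$ each seller's expected served amount attains the largest admissible quantity permitted by his capacity and the buyer arrival distribution, so the upper bound is essentially attained. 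Uniqueness then follows from the strict monotonicity of $(p_s - c)$: any deviation $p_s < p_s^{\max}$ strictly reduces both the seller's margin and the corresponding term in the upper bound, while $p_s > p_s^{\max}$ is infeasible by definition of $p_s^{\max}$ in \eqref{eq:psmax}.

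The main obstacle is handling the coupling across sellers through $\mathbb{N}_s(\boldsymbol{p})$: a unilateral price reduction by seller $s$ may reallocate buyers from other sellers toward him, so one must rule out that such redistribution ever compensates for the lost margin. The key ingredient will be that at $\boldsymbol{p}^{\rm MS-LM}$ every seller already accommodates the maximum number of buyers his capacity allows, so any buyer reallocation merely shifts the already capacity-capped served mass across sellers without increasing the total; meanwhile the aggregate margin $\sum_s (p_s - c)$ is strictly reduced by any deviation, which makes the deviation strictly suboptimal and delivers both optimality and uniqueness of $\boldsymbol{p}^{\rm MS-LM}$.
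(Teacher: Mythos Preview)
Your approach is more explicit than the paper's, which simply says the proof ``follows the lines of the proof of Lemma~\ref{lemma:cooplimitedone}.'' That analogy implicitly argues that at $p_s^{\max}$ the ordering of reputation--price ratios by $r_s$ is preserved and each seller extracts the maximal margin on his capacity, so total profit is maximized --- essentially the same intuition you articulate, but without the decomposition via the capacity upper bound.

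However, your argument has a genuine gap at the step you flag with ``essentially attained.'' The bound $\pi_s(\boldsymbol{p}) \le (p_s - c)\,b_s$ is correct, but it is \emph{not} tight at $p_s = p_s^{\max}$ in general: the quantity actually sold is $x_0 \cdot \mathbb{N}_s \le x_0 \cdot n_s(p_s^{\max}) = x_0 \lfloor b_s/x_0\rfloor$, which can be strictly less than $b_s$ both because of the floor and because of insufficient buyer arrivals. Once the bound is slack, maximizing it over $p_s$ no longer pins down the maximizer of the true objective. Concretely, at a slightly lower price $p_s'$ with $x_s^\ast(p_s') = b_s/n$ for some integer $n$, the seller may sell his full capacity $b_s$ rather than $x_0\lfloor b_s/x_0\rfloor$, and the gain in volume can outweigh the loss in margin. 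Your last paragraph asserts that at $\boldsymbol{p}^{\rm MS-LM}$ ``every seller already accommodates the maximum number of buyers his capacity allows,'' which is true for the \emph{count} $n_s$, but what enters the profit is the \emph{total quantity} $x_s^\ast \cdot \mathbb{N}_s$, and that is not maximized merely by maximizing $n_s$.

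To close the argument along your lines you would need either an explicit divisibility assumption (e.g., $b_s/x_0 \in \mathbb{Z}$) so that the bound is attained, or a direct comparison of $\Pi(\boldsymbol{p})$ against $\Pi(\boldsymbol{p}^{\rm MS-LM})$ that tracks the floor in $n_s(\cdot)$ and the arrival distribution rather than passing through the loose envelope $\sum_s (p_s - c)\,b_s$. The paper's own proof does not address this point either, so the gap is in the statement's level of rigor as much as in your attempt.
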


\begin{proof}
See Appendix \ref{app:lemma_coopLM}. 
\end{proof}

Similar to Section \ref{subsec:DGLC}, the multi-seller strategy can be enforced as part of the SPNE of the dynamic game by a punishment strategy. 
Due to limitation of space, we show details of the SPNE analysis in the supplementary material.

We summarize a comparison of the results of Sections \ref{sec:unlimited}, \ref{sec:limitedone}, and \ref{sec:limitedmulti} in Table \ref{table:comparison}.

\section{Simulation Results}\label{sec:simu}

We have performed simulations to illustrate the dynamics of sellers' reputations and prices, as well as the impact of system parameters, such as the initial transaction histories and ratings, on the reputation evolution process.

We have collected some data from the Airbnb website and set our simulation parameters accordingly. 
We focus here on the houses that are available in the neighbourhood Chuo-ku of Osaka, Japan, which is the most popular location to visit in Japan in 2016 \cite{Airbnb10Location}. 
Totally there are only $S=6$ houses in Chuo-ku provided by Airbnb hosts since most Airbnb hosts provide apartments. 
We have carried out simulations corresponding to the different modeling choices in Sections \ref{sec:unlimited}, \ref{sec:limitedone}, and \ref{sec:limitedmulti}, with different assumptions of the practices.\footnote{For example, if a single house has many beds, each bed can serve one buyer, and the market demand is small (as most travelers will rent apartments instead of houses), we can approximately view each house as having unlimited capacity. 
On the other hand, we can model a house in the one buyer per seller scenario by limiting the service to one buyer (or one family) a day, or model the house in the multiple buyers per seller scenario by allowing the house to serve multiple buyers (or families) a day.} 

We recorded the number of completed transactions of these 6 houses on Airbnb from the time they joined Airbnb up to June 2016, by exploring the completed transactions between hosts and guests, i.e., $\boldsymbol{X}=\{60,20,29,7,21,3\}$, and treated these as the initial transaction history in our simulation. 
To understand the impact of transaction history, we have also performed simulations by considering a different possibility of the initial numbers of completed transactions of these 6 houses to be $\boldsymbol{X}'=\{3,21,7,29,20,60\}$. 

To examine the impact of the fixed initial buyer ratings, we considered two different initial buyer ratings of these 6 houses. 
\begin{itemize}
\item $\boldsymbol{\omega}^{I}=\{0.1,0.8,0.95,0.2,0.6,0.3\}$,
\item $\boldsymbol{\omega}^{II}=\{0.45,0.8,0.95,0.2,0.6,0.3\}$. 
\end{itemize}

We assume that the marginal cost is $c=0.1$ \$/day, buyers' evaluation parameter is $\rho=10$, and the small positive number $\varepsilon=10^{-6}$. 
We assume that the buyers' arrival process follows a Poisson distribution with an arrival rate of $\lambda=10/$day.

\begin{figure*}[t]
 \centering
\begin{minipage}[t]{0.325 \linewidth}
\centering
\includegraphics[width=1\textwidth]{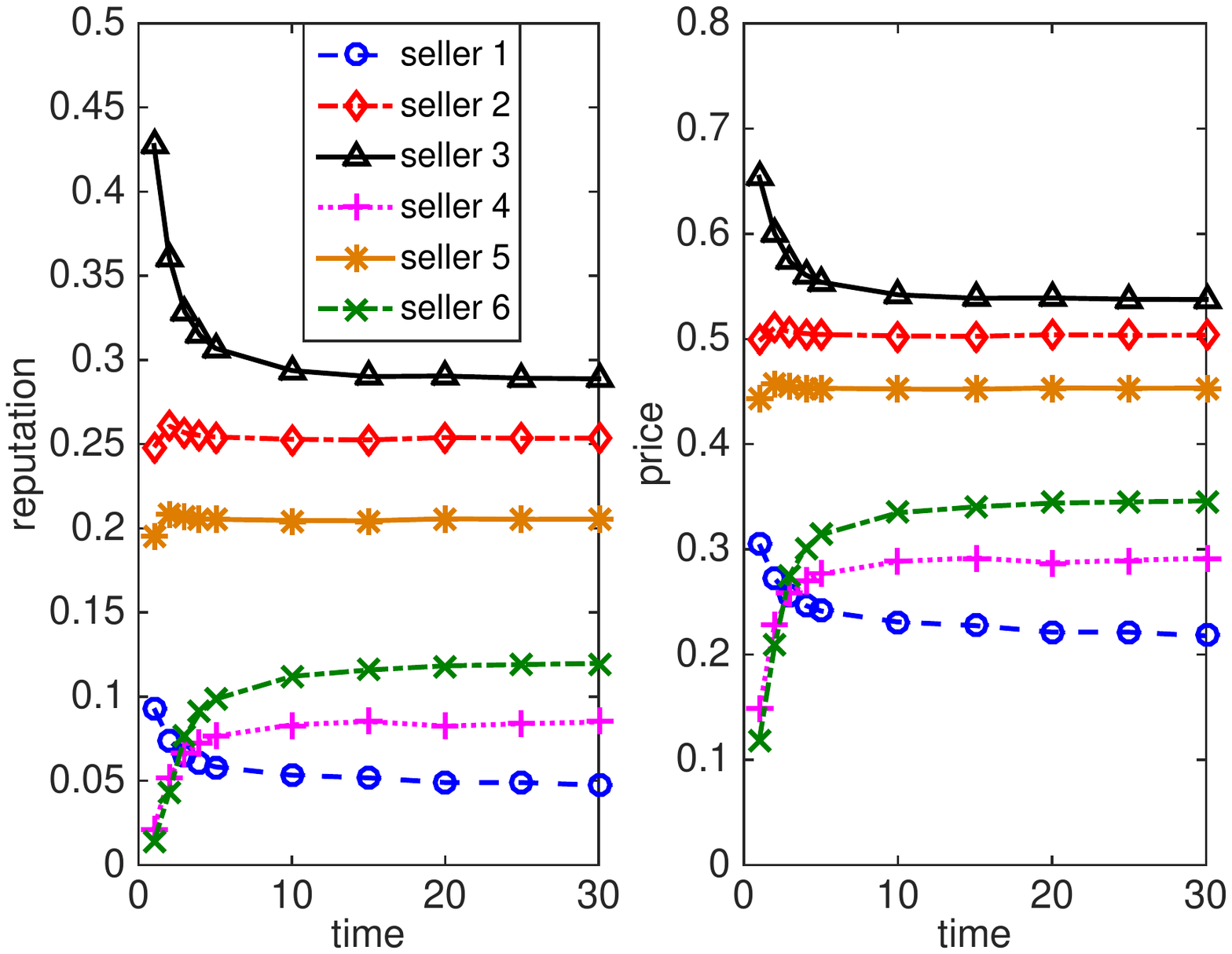}
\vspace{-3mm}
\caption{Reputation and Price Dynamics (Limited Capacity, One Buyer Per Seller, $\boldsymbol{X}$ and $\boldsymbol{\omega}^{I}$)}\label{fig:2UniCapX1W1}
\end{minipage}
\begin{minipage}[t]{0.325 \linewidth}
\centering
\includegraphics[width=1\textwidth]{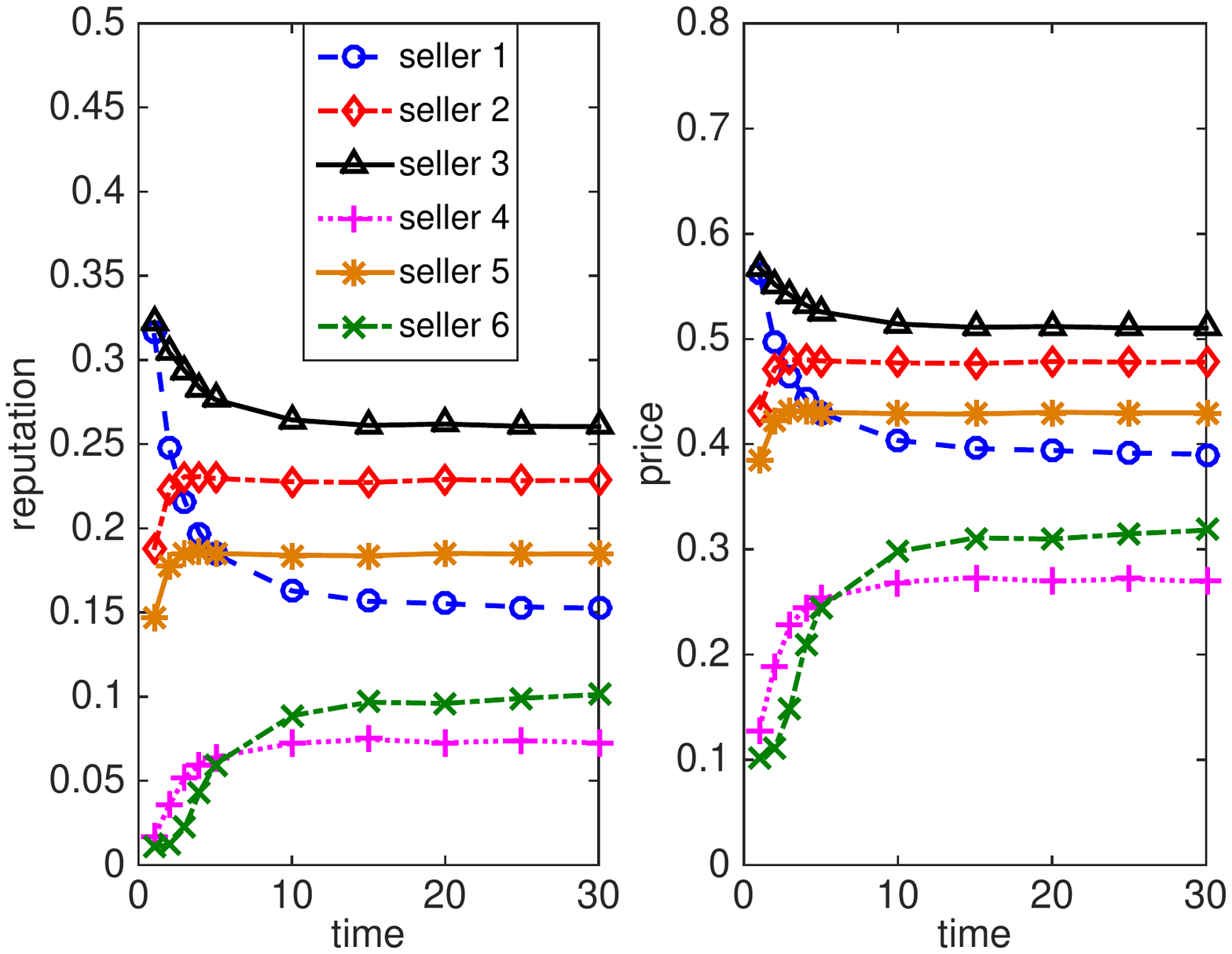}
\vspace{-3mm}
\caption{Reputation and Price Dynamics (Limited Capacity, One Buyer Per Seller, $\boldsymbol{X}$ and $\boldsymbol{\omega}^{II}$)}\label{fig:2UniCapX1W2}
\end{minipage}
\begin{minipage}[t]{0.325 \linewidth}
\centering
\includegraphics[width=1\textwidth]{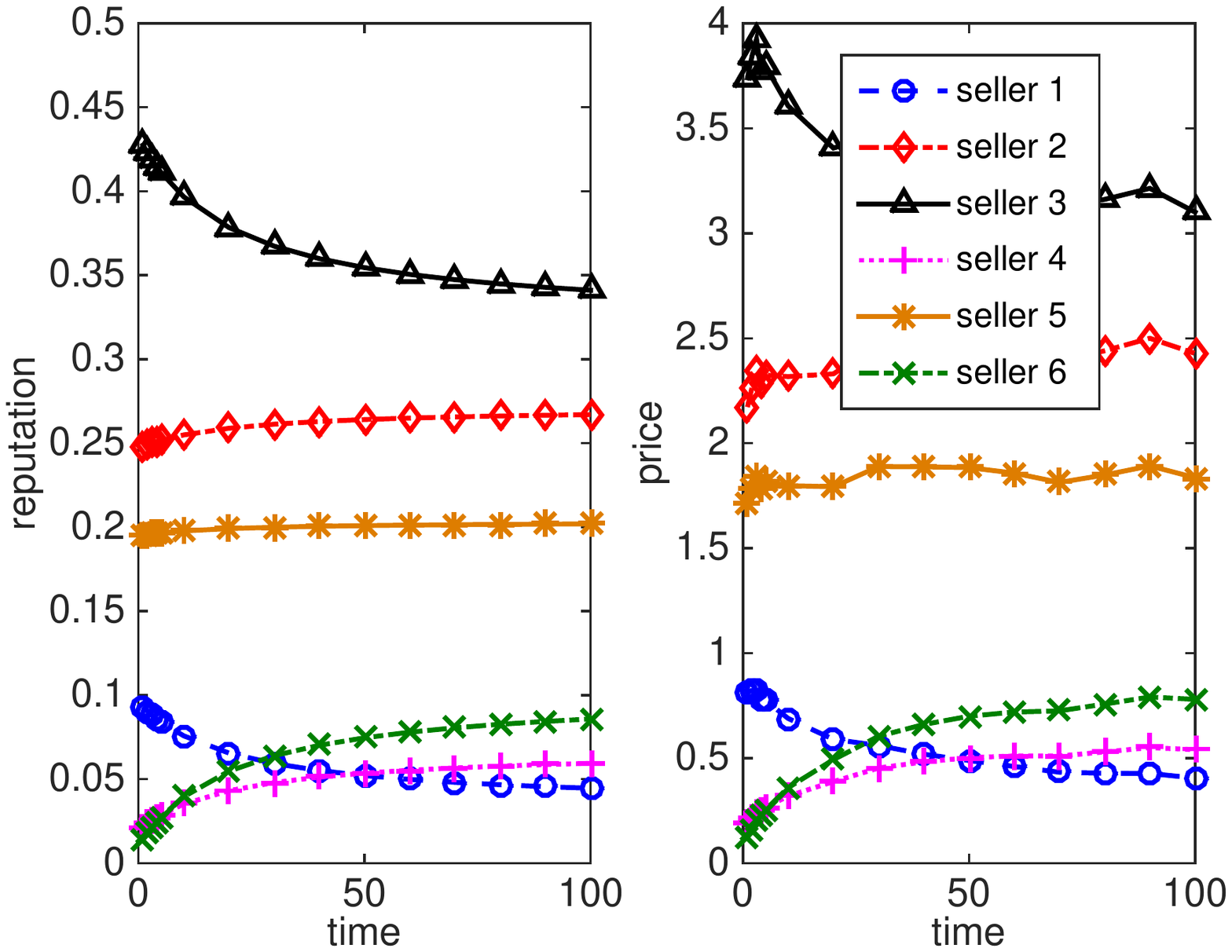}
\vspace{-3mm}
\caption{Reputation and Price Dynamics (Limited Capacity, Multiple Buyers Per Seller, $\boldsymbol{X}$ and  $\boldsymbol{\omega}^{I}$)}\label{fig:3RanCap}
\end{minipage}
\end{figure*}

\subsection{Unlimited Capacity Scenario}\label{simu:unlimited}

We first illustrate the dynamics of reputations and prices under the SPNE over time for sellers with unlimited capacity.

Figure \ref{fig:1Basic} and Figure \ref{fig:1BasicCoop} show the dynamics of sellers' reputations and prices under the SPNE over time under the same transaction history $X$ but different buyer ratings $\boldsymbol{\omega}^{I}$ and $\boldsymbol{\omega}^{II}$.

In Figure \ref{fig:1Basic} under $\boldsymbol{\omega}^{I}$, seller $3$ has the highest initial reputation that is much larger than that of anyone else (as shown in the left subfigure). 
As time goes by, a monopoly market emerges as indicated by Theorem \ref{theo:NErNE}: Seller $3$ charges a high price and dominates the market, while all other sellers charge according to their cost $c$ and make zero profit.

In Figure \ref{fig:1BasicCoop} under $\boldsymbol{\omega}^{II}$, the initial reputations of the 6 sellers are $\{0.316,0.187,0.322,0.016,0.148,0.011\}$, i.e., sellers $1,2,3$ and $5$ have relatively high initial reputations, hence they choose the multi-seller strategy in the first $3$ time slots as indicated in Theorem \ref{theo:NEcoop}. 
As time goes by, seller $3$ accumulates a large enough reputation, and hence the process leads to monopoly as indicated in Theorem \ref{theo:NErNE}. 
The change from a multi-seller market to a monopoly market is due to the fact that sellers are bounded rational, i.e., they have difficulty in predicting the reputation dynamics in the long run, and hence they compute their pricing decisions (to maximize their long-term payoff) assuming that the reputations do not change.

To better understand the impact of $\boldsymbol{X}$ on sellers' reputations and prices under the stable state, we have performed simulations for sellers under $\boldsymbol{X}'$ and $\boldsymbol{\omega}^{I}$, which are shown in Figure \ref{fig:1BasicX2W1}. 
Compared to the result under $\boldsymbol{X}$ in Figure \ref{fig:1Basic}, we can see that although seller $3$ has a low initial reputation under $\boldsymbol{X}'$ in Figure \ref{fig:1BasicX2W1} due to a small initial transaction number, seller $3$ can still dominate the market finally due to his large rating. 
Hence, sellers' ratings $\boldsymbol{\omega}$ play a more important role in determining sellers' reputations and prices under the SPNE at the stable state.

\subsection{Limited Capacity and One Buyer Per Seller Scenario}\label{simu:limitedone}

Next we illustrate the dynamics of sellers' reputations and prices under the SPNE in the limited capacity and one buyer per seller scenario.

Figure \ref{fig:2UniCapX1W1} and Figure \ref{fig:2UniCapX1W2} show the dynamics of sellers' equilibrium reputations and prices in each time slot with the same transaction history and under different buyers' ratings $\boldsymbol{\omega}^{I}$ and $\boldsymbol{\omega}^{II}$. 
In both cases, the market converges to a state where every seller has a positive market share as in Theorem \ref{theo:RandCLOSPNE}. 
Furthermore, the order of sellers' reputations at the stable state is consistent with the order of their ratings in both cases.

Comparing Figure \ref{fig:1Basic} and Figure \ref{fig:2UniCapX1W1}, we can see that in the unlimited capacity scenario (Figure \ref{fig:1Basic}), the market will evolve to be a monopoly market, while in the limited capacity and one buyer per seller scenario (Figure \ref{fig:2UniCapX1W1}), the market will evolve to be a multi-seller market.

\subsection{Limited Capacity and Multiple Buyers Per Seller Scenario}\label{simu:limitedmultiple}

Finally we illustrate the dynamics of sellers' reputations and prices under SPNE in the limited capacity and multiple buyers per seller scenario. 
Due to limitation of space, we show only the dynamics under $\boldsymbol{X}$ and $\boldsymbol{\omega}^{I}$ in Figure \ref{fig:3RanCap}. 
As we can see, the market converges to a stable state where the order of sellers' reputations is consistent with the order of their ratings.

Comparing Figure \ref{fig:2UniCapX1W1} and Figure \ref{fig:3RanCap}, we can see that when sellers can serve multiple buyers (Figure \ref{fig:3RanCap}), the prices under the SPNE at the stable state are higher than those when sellers can serve only one buyer due to the flexibility of service.

\section{Conclusion}\label{sec:conc}
In this paper, we analyzed the buyer purchase and seller pricing decisions in online markets, taking into account the impact of seller reputation and market competition. 
In the unlimited capacity scenario, if the gap between sellers' highest and second highest reputations is large enough, then the highest reputation seller dominates the market. 
If sellers' reputation levels are relatively close, then sellers with relatively high reputations will survive at the equilibrium. 
Furthermore, the market can evolve over time and change from a multi-seller market to a monopoly market. 
In the limited capacity scenario, the monopoly market will not exist due to the limited capacity constraint, and hence only multi-seller markets with different multi-seller strategies exist.
Simulation results show that the dynamics of sellers' reputations and prices at the SPNE will converge to stable states. 
There are several interesting future research directions to explore. 
For example, one can consider more general reputation formulations and heterogeneous buyers with different product evaluations. 
It is also possible to analyze the competition caused by new entrants and study how new sellers accumulate their early reputations.

\newpage

\appendices

\section{Proof of Lemma \ref{lemma:Optx1}}\label{app:lemma_Optx1}

The proof is based on the concavity of the payoff function. 
We have:
\begin{align}
& v_s'(x_s)=\frac{\rho r_s}{1+r_sx_s}-p_s,\notag\\
& v_s''(x_s)=-\frac{\rho r_s^2}{(1+r_sx_s)^2}<0.\notag
\end{align}
Hence, the buyer's payoff maximization problem is strictly concave, thus admitting a unique maximum. 
If the maximum sulution $x_s^{\ast}$ is positive, then it can be calculated by setting the first order derivative to zero:
\begin{equation*}
v_s'(x_s^\ast)=\frac{\rho r_s}{1+r_sx_s^\ast}-p_s=0,
\end{equation*}
which leads to
\begin{equation*}
x_s^\ast=\frac{\rho}{p_s}-\frac{1}{r_s},
\end{equation*}
assuming that this quantity is nonnegative. 
Otherwise, the solution will have to be on the boundary, $x_s^{\ast}=0$, which leads to \eqref{eq:Optx1}.

\section{Proof of Lemma \ref{lemma:NES1}}\label{app:lemma_NES1}

Lemma \ref{lemma:NES1} states that the unique Nash equilibrium of Game \ref{SCgame} is $(p_s^{\ast}=c: \forall s \in \S)$ if more than one seller has the highest level of reputation. 
We first prove that the strategy profile $(p_s^{\ast}=c, \forall s \in \S)$ is a Nash equilibrium, and then prove that it is the unique equilibrium. 
Recall that buyers will choose the seller with the highest reputation-price ratio, i.e., $\max_{s\in\S}\frac{r_s}{p_s}$. 

Step 1: we prove that $(p_s^{\ast}=c, \forall s \in \S)$ is a Nash equilibrium. 
If $p_s^{\ast}=c, \forall s \in \S$, no seller will earn a positive profit. 
However, if $p_{-s}=c$, then if seller $s$ deviates from the price profile by setting $p_s > c$, then it will not improve his profit, since all buyers will buy from the seller who has the highest reputation and still sets the competitive price, i.e., the seller with the highest reputation-price ratio. 
Hence no seller $s, ~\forall s \in \S,$ has an incentive to unilaterally increase his price to be larger than $c$ if other sellers set $p_{-s}=c$, and the strategy profile $(p_s^{\ast}=c, \forall s \in \S)$ is a Nash equilibrium.

Step 2: we prove that no other price vector is an equilibrium. 
If all sellers set the same reputation-price ratio, i.e., $\frac{r_s}{p_s}$, such that the price is above the marginal cost, and share the market, then each seller has an incentive to undercut the others by an arbitrarily small amount of decrease on price and capture the whole market and increase its profits. 
So there can be no equilibrium with all sellers setting the same reputation-price ratio such that the prices are above the marginal cost. 
Also, there can be no equilibrium with sellers setting different reputation-price ratios and the prices are above the marginal cost. 
The sellers setting the smaller ratio will earn nothing (the seller with the largest ratio serves all the buyers). 
Hence the seller with a smaller ratio will want to decrease its price to undercut the seller with a larger ratio until the price is decreased to equal the marginal cost. 

Hence the only equilibrium occurs when all sellers set their prices equal to the marginal cost.

\section{Proof of Lemma \ref{lemma:NES2}}\label{app:lemma_NES2}

Lemma \ref{lemma:NES2} states that when only one seller has the highest reputation, the unique Nash equilibrium of Game \ref{SCgame} is
\begin{equation}\label{eq:NEproof}
p_{s}^\ast=
\left\{
\begin{aligned}
\min \left \{ \frac{r_{\rm{max}}}{r_{\rm{sec}}}c-\varepsilon, \sqrt{c \rho r_s} \right \}, &\quad \mbox{ if } r_s = r_{\rm{max}};
\\
c, & \quad \mbox{ otherwise}.
\end{aligned}
\right.
\end{equation}
First we prove that the price profile in \eqref{eq:NEproof} is a Nash equilibrium. 
We denote the seller who has the highest reputation $r_{\rm{max}}$ as seller $s^{\rm max}$. 
If $p_{-s^{\rm max}}=c$ and seller $s^{\rm max}$ sets $p_{s^{\rm max}} < \min \left \{ \frac{r_{\rm{max}}}{r_{\rm{sec}}}c-\varepsilon, \sqrt{c \rho r_s} \right \}$, then the profit of seller $s^{\rm max}$ will decrease due to the formulation of its profit; if $p_{s^{\rm max}} > \min \left \{ \frac{r_{\rm{max}}}{r_{\rm{sec}}}c-\varepsilon, \sqrt{c \rho r_s} \right \}$, then its reputation-price ratio will decrease and it will lose buyers, leading to decrease in its profit. 
If seller $s ~(s \neq s^{\rm max})$ sets $p_s > c$, then it will earn nothing, since all buyers will buy from seller $s^{\rm max}$ with the highest reputation, i.e., the seller with the highest reputation-price ratio. 
Then we can see that the equilibrium price profile in \eqref{eq:NEproof} is unique, and no other price constitutes an equilibrium, using a proof similar to that in Appendix \ref{app:lemma_NES1}.

\section{Proof of Theorem \ref{theo:NErNE}}\label{app:theo_NErNE}

If $r_{\rm{max}}[t] > \frac{\rho (r_{\rm{sec}}[t]) ^2}{c}$, we have $\sqrt{c \rho r_{\rm{max}}[t]}  < \frac{r_{\rm{max}}[t] }{r_{\rm{sec}}[t] }c$. 
Hence the highest reputation seller will set the monopoly price at the NE of the single-slot game, i.e., $p_{s^{\rm max}[t]}^{\ast}[t]=\sqrt{c \rho r_{\rm{max}}[t]}$, and can attract all buyers at Nash equilibrium of the single-slot game and its profit under the Nash equilibrium of the single-slot game is larger than the profit under any other strategy. 
This seller will always play the Nash equilibrium of the single-slot game and has no incentive to play any other strategy. 
Hence, the repetition of the Nash equilibrium of the single-slot game is the unique subgame perfect Nash equilibrium of the repeated competition game.

\section{Proof of Theorem \ref{theo:RepEvoMono}}\label{app:theo_RepEvoMono}

We denote the reputation difference between two time slots $t+1$ and $t$ as $\Delta r_s[t]$, i.e., 
\begin{equation*}
\Delta r_s[t] \eq r_s[t+1]-r_s[t]=\frac{\omega_s \left(X_s[t]+q_{s}^\ast[t]\right)}{ \sum_{i\in\S} \left(X_i[t]+q_{i}^\ast[t]\right)}-\frac{\omega_s X_s[t]}{ \sum_{i\in\S} \omega_i X_i[t]}.
\end{equation*}
To prove the convergence, we need to prove $\lim_{t\rightarrow \infty} \Delta r_s[t] = 0$, i.e., 
$$\frac{\omega_s \left(X_s[t]+q_{s}^\ast[t]\right)}{ \sum_{i\in\S} \left(X_i[t]+q_{i}^\ast[t]\right)}=\frac{\omega_s X_s[t]}{ \sum_{i\in\S} \omega_i X_i[t]}, \mbox { for }t \to \infty ,$$
which is equivalent to
\begin{equation*}
\frac{\omega_s q_{s}^\ast[t]}{ \sum_{i\in\S} q_{i}^\ast[t]}=\frac{\omega_s X_s[t]}{ \sum_{i\in\S} \omega_i X_i[t]}=r_s[t], \mbox { for }t \to \infty.
\end{equation*}

In a monopoly market, if seller $s$ is not the highest reputation seller, i.e., $r_s[t] \neq r_{\max}[t]$, then $q_s^\ast[t]=0$ according to \eqref{eq:qsNES} and \eqref{eq:pNES}. 
So the reputation of seller $s$ monotonically decreases, and $r_s[\infty] \to 0$. 
Hence, 
$$\frac{\omega_s q_{s}^\ast[\infty]}{ \sum_{i\in\S} q_{i}^\ast[\infty]}=r_s[\infty]=0.$$

If seller $s$ is the highest reputation seller, i.e., $r_s[t] = r_{\max}[t]$, then seller $s$ is the monopoly in the market and
$$\frac{\omega_s q_{s}^\ast[t]}{ \sum_{i\in\S} q_{i}^\ast[t]}=1.$$
The number of completed transactions and the reputation of seller $s$ monotonically increase, and 
$r_s[\infty] = \frac{\omega_s X_s[\infty]}{ \sum_{i\in\S} \omega_i X_i[\infty]} \to 1$. 
Hence, 
$$\frac{\omega_s q_{s}^\ast[\infty]}{ \sum_{i\in\S} q_{i}^\ast[\infty]}=r_s[\infty]=1.$$

In summary, the reputation evolution process of the monopoly market converges.

\section{Proof of Lemma \ref{lemma:coop}}\label{app:lemma_coop}


Under the multi-seller strategy, i.e., sellers have the same ratio $\varphi=\frac{r_s}{p_s}$ and buyers choose each seller with the same probability, seller $s$'s profit from a buyer is:
\[ \pi_s(\varphi_s)= (\frac{1}{\varphi_s} - \frac{c}{r_s})(\rho \varphi_s-1).\]
For seller $s$, the ratio $\varphi_s$ that maximizes his profit $\pi_s$ is
\[\varphi_s^{\ast}=\sqrt{\frac{r_s}{c \rho }}.\]
If $\varphi_s < \varphi_s^{\ast} $, seller $s$'s profit increases with $\varphi_s$, and if $\varphi_s > \varphi_s^{\ast} $, seller $s$'s profit decreases with $\varphi_s$. 

Define the largest reputation by $r_{\rm{max}}=\max_{s\in\S}r_s$.

We claim that the ratio
\[\varphi^{\rm{MS-U}}=\sqrt{\frac{r_{\rm{max}}}{c \rho}}\]
is the unique one under which sellers are willing to play the multi-seller strategy and sellers' joint profits are maximized. 

\textbf{Case I:} If $\varphi > \varphi^{\rm{MS-U}}$, sellers' profits will be smaller under $\varphi$ than the profit under $\varphi^{\rm{MS-U}}$ due to the structure of the profit function.

\textbf{Case II:} If $\varphi < \varphi^{\rm{MS-U}}$, the seller $s^{\rm max}$ with the highest reputation $r_{\rm{max}}$ has incentive to increase his ratio to be $\varphi_{s^{\rm max}} = \varphi^{\rm{MS-U}}$ such that his profit obtained from a single buyer is increased and he would be attractive to all buyers since he has the highest ratio $\varphi_{s^{\rm max}}$ among all sellers. 

In summary, $\varphi^{\rm{MS-U}}$ is the unique ratio under which sellers are willing to play the multi-seller strategy and sellers' joint profits are maximized. 

Note that if for seller $s$, $\frac{r_s}{\varphi^{\rm{MS-U}}}<c$, then seller $s$ is not able to earn positive profit under $\varphi^{\rm{MS-U}}$. 
Hence, he will simply set his price to equal the marginal cost $c$.

Hence, the price strategy $p_s^{\rm{MS-U}}=\max \left \{ \frac{r_s}{\varphi^{\rm{MS-U}}}, c \right \}, \forall s \in \S $, is the unique price strategy under which sellers are willing to play the multi-seller strategy and sellers' joint profits are maximized. 


\section{Proof of Corollary \ref{coro:rth}}\label{app:coro_rth}

Seller $s$ is able to achieve strictly positive profit under the multi-seller strategy if and only if the reputation-price ratio is smaller than the largest ratio that seller $s$ can set, i.e.,
\[ \varphi^{\rm{MS-U}}=\sqrt{\frac{r_{\rm{max}}}{c \rho}} < \frac{r_s}{c} ,\]
which leads to
\[r_s > r_{\rm{th}} \eq \sqrt{\frac{cr_{\rm{max}}}{\rho}}.\]

\section{Proof of Theorem \ref{theo:NEcoop}}\label{app:theo_NEcoop}

If $\frac{\pi_{s^{\max}[t]}^{\rm{MS-U}}[t]}{S_C[t]} \geq \frac{\pi_{s^{\max}[t]}^{\rm NE-U}[t]}{S_L[t]}$, sellers have incentive to play the multi-seller strategy to achieve a higher profit than the profit under the Nash equilibrium of the single-slot game. 
Since sellers are myopic and play the dynamic game by assuming that all sellers' future reputations remain the same as the current reputations, for any $t' \geq t$, we have $\pi_s^{\rm{MS-U}}[t']=\pi_s^{\rm{MS-U}}[t]=\pi_s^{\rm{MS-U}}, S_C[t']=S_C[t]=S_C, \pi_s^{\rm{NE-U}}[t']=\pi_s^{\rm{NE-U}}[t]=\pi_s^{\rm{NE-U}}, S_L[t']=S_L[t]=S_L$.
By the one-step deviation principle, the multi-seller strategy is a subgame-perfect Nash equilibrium if: 
\[ \sum_{t=0}^{\infty} \delta^t \frac{N}{S_C} \pi_s^{\rm{MS-U}} \geq N \pi_s^{\rm{MS-U}} + \sum_{t=1}^{\infty} \delta^t \frac{N}{S_L} \pi_s^{\rm{NE-U}} .\]
The left-hand side of the above equation is the profit of seller $s$ if he plays according to the multi-seller strategy. 
The right-hand side of the above equation is the profit of seller $s$ if he deviates from the multi-seller strategy. 
By solving the above equation, we can arrive at the following condition on the discount factor:
\[\delta_s > \frac{S_C-1}{S_C} \cdot \frac{\pi_s^{\rm{MS-U}}}{\pi_s^{\rm{MS-U}} - {\pi_s^{\rm{NE-U}}}/{S_L}}, \forall s \in \S. \]
Hence, for each time slot $t$, we have the corresponding condition:
$$\delta_s > \frac{{S}_C[t]-1}{{S}_C[t]} \cdot \frac{{\pi}_s^{\rm{MS-U}}[t]}{{\pi}_s^{\rm{MS-U}}[t] - {\pi}_s^{\rm NE-U}[t]/{S}_L[t]}, \forall s \in S.$$

\section{Proof of Lemma \ref{lemma:3seller}}\label{app:lemma_3seller}

We first prove Case I where $$\pi_2^{\rm mon} \geq \pi_2^{\rm dev}, \pi_3^{\rm mon} \geq \pi_3^{\rm dev}.$$
Toward that end, we first prove that the price profile $\boldsymbol{p}^{\ast \rm NE-LO}$ which is 
$$p_s^{\ast \rm NE-LO}=\sqrt{c \rho r_s}, \forall s=1,2,3,$$ 
is a Nash equilibrium. 
Under the price profile $\boldsymbol{p}^{\ast \rm NE-LO}$, the reputation-price ratios satisfy $\frac{r_1}{p_1^{\ast \rm NE-LO}} > \frac{r_2}{p_2^{\ast \rm NE-LO}}> \frac{r_3}{p_3^{\ast \rm NE-LO}}$ and sellers' profits are
\begin{equation}\label{eq:piNEapp}
\begin{aligned}
& \pi_1(p_1^{\ast \rm NE-LO},p_{-1}^{\ast \rm NE-LO})= (p_1^{\ast \rm NE-LO}-c)\left(\frac{\rho}{p_1^{\ast \rm NE-LO}}-\frac{1}{r_1}\right) \cdot  \\
&~~~~~~~~~~~~~~~~~~~~~~~~~~~~~~ \left(1-P(k=0)\right) ,   \\
& \pi_2(p_2^{\ast \rm NE-LO},p_{-2}^{\ast \rm NE-LO})= (p_2^{\ast \rm NE-LO}-c)\left(\frac{\rho}{p_2^{\ast \rm NE-LO}}-\frac{1}{r_2}\right)  \cdot  \\
&~~~~~~~~~~~~~~~~~~~~~~~~~~~~~~  \left(1-P(k=0)-P(k=1)\right) ,  \\
& \pi_3(p_3^{\ast \rm NE-LO},p_{-3}^{\ast \rm NE-LO})= (p_3^{\ast \rm NE-LO}-c)\left(\frac{\rho}{p_3^{\ast \rm NE-LO}}-\frac{1}{r_3}\right) \cdot   \\
&~~~~~~~~~~~~~~~~~~~~~~~~~~~~~~  (1-P(k=0)-P(k=1)-P(k=2) ) . 
\end{aligned}
\end{equation}
Since $1-\sum_{k=0}^i P(X=k)$ is independent of prices, the monopoly prices $$p_s^{\ast \rm NE-LO}=\sqrt{c \rho r_s}, \forall s=1,2,3,$$  can maximize sellers' profits. 
Since none of the three sellers can achieve a higher profit by deviating from their monopoly prices, the price profile $\boldsymbol{p}^{\ast \rm NE-LO}$ is a Nash equilibrium. 
Then we prove that no other price profile is an equilibrium. 
If sellers set their prices such that their reputation-price ratios satisfy $\frac{r_1}{p_1} > \frac{r_2}{p_2} > \frac{r_3}{p_3}$ but at least one seller does not set his monopoly price, then this seller will change his price to his monopoly price so that he can achieve a higher profit. 
If sellers set their prices such that their reputation-price ratios do not satisfy $\frac{r_1}{p_1} > \frac{r_2}{p_2} > \frac{r_3}{p_3}$, then seller $1$ will decrease his price such that his reputation-price ratio is the highest to achieve a higher profit, and seller $2$ will decrease his price such that his reputation-price ratio is higher than the reputation-price ratio of seller $3$ to achieve a higher profit. 
In summary, the price profile $\boldsymbol{p}^{\ast \rm NE-LO}$ is the unique Nash equilibrium.

Now we prove Case II where $$\pi_2^{\rm mon} < \pi_2^{\rm dev} \pi_3^{\rm mon} \geq \pi_3^{\rm dev}.$$ 
Toward that end, we first prove that the price profile $\boldsymbol{p}^{\ast \rm NE-LO}$ which is 
\begin{equation}\label{eq:3sellerNE2app}
\begin{aligned}
&p_1^{\ast \rm NE-LO} = \max \left \{  \min \left\{ \sqrt{c \rho r_1}, \frac{r_1}{r_2}p_2^{\ast \rm NE-LO}-\varepsilon \right\}, p_1^{\min}  \right \},\\
&  p_2^{\ast \rm NE-LO} = \max \left\{ \min \left\{ \sqrt{c \rho r_2}, \frac{r_2}{r_3}p_3^{\ast \rm NE-LO}-\varepsilon \right\}, p_2^{\min} \right\}, \\
&  p_3^{\ast \rm NE-LO} = \sqrt{c \rho r_3},
\end{aligned}
\end{equation} 
is a Nash equilibrium. 
Under the price profile $\boldsymbol{p}^{\ast \rm NE-LO}$, the reputation-price ratios satisfy $\frac{r_1}{p_1^{\ast \rm NE-LO}} > \frac{r_2}{p_2^{\ast \rm NE-LO}}> \frac{r_3}{p_3^{\ast \rm NE-LO}}$ and sellers' profits can be calculated as in \eqref{eq:piNEapp}. 
Since none of the three sellers can achieve a higher profit by deviating from his current price, the price profile $\boldsymbol{p}^{\ast \rm NE-LO}$ is a Nash equilibrium. 
Then we prove that no other price profile is an equilibrium. 
If sellers set their prices such that their reputation-price ratios satisfy $\frac{r_1}{p_1} > \frac{r_2}{p_2} > \frac{r_3}{p_3}$ but at least one seller $s$ sets $p_s \neq p_s^{\ast \rm NE-LO}$, then seller $3$ will change his price to his monopoly price to achieve the highest profit under $\frac{r_1}{p_1} > \frac{r_2}{p_2} > \frac{r_3}{p_3}$, while seller $1$ and seller $2$ will compete and change their prices to the competitive price as in \eqref{eq:3sellerNE2app}. 
If sellers set their prices such that their reputation-price ratios do not satisfy $\frac{r_1}{p_1} > \frac{r_2}{p_2} > \frac{r_3}{p_3}$, then seller $1$ will decrease his price such that his reputation-price ratio is the highest to achieve a higher profit, and seller $2$ will decrease his price such that his reputation-price ratio is the higher than the reputation-price ratio of seller $3$ to achieve a higher profit. 
In summary, the price profile $\boldsymbol{p}^{\ast \rm NE-LO}$ is the unique Nash equilibrium.

Finally we prove Case III where $\pi_3^{\rm mon} < \pi_3^{\rm dev}.$ 
Toward that end, we first prove that the price profile $\boldsymbol{p}^{\ast \rm NE-LO}$ which is 
\begin{equation}\label{eq:3sellerNE3app}
\begin{aligned}
& p_1^{\ast \rm NE-LO} = \max \left\{  \min \left\{ \sqrt{c \rho r_1}, \frac{r_1}{r_2}p_2^{\ast \rm NE-LO}-\varepsilon \right\}, p_1^{\min}  \right\},  \\
&p_2^{\ast \rm NE-LO} = \max \left \{  \min \left\{ \sqrt{c \rho r_2}, \frac{r_2}{r_3}p_3^{\ast \rm NE-LO}-\varepsilon \right\},  p_2^{\min}  \right\}, \\
& p_3^{\ast \rm NE-LO} = p_3^{\min}.
\end{aligned}
\end{equation}
is a Nash equilibrium. 
Under the price profile $\boldsymbol{p}^{\ast \rm NE-LO}$, the reputation-price ratios satisfy $\frac{r_1}{p_1^{\ast \rm NE-LO}} > \frac{r_2}{p_2^{\ast \rm NE-LO}}> \frac{r_3}{p_3^{\ast \rm NE-LO}}$ and sellers' profits can be calculated as in \eqref{eq:piNEapp}. 
Since none of the three sellers can achieve a higher profit by deviating from his current price, the price profile $\boldsymbol{p}^{\ast \rm NE-LO}$ is a Nash equilibrium. 
Then we prove that no other price profile is an equilibrium. 
If sellers set their prices such that their reputation-price ratios satisfy $\frac{r_1}{p_1} > \frac{r_2}{p_2} > \frac{r_3}{p_3}$ but at least one seller $s$ sets $p_s \neq p_s^{\ast \rm NE-LO}$, then seller $1$, seller $2$, and seller $3$ will compete and change their prices to the competitive price as in \eqref{eq:3sellerNE3app}. 
If sellers set their prices such that their reputation-price ratios do not satisfy $\frac{r_1}{p_1} > \frac{r_2}{p_2} > \frac{r_3}{p_3}$, then seller $1$ will decrease his price such that his reputation-price ratio is the highest to achieve a higher profit, and seller $2$ will decrease his price such that his reputation-price ratio is the higher than the reputation-price ratio of seller $3$ to achieve a higher profit. 
In summary, the price profile $\boldsymbol{p}^{\ast \rm NE-LO}$ is the unique Nash equilibrium.

\section{Proof of Theorem \ref{theo:RandCLONE}}\label{app:theo_RandCLONE}

If the price profile derived by Algorithm \ref{algo:SAA} is $p_s^\ast=\sqrt{c \rho r_s}, ~\forall s \in \S$, we prove that it is the unique Nash equilibrium. 
According to Algorithm \ref{algo:SAA}, no seller has any incentive to lower his price. 
And monopoly price is the price that can maximize each seller's profit. 
Hence, no seller has any incentive to change his price, and the price profile is a Nash equilibrium. 
Next we prove that it is the unique Nash equilibrium. 
If a seller's price is larger than his monopoly price, the seller has an incentive to decrease his price to his monopoly price so that he can be more attractive to buyers due to a larger reputation-price ratio and he can obtain more profit from a single buyer by setting the monopoly price. 
If a seller's price is smaller than his monopoly price, the seller has an incentive to increase his price to his monopoly price due to the fact that the seller has no incentive to lower his price from the monopoly price.

For the case when the price profile derived by Algorithm \ref{algo:SAA} is not the monopoly price profile, the proof follows the lines of the proof of Theorem \ref{theo:NES}.

\section{Proof of Lemma \ref{lemma:cooplimitedone}}\label{app:lemma_cooplimitedone}

Under the price strategy $p_s^{\rm MS-LO}=\sqrt{c \rho r_s}, \forall s \in \S$, sellers' reputation-price ratio satisfies
\[ \varphi_i^{\rm MS-LO} > \varphi_j^{\rm MS-LO}, \mbox{ if } r_i>r_j ,\]
which is consistent with the result at Nash equilibrium of the single-slot game:
\[ \varphi_i^{\ast \rm{NE-LO}} > \varphi_j^{\ast \rm{NE-LO}}, \mbox{ if } r_i>r_j .\]
Also, under the price strategy $p_s^{\rm MS-LO}=\sqrt{c \rho r_s}, \forall s \in \S$, each seller $s$ sets his monopoly price and can achieve the maximum profit from each buyer as the profit in a monopoly market.

\section{Proof of Theorem \ref{theo:RandCLOSPNE}}\label{app:theo_RandCLOSPNE}

Since sellers are myopic and play the dynamic game by assuming that all sellers' future reputations remain the same as the current reputations, for any $t' \geq t$, we have $\pi_s^{\rm{MS-LO}}[t']=\pi_s^{\rm{MS-LO}}[t]=\pi_s^{\rm{MS-LO}}, \pi_s^{\rm{NE-LO}}[t']=\pi_s^{\rm{NE-LO}}[t]=\pi_s^{\rm{NE-LO}}$.
By the one-step deviation principle, the multi-seller strategy is a subgame-perfect Nash equilibrium if: 
\[ \sum_{t=0}^{\infty} \delta^t \pi_s^{\rm{MS-LO}} \geq \pi_s^{\rm{dev}} + \sum_{t=1}^{\infty} \delta^t  \pi_s^{\rm{NE-LO}} .\]
The left-hand side of the above equation is the profit of seller $s$ if he plays according to the multi-seller strategy. 
The right-hand side of the above equation is the profit of seller $s$ if he deviates from the multi-seller strategy. 
By solving the above inequality, we arrive at the following condition on the discount factor:
\[\delta_s > \frac{ \pi_s^{\rm{dev}}-\pi_s^{\rm{MS-LO}}}{ \pi_s^{\rm{dev}}-\pi_s^{\rm{NE-LO}}} , \forall s \in \S.
 \]
Hence, for each time slot $t$, we have the corresponding condition:
\[\delta_s > \frac{ \pi_s^{\rm{dev}}[t]-\pi_s^{\rm{MS-LO}}[t]}{ \pi_s^{\rm{dev}}[t]-\pi_s^{\rm{NE-LO}}[t]} , \forall s \in \S.
 \]

\section{Proof of Theorem \ref{theo:RandCLM}}\label{app:theo_RandCLM}

The proof follows the lines of the proof of Theorem \ref{theo:RandCLONE}.

\section{Proof of Lemma \ref{lemma:coopLM}}\label{app:lemma_coopLM}

The proof follows the lines of the proof of Lemma \ref{lemma:cooplimitedone}.

\section{Subgame Perfect Nash equilibrium of Dynamic Seller Competition Game with Limited Capacity and Multiple Buyers Per Seller}

\begin{theorem}\label{theo:RandCLMSPNE}
The following holds. 
\begin{itemize}
\item If at the Nash equilibrium of the single-slot game, all sellers set the same reputation-price ratio, then the unique SPNE involves choosing the price profile according to the Nash equilibrium of the single-slot game in each time slot.

\item Otherwise, there exists a unique multi-seller price vector $\boldsymbol{p}^{\rm{MS-LM}}$ that maximizes the sellers' joint profits: 
\begin{equation}
p_s^{\rm{MS-LM}}=\max \left\{ \frac{r_s}{\varphi^{\rm{MS-LM}}} ,p_s^{\min} \right \}, \forall s \in \S 
\end{equation}
where $\varphi^{\rm{MS-LM}}\eq \varphi_{th}^{\max}+ \varepsilon$ and $\varphi_{th}^{\max}$ is a chosen seller's largest reputation-price ratio that satisfies
\begin{align*}
&  \varphi_{th}^{\max}\geq \sqrt{\frac{r_{\max}}{c \rho}}, \\
& \frac{r_s}{p_s^{\rm{MS-LM}}}\leq \varphi_{th}^{\max}, \mbox{ if } \varphi_s^{\max} \leq \varphi_{th}^{\max},  \\
& \frac{r_s}{p_s^{\rm{MS-LM}}}=\varphi_{th}^{\max}+ \varepsilon, \mbox{ if } \varphi_s^{\max}>\varphi_{th}^{\max},  \\
& N^{\rm MS-LM} = \sum_{\{s:\frac{r_s}{p_s^{\rm{MS-LM}}}=\varphi_{th}^{\max}+ \varepsilon\}}n_s(p_s^{\rm MS-LM}), \\
& \frac{\sum_{k=0}^\infty kP(X=k)}{ N^{\rm MS-LM} }(p_s^{\rm{MS-LM}}-c) \left(  \frac{\rho}{p_s^{\rm{MS-LM}}}-\frac{1}{r_{s}} \right) \geq \\
&~~(p_s^\ast-c) \left(  \frac{\rho}{p_s^\ast}-\frac{1}{r_{s}} \right),\forall s \in \S.
\end{align*}
We let $\pi_s^{NE}[t]$ denote the profit of seller $s$, achieved at the NE of the single-slot game in time slot $t$ derived by Algorithm \ref{algo:RSAA}. 
We let $\pi_s^{\rm{MS-LM}}[t]$ denote owner $s$'s expected profit achieved under the multi-seller strategy in time slot $t$, and $\pi_s^{\rm{dev}}[t]$ denote the maximum expected profit that owner $s$ can achieve by unilaterally deviating from the multi-seller strategy. 
Consider the following strategy profile: all sellers set the price profile $\boldsymbol{p}[t]=\boldsymbol{p}^{\rm{MS-LM}}[t]$ in each time slot $t$ until a seller deviates, in which case all sellers choose the price profile according to the Nash equilibrium of the single-slot game for ever. 
Such a strategy profile is an SPNE if 
\begin{equation}\label{con:deltaCLMRan}
\delta_s > \frac{ \pi_s^{\rm{dev}}[t]-\pi_s^{\rm{MS-LM}}[t] }{ \pi_s^{\rm{dev}}[t]-\pi_s^{NE}[t]} , \forall s \in \S.
\end{equation}
\end{itemize}
\end{theorem}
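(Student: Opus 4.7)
My plan is to decompose the proof into three parts matching the structure of the statement, and to borrow heavily from the proof templates already established in Theorems \ref{theo:NEcoop} and \ref{theo:RandCLOSPNE}, since the argument follows the same Folk-Theorem style based on the one-step deviation principle together with a Friedman trigger-punishment strategy.

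First I would dispose of the degenerate case. If the single-slot NE of Game \ref{SCgameMB2} already forces every seller to the common reputation-price ratio, then inspection of Algorithm \ref{algo:RSAA} (the price-lowering loop must have run for every $i$) shows that this ratio is itself the largest ratio attainable by the highest-reputation seller without losing his rank — i.e., it coincides with what any multi-seller threshold $\varphi_{th}^{\max}$ would be constrained to produce. Hence no joint-profit-maximizing multi-seller price vector can strictly improve over the single-slot NE, and by the standard argument that any SPNE of a dynamic game must, in every subgame, yield at least the stage-game min-max profit, repetition of the single-slot NE is the unique SPNE.

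For the main (second) part, I would first establish existence and uniqueness of $\boldsymbol{p}^{\rm MS\text{-}LM}$ as the joint-profit maximizer, along the lines of Lemma \ref{lemma:coopLM} but adapted to the multi-buyer setting. The key observation is that each seller's per-buyer profit $(p_s-c)(\rho/p_s-1/r_s)$ is strictly concave and peaked at the monopoly price $\sqrt{c\rho r_s}$, while the expected number of buyers served, $\mathbb{N}_s(p_s,p_{-s})$, only depends on the sellers' ranking in reputation-price ratio and on the cumulative served capacity ahead of $s$. Therefore maximizing the joint profit amounts to choosing a common threshold ratio $\varphi_{th}^{\max}$ so that (i) every seller capable of doing so sits exactly at (or just above) this threshold and thereby shares the residual demand, and (ii) the threshold is no smaller than $\sqrt{r_{\max}/(c\rho)}$, since otherwise the top seller could unilaterally raise his ratio and strictly increase his per-buyer profit while still attracting buyers. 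The constraints listed in the theorem statement are precisely these two requirements together with the individual rationality condition that each surviving seller's share of the multi-seller pie weakly dominates his single-slot NE profit.

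Finally, I would verify the trigger-strategy profile by the one-step deviation principle, exactly as in the proofs of Theorems \ref{theo:NEcoop} and \ref{theo:RandCLOSPNE}. Under myopic reputation conjectures, all payoffs are time-invariant within the conjectured future, so the incentive constraint reduces to
\begin{equation*}
\sum_{t=0}^{\infty}\delta_s^{t}\pi_s^{\rm MS\text{-}LM}\;\geq\;\pi_s^{\rm dev}+\sum_{t=1}^{\infty}\delta_s^{t}\pi_s^{NE},
\end{equation*}
which, after solving for $\delta_s$, yields precisely condition \eqref{con:deltaCLMRan}. The deviation payoff $\pi_s^{\rm dev}$ can be computed explicitly as the best response in the single-slot game against $\boldsymbol{p}_{-s}^{\rm MS\text{-}LM}$, namely the price that places seller $s$'s reputation-price ratio a single increment $\varepsilon$ above $\varphi_{th}^{\max}$ so as to capture as many buyers as his capacity allows.

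The main obstacle I anticipate is Step 2: carefully checking uniqueness of $\boldsymbol{p}^{\rm MS\text{-}LM}$ among all joint-profit maximizers. Unlike the unlimited-capacity case in Lemma \ref{lemma:coop} or the one-buyer case in Lemma \ref{lemma:cooplimitedone}, here the floor function in $n_s(p_s)$ and the piecewise form of $\mathbb{N}_s$ make the joint objective only piecewise smooth in prices; one must argue that any alternative maximizer either (a) assigns the same reputation-price ratios and hence differs only on measure-zero ties, or (b) strictly reduces either the ranking-based service probability of some surviving seller or the per-buyer margin of the top seller. Handling ties and boundary effects from $p_s^{\min}$ and $p_s^{\max}$ cleanly is where the bookkeeping becomes delicate; the remaining SPNE verification via one-step deviation is then essentially mechanical.
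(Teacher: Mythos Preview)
Your proposal is correct and considerably more thorough than the paper's own proof. The paper's proof consists of a single sentence: it simply says that the argument ``follows the lines of the proof of Theorem \ref{theo:RandCLOSPNE}'', i.e., the one-step deviation calculation yielding \eqref{con:deltaCLMRan}. This corresponds exactly to your Step 3, and your derivation of the discount-factor condition is identical to what the paper does there (via the reference). Your Steps 1 and 2 --- disposing of the degenerate case where the single-slot NE already puts all sellers at a common ratio, and establishing existence and uniqueness of $\boldsymbol{p}^{\rm MS\text{-}LM}$ as the joint-profit maximizer --- go well beyond what the paper actually proves; the paper treats these as structural assertions stated in the theorem rather than separately argued, relying on the parallel with Lemma \ref{lemma:coopLM} (whose own proof is itself only a pointer to Lemma \ref{lemma:cooplimitedone}). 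So your anticipated obstacle --- the delicate bookkeeping for uniqueness in the presence of the floor function in $n_s(p_s)$ and the piecewise form of $\mathbb{N}_s$ --- is legitimate, but it is not an issue the paper confronts or resolves; you would be supplying detail the paper omits.
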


The proof of Theorem \ref{theo:RandCLMSPNE} follows the lines of the proof of Theorem \ref{theo:RandCLOSPNE}.

\end{document}